\title{Efficient Caching with Reserves via Marking\footnote{An extended abstract is to appear in the Proceedings of the 50th ICALP, 2023.}}
\author{Sharat Ibrahimpur}{Department of Mathematics, London School of Economics and Political Science, UK}{s.ibrahimpur@lse.ac.uk}{}{Received funding from the following sources: NSERC grant 327620-09 and an NSERC DAS Award, European Research Council (ERC) under the European Union’s Horizon 2020 research and innovation programme (grant agreement no. ScaleOpt–757481), and Dutch Research Council NWO Vidi Grant 016.Vidi.189.087.}
\author{Manish Purohit}{Google Research, USA}{mpurohit@google.com}{}{}
\author{Zoya Svitkina}{Google Research, USA}{}{}{}
\author{Erik Vee}{Google Research, USA}{erikvee@google.com}{}{}
\author{Joshua R. Wang}{Google Research, USA}{joshuawang@google.com}{}{}
\authorrunning{S. Ibrahimpur, M. Purohit, Z. Svitkina, E. Vee, and J.\,R. Wang}
\keywords{Approximation Algorithms, Online Algorithms, Caching}
\newcommand{\R}{\mathbb{R}}
\newcommand{\Rp}{\R_{\geq 0}}
\newcommand{\mc}{\mathcal}
\newcommand{\A}{\mc A}
\newcommand{\I}{\mc I}
\newcommand{\T}{\mc T}
\newcommand{\mcP}{\mathcal{P}}
\newcommand{\U}{\mcP}
\newcommand{\X}{\mc X}
\newcommand{\cost}{\ensuremath{\mathrm{cost}}}
\newcommand{\OPT}{\ensuremath{\mathrm{OPT}}}
\newcommand{\full}{non-isolated}
\newcommand{\E}{\ensuremath{\mathbb{E}}}
\newcommand{\unmarked}{\ensuremath{\mathit{unmarked}}}
\newcommand{\marked}{\ensuremath{\mathit{marked}}}
\newcommand{\fettered}{isolated}
\newcommand{\tight}{tight}
\newcommand{\loose}{non-tight}
\newcommand{\ignore}[1]{}
\newcommand{\CachingWithReserves}[0]{\emph{Caching with Reserves}}
\newcommand{\Instance}[0]{\sigma}
\newcommand{\dualviolation}{5}
\newcommand{\dualcost}{\mathrm{dual}}
\newcommand{\RM}{\textrm{RM}}
\newcommand{\FM}{\textrm{FM}}
\newcommand{\bone}{\mathbb{1}}
\newcommand{\update}{\mathsf{update}}
\begin{document}

\maketitle

\begin{abstract}
Online caching is among the most fundamental and well-studied problems in the area of online algorithms. Innovative algorithmic ideas and analysis --- including potential functions and primal-dual techniques --- give insight into this still-growing area.
Here, we introduce a new analysis technique that first uses a potential function to upper bound the cost of an online algorithm and then pairs that with a new dual-fitting strategy to lower bound the cost of an offline optimal algorithm. We apply these techniques to the Caching with Reserves problem recently introduced by Ibrahimpur et al.~\cite{ibrahimpur2022caching} and give an $O(\log k)$-competitive fractional online algorithm via a marking strategy, where $k$ denotes the size of the cache. We also design a new online rounding algorithm that runs in polynomial time to obtain an $O(\log k)$-competitive randomized integral algorithm. Additionally, we provide a new, simple proof for randomized marking for the classical unweighted paging problem.
\end{abstract}

\section{Introduction} \label{sec:intro}

Caching is a critical component in many computer systems, including computer networks, distributed systems, and web applications. The idea behind caching is simple: store frequently used data items in a cache so that subsequent requests can be served directly from the cache to reduce the resources required for data retrieval. In the classical unweighted caching problem, a sequence of page requests arrives one-by-one and an algorithm is required to maintain a small set of pages to hold in the cache so that the number of requests not served from the cache is minimized.

Traditional caching algorithms, both in theory and practice, are designed to optimize the global efficiency of the system and aim to maximize the \emph{hit rate}, i.e., fraction of requests that are served from the cache. However, such a viewpoint is not particularly suitable for cache management in a multi-user or multi-processor environment. Many cloud computing services allow multiple users to share the same physical workstations and thereby share the caching system. In such multi-user environments, traditional caching policies can lead to undesirable outcomes  as some users may not be able to reap any benefits of the cache at all. Recently, Ibrahimpur et al.~\cite{ibrahimpur2022caching} introduced the \CachingWithReserves{} model that ensures certain user-level fairness guarantees while still attempting to maximize the global efficiency of the system. In this formulation, a cache of size $k$ is shared among $m$ agents and each agent $i$ is guaranteed a reserved cache size of $k_i$. An algorithm then attempts to minimize the total number of requests that are not served from the cache while guaranteeing that any time step, each agent $i$ holds at least $k_i$ pages in the cache. Unlike the classical paging problem, \CachingWithReserves{} is NP-complete even in the offline setting when the algorithm knows the entire page request sequence ahead of time and Ibrahimpur et al.~\cite{ibrahimpur2022caching} gave a 2-approximation algorithm. They also gave an $O(\log k)$-competitive online fractional algorithm for \CachingWithReserves{} via a primal-dual technique and then design a rounding scheme to obtain an $O(\log k)$-competitive online randomized algorithm. Unfortunately, the rounding scheme presented in \cite{ibrahimpur2022caching} does not run in polynomial time and the fractional primal-dual algorithm, while simple to state, also does not yield itself to easy implementation. 

Caching and its many variants have been among the most well-studied problems in theoretical computer science. It has long been a testbed for novel algorithmic and analysis techniques and it has been investigated via general techniques such as potential function analysis, primal-dual algorithms, and even learning-augmented algorithms. For the classical unweighted caching problem, a particularly simple algorithm,  \emph{randomized marking}~\cite{fiat1991competitive}, is known to yield the optimal competitive ratio (up to constant factors). At any point in time, the randomized marking algorithm partitions the set of pages in cache into \emph{marked} and \emph{unmarked} pages and upon a cache miss, it evicts an unmarked page chosen uniformly at random. Cache hits and pages brought into the cache are marked. When a cache miss occurs, but there are no more unmarked pages, a new \emph{phase} begins, and all pages in the cache become unmarked. In this paper, we build upon this algorithm, adapting it to caching with reserves.

\subsection*{Our Contributions}
\label{sec:contributions}

We study the \CachingWithReserves{} model of Ibrahimpur et al.\ \cite{ibrahimpur2022caching} in the online setting and improve upon those results. Our first main result is a simpler fractional algorithm that is a generalization of randomized marking for classical caching.

\begin{restatable}{theorem}{fractionalmainthm}
\label{thm:fraconline}
  There is an $O(\log k)$-competitive fractional \emph{marking} algorithm for online \CachingWithReserves{}. The competitive guarantee holds even when the optimal offline algorithm is allowed to hold fractional pages in the cache.
\end{restatable}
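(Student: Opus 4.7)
The plan is to lift the fractional (water-filling) form of randomized marking to the reserves setting, and then combine a potential-function upper bound on the algorithm's cost with a dual-fitting lower bound on the offline optimum. The algorithm maintains loads $x_p \in [0,1]$ with $\sum_p x_p \le k$ and, for every agent $j$, $\sum_{p \in A_j} x_p \ge k_j$. Pages are partitioned into \marked{} and \unmarked{} sets, and phases are defined as in classical marking: a new phase begins and all pages become \unmarked{} just before the first request that would force the marked mass to exceed $k$. On a request for page $p$ of agent $i$, raise $x_p$ to $1$, mark $p$, pay $1-x_p^{\mathrm{old}}$, and decrease an equal total mass from \unmarked{} pages by a continuous water-filling rule: call agent $j$ \tight{} if $\sum_{q \in A_j} x_q = k_j$, and \loose{} otherwise; decrement uniformly from \unmarked{} pages belonging to \loose{} agents only, re-evaluating which agents are \tight{} as mass shifts.

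For the upper bound, I would track a per-agent potential within each phase. For a \loose{} agent $j$, the mass held on its \unmarked{} pages decays according to a differential-equation-style recurrence whose denominator is the current number of its eligible \unmarked{} pages, and the standard harmonic estimate bounds the total decrement suffered by any single page over the phase by $O(\log k)$. Summing over agents and over first-time requests in the phase yields $\mathrm{ALG} = O(\log k)\cdot N_t$ in phase $t$, where $N_t$ is the number of pages requested in the phase that were not in the cache at its start. For the lower bound, I would formulate the natural LP relaxation of \CachingWithReserves{} (with a global cache constraint and $m$ reservation constraints) and, phase by phase, construct feasible dual variables of total value $\Omega(N_t)$. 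The dual charges an $\Omega(1)$ credit to each new-page request in the phase and zero to repeats; feasibility reduces to showing that, on any between-request interval for a fixed page, the cumulative dual on the LP constraints that mention it is at most $1$. A pigeonhole argument combined with the observation that within a phase the set of requested pages plus the reserved holdings overfills $k$ produces the $\Omega(N_t)$ bound.

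The main obstacle is the two-sided interaction between \tight{} and \loose{} agents. On the primal side, agents toggle between \tight{} and \loose{} throughout a phase, so the denominator in the harmonic sum changes with every toggle; the upper-bound calculation must integrate through these regime changes rather than invoke a clean per-page $H_k$ estimate. On the dual side, classical marking only uses the global cache constraint, but here every per-agent reservation is a separate LP constraint, and distributing dual mass among the $m+1$ constraints so that all remain feasible while the total still matches $\Omega(N_t)$ is the delicate step. I expect the argument to factor by agent: during a \loose{} sub-interval of agent $j$, a per-agent harmonic bound on the primal pairs with a dual charging scheme supported on $j$'s reservation constraint, while requests arriving while the requesting agent is \tight{} are charged against the global cache constraint, so that the combined dual remains feasible constraint by constraint.
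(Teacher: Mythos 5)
Your high-level architecture (fractional water-filling marking, a potential/harmonic upper bound on the algorithm, and an LP dual-fitting lower bound on the optimum) is the same as the paper's, but the plan has a concrete gap at exactly the point that makes caching with reserves different from classical paging: the phase structure. You define a phase to end ``just before the first request that would force the marked mass to exceed $k$,'' and at phase end you unmark everything. In the reserves setting this rule is not executable: the algorithm can be blocked long before the marked mass approaches $k$, namely when every remaining unmarked page belongs to a \tight{} agent, so there is no legal eviction and your water-filling step has no eligible page --- yet your phase-end trigger never fires. The paper's remedy is precisely the machinery you omit: a phase ends when no evictable unmarked page of a \loose{} agent remains, agents with fewer than $k_i$ marked pages are declared \fettered{} and keep their marks across global phase boundaries, and each such agent runs its own local phase inside its reserve. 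Without this, ``unmark everything'' erases the marks of blocked agents prematurely, and both halves of your analysis lose their footing.

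Concretely, two claimed bounds do not survive as stated. For the upper bound, charging $O(\log k)$ per clean page per phase misses requests to stale pages that have been fully evicted by the time they are requested (the paper's \emph{pseudo-clean} pages): these cost $1$ each, are not clean under your definition, and in the paper they must be paid for by a separate term ($|P(i,r_i-1)\cup P(i,r_i)|-k_i$, for agents leaving isolation) that has no counterpart in your accounting. For the lower bound, the statement $\OPT = \Omega(N_t)$ per phase via ``requested pages plus reserved holdings overfill $k$'' is exactly the delicate step you acknowledge but do not carry out; the classical two-phase pigeonhole fails here because a phase can end with far fewer than $k$ distinct requested pages (it ends when the \emph{evictable} mass is exhausted), and reserved holdings of blocked agents are held by $\OPT$ only as an aggregate mass, not as the same pages. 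The paper resolves this with a two-stage dual construction in which $\beta(t,i)$ mass raised for \fettered{} agents is later retracted and converted into $\gamma$ variables when the agent leaves isolation, yielding only \emph{approximately} feasible duals (violation bounded by a constant) whose objective matches the two terms in the algorithm's cost bound. Your ``factor by agent, charge tight-agent requests to the global constraint'' sketch points in a reasonable direction but does not address how dual mass placed on a reservation constraint is recovered when that agent's status changes, which is where the actual difficulty lies. As written, the proposal is a plausible program rather than a proof, and its phase definition would have to be replaced by something like the paper's isolated-agent/local-phase construction before either bound can be established.
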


We remark that our algorithm in Theorem~\ref{thm:fraconline} and its analysis are more involved than those of the classical randomized marking algorithm. 
One complication is that due to the reserve constraints, a marking-style algorithm for the caching with reserves setting cannot evict an arbitrary unmarked page.
Another key difficulty comes from the fact that even the notion of a \emph{phase} is non-trivial to define in our setting. 
In particular, unlike in classical caching, it can happen that the cache still contains unmarked pages, but none of them can be evicted to make space for a new page, because of the reserve constraints.
Thus, we need a rule to isolate agents whose reserve constraints prevent the algorithm from having a clean end of a phase, while also ensuring that the already marked pages of such isolated agents are not erased prematurely. To this end, we introduce the notion of \emph{global} and \emph{local} phases to effectively model the state of each agent. We elaborate on this in Section~\ref{subsec:waterfilling}. 

Our analysis of the fractional marking algorithm introduces two novel components that may be of independent interest. First, we upper-bound the total cost incurred by our fractional marking algorithm using a new potential function. This potential function, introduced in Section \ref{sec:potential-function}, depends only on the decisions of the algorithm and is independent of the optimal solution. 
To the best of our knowledge, all previous potential function based analyses of (variants of) caching~\cite{bansal2010simple,bansal2010towards,bansal2022learning,ibrahimpur2022caching} define a potential function that depends on the optimal solution.
Second, we introduce a new lower bound for the cost of the optimal solution via the dual-fitting method. 
Our techniques also yield a new simple proof that the classical \emph{randomized marking}~\cite{fiat1991competitive} for unweighted paging is $O(\log k)$-competitive (see Appendix \ref{app:regular-paging}). 

We also design a new online rounding algorithm that converts a deterministic, fractional algorithm into a randomized, integral algorithm while only incurring a constant factor loss in the competitive ratio. Via a careful discretization technique (inspired by Adamaszek et al.\ \cite{adamaszek2018log}), the new rounding algorithm runs in polynomial time and only uses limited randomization.
Our fractional marking algorithm (Algorithm \ref{alg:water-filling}) maintains that at any point in time, a particular page $p$ is either completely in the cache or at least $1/k$ fraction of the page has been evicted. We exploit this key property to show that the fractional solution at any time $t$ can be discretized so that the fraction of any page that is evicted is an integral multiple of $1/k^3$. This discretization allows us to maintain a distribution over feasible integer cache states with bounded support.

\begin{restatable}{theorem}{integralmainthm}
\label{thm:randomized}
There is a polynomial-time $O(\log k)$-competitive randomized integral algorithm for online caching with reserves.
\end{restatable}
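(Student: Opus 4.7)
The plan is to lift the fractional algorithm of Theorem~\ref{thm:fraconline} to a randomized integral one by maintaining, online, a distribution $\mathcal{D}_t$ over feasible integer cache configurations of size $k$ whose marginals agree with the fractional cache state $x_t$ produced by Algorithm~\ref{alg:water-filling}. Sampling from $\mathcal{D}_t$ yields the randomized algorithm; if consecutive distributions $\mathcal{D}_t, \mathcal{D}_{t+1}$ can be coupled so that the expected number of evictions between them is $O(\|x_t - x_{t+1}\|_1)$, the randomized algorithm inherits the $O(\log k)$ competitive ratio up to a constant factor. Polynomial running time requires, in addition, that $\mathcal{D}_t$ have polynomially bounded support.

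The first step is discretization. As the excerpt highlights, Algorithm~\ref{alg:water-filling} enforces the structural gap property that at every time each page is either fully in the cache ($x_p = 1$) or has been evicted by at least a $1/k$ fraction. Rounding every $x_p < 1$ down to the nearest multiple of $1/k^3$ therefore perturbs each fractional page by at most a $1/k^2$ fraction of its evicted mass, losing only a constant factor in fractional cost. I would then represent the discretized solution $\bar x_t$ as a uniform mixture over $k^3$ integer ``levels'' $C_{t,1},\ldots,C_{t,k^3}$, with $\Pr_\ell[p \in C_{t,\ell}] = \bar x_{t,p}$, in the style of Adamaszek et al.~\cite{adamaszek2018log}. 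To update this representation online, I would couple each elementary $1/k^3$ move of the fractional algorithm --- a unit of mass flowing into page $p$ and simultaneously out of some page $q$ --- to a local update in one uniformly chosen level, replacing $q$ by $p$ in that single level. This makes the per-step expected integer eviction cost exactly equal to the fractional change.

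The main obstacle, which distinguishes this setting from standard caching rounding, is enforcing the per-agent reserve constraints inside every level: each $C_{t,\ell}$ must contain at least $k_i$ pages of every agent $i$, whereas the marking-style fractional algorithm can shift mass across agents when isolating or de-isolating them. I expect to handle this by decomposing per agent and stitching across a shared level index. The fractional algorithm's treatment of the reserves guarantees that agent $i$'s total fractional mass is always at least $k_i$, so the restriction $\bar x_{t}|_{P_i}$ can be decomposed independently into $k^3$ integer subsets each of size at least $k_i$; taking the union across agents at each level $\ell$ then yields a joint cache $C_{t,\ell}$ of size exactly $k$ that respects every reserve. Once this coupling is in place, the polynomial-time implementation and the constant-factor bound on the rounding overhead follow by bookkeeping that exploits the $1/k^3$ granularity of the updates, completing the $O(\log k)$-competitive randomized integral algorithm.
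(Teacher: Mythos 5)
Your high-level plan coincides with the paper's: exploit the $1/k$ gap of Lemma~\ref{lem:ylarge} to discretize the fractional state to multiples of $1/k^3$, maintain $k^3$ feasible integer cache states with the right marginals, and charge the integer evictions to the fractional cost. However, two of your steps have genuine gaps. First, the discretization: rounding each $x_p<1$ down independently to a multiple of $1/k^3$ does not preserve the reserve constraints even approximately enough --- an agent with up to $k$ fractional pages can lose up to $k\cdot 1/k^3=1/k^2 > 1/k^3$ of its total mass, so the discretized agent total can fall strictly below $k_i$; then no collection of integer states with these marginals can have every state contain at least $k_i$ pages of agent $i$. The paper avoids this by a \emph{coordinated} prefix-rounding along an ordering in which each agent's pages are contiguous (Lemma~\ref{lem:discretization}), which guarantees each agent's total is distorted by less than $1/N$ and hence, by integrality of $k_i$, stays at or above $k_i$ (Corollary~\ref{cor:reserve}). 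Your per-agent decomposition ``each subset has size at least $k_i$'' silently assumes exactly this property, and the further claim that taking unions of independently constructed per-agent decompositions at a common level index yields a joint cache of size \emph{exactly} $k$ is unjustified: agents' discretized totals are generally non-integers, so which levels receive an agent's extra page must be correlated across agents for the level sizes to balance.

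Second, the incremental update ``replace $q$ by $p$ in one uniformly chosen level'' is not always available and is not reserve-safe. There may be no level containing $q$ but not $p$, in which case one must add $p$ to one state and remove $q$ from another and then rebalance sizes with a third page $r$; moreover, when $ag(p)\neq ag(q)$ (or when $r$ is moved), an individual state can drop below some agent's reserve even though the marginals are feasible. The paper repairs this with an explicit swap argument: since the reserve is satisfied on average over the states at every intermediate step (Lemma~\ref{lem:req}, which itself needs the matching between increased and decreased pages to pair same-agent pages first), there is always another state with a surplus page of the deficient agent, and a constant number of such repair moves per $1/N$-unit suffices, giving the factor-$6$ bound of Lemma~\ref{lem:cost-rand}. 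Without this repair mechanism (or some substitute), your coupling does not produce feasible integer states, so the expected-cost accounting does not yet go through.
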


\subsection*{Other Related Work}
\label{sec:related}

The unweighted caching (also known as paging) problem has been widely studied and its optimal competitive ratio is well-understood even up to constant factors. Tight algorithms \cite{achlioptas2000competitive,mcgeoch1991strongly} are known that yield a competitive ratio of exactly $H_k$, where $H_k$ is the $k$th harmonic number. Recently, Agrawal et al.~\cite{agrawal2021tight} consider the \emph{parallel paging} model where a common cache is shared among $p$ agents -- each agent is presented with a request sequence of pages and the algorithm must decide how to partition the cache among agents at any time. It allows the $p$ processors to make progress simultaneously, i.e., incur cache hits and misses concurrently. Multi-agent paging has also been extensively studied in the systems community~\cite{chang2007cooperative,stone1989optimal,suh2004dynamic} often in the context of caching in multi-core systems. Closely related to the \CachingWithReserves{} setting, motivated by fairness constraints in multi-agent settings, a number of recent systems ~\cite{wu2022NyxCache,kunjir2017robus,pu2016fairride,yu2019lacs} aim to provide \emph{isolation guarantees} to each user, i.e., guarantee that the cache hit rate for each user is at least as much as what it would be if each user is allocated its own isolated cache. Also motivated by fairness constraints, Chiplunkar et al.~\cite{chiplunkar2023online} consider the Min-Max paging problem where the goal is to minimize the maximum number of page faults incurred by any agent.

\section{Preliminaries and Notation}
\label{sec:prelims}

Formally, an instance of the \CachingWithReserves{} problem consists of the following. We are given a number of agents $m$ and a total (integer) cache capacity $k$. Let $[m]$ denote the set $\{1, \ldots, m\}$. Each agent $i \in [m]$ owns a set of pages $\U(i)$ (referred to as $i$-pages) and has a \emph{reserved cache size} $k_i \ge 0$. Pages have a unique owner, i.e. $\U(i) \cap \U(j) = \varnothing$ for all $i \ne j$, and we use $\U \triangleq \cup_{i \in [m]} \U(i)$ to refer to the universe of all pages. For any page $p \in \U$, let $ag(p)$ be the unique agent that owns $p$. We assume without loss of generality that at least one unit of cache is not reserved: $\sum_{i \in [m]} k_i < k$.\footnote{If all of cache is reserved, the problem decomposes over agents into the standard caching task.} At each timestep $t$, a page $p_t \in \U$ is requested. We can wrap all these into an instance tuple: $\Instance = \left( m, k, \{\U(i)\}, \{k_i\}, \{p_t\} \right)$.

An \emph{integral} algorithm for the \emph{Caching with Reserves} problem maintains a set of $k$ pages in the cache such that for each agent $i$, the cache always contains at least $k_i$ pages from $\U(i)$. At time $t$, the page request $p_t$ is revealed to the algorithm. If this page is not currently in the cache, then the algorithm is said to incur a \emph{cache miss} and it must fetch $p_t$ into the cache by possibly evicting another page $q_t$. For any (integral) algorithm $\A$ we write its total cache misses on instance $\Instance$ as $\cost_\A(\Instance)$.

A \emph{fractional} algorithm for the \emph{Caching with Reserves} problem maintains a fraction $x_p \in [0, 1]$ for how much each page $p \in \U$ is in the cache such that the total size of pages in the cache is at most $k$, i.e, $\sum_{p \in \U} x_p \le k$, and the total size of $i$-pages is at least $k_i$, i.e., $\sum_{p \in \U(i)} x_p \ge k_i$. At time $t$, the page request $p_t$ is revealed to the algorithm, which incurs a fractional cache miss of size $1 - x_{p_t}$. The algorithm must then  fully fetch $p_t$ into cache ($x_{p_t} \leftarrow 1$) by possibly evicting other pages. For any (fractional) algorithm $\A$ we again write its total size of cache misses on instance $\Instance$ as $\cost_\A(\Instance)$.

Let $\cost_{OPT}(\sigma)$ be the cost of the optimal offline algorithm on instance $\sigma$.

\begin{definition}[Competitive Ratio]
An online algorithm $\A$ for \CachingWithReserves{} is said to be $c$-competitive, if for any instance $\sigma$, $\E[\cost_{\A}(\sigma)] \leq c \cdot \cost_{OPT}(\sigma) + b$, where $b$ is a constant independent of the number of page requests in $\sigma$. The expectation is taken over all the random choices made by the algorithm (if any).
\end{definition}
\section{Fractional \boldmath \texorpdfstring{$O(\log k)$}{O(log k)}-Competitive Algorithm for Caching with Reserves}

For any time $t$ and page $p \in \U$, the algorithm maintains a variable $y_p^t \in [0,1]$ representing the portion of page $p$ that is outside the cache. Then $x_p^t \triangleq 1 - y_p^t$ represents the portion of $p$ that is in cache. 
Algorithm \ref{alg:water-filling} ensures feasibility at all times $t$: the total of all $y$ values is exactly the complementary cache size $|\U| - k$, i.e., $\sum_{p \in \U} y_p^t = |\U| - k$; and the total $y$ value for pages of any agent $i$ is within its respective complementary reserve size, $\sum_{p \in \U(i)} y_p^t \leq |\U(i)| - k_i$. When a request for page $p_t$ arrives at time $t$, the algorithm fully fetches $p_t$ into the cache by paying a fetch-cost of $y_{p_t}^t$ while simultaneously evicting a total of $y_{p_t}^t$ amount of other suitably chosen pages. 

\subsection{Fractional Algorithm} \label{subsec:waterfilling}

The complete algorithm (referred to as Algorithm $\A$ in the proofs) is presented in Algorithm~\ref{alg:water-filling}. We present a high-level discussion here. At any time $t$, we say that an agent $i$ is \emph{tight} if $\sum_{p \in \U(i)} x_{p}^t = k_i$, i.e., the algorithm is not allowed to further evict any pages (even fractionally) of agent $i$. Conversely, an agent $i$ is \emph{non-tight} if $\sum_{p \in \U(i)} x_{p}^t > k_i$.

The algorithm is a fractional marking algorithm and runs in phases where each phase corresponds to a maximal sequence of page requests that can be served while maintaining feasibility and ensuring that no ``marked'' pages are evicted. 
Within each phase, the currently requested page $p_t$ is fully fetched into cache by continuously evicting an infinitesimal amount of an ``available'' (described below) unmarked page $q$ with the smallest $y_q$ value; if there are multiple choices of $q$, then all of them are simultaneously evicted at the same rate. 
Page $p_t$ gets marked after it has been served and this mark may only be erased at the end of a phase. 

At the end of a phase, an agent $i$ is designated as \emph{isolated} if strictly fewer than $k_i$ $i$-pages are marked in the cache at this time point. This designation changes to non-isolated as soon as $k_i$ $i$-pages get marked at some point in the future. An isolated agent essentially runs a separate instance of caching on its own pages and in its own reserved space. At the end of a phase, the marks of pages owned by non-isolated agents (i.e., agents with at least $k_i$ marked $i$-pages) are erased. 

It remains to describe when a page $q$ is considered available for eviction. Clearly, $y^t_q < 1$ must hold, since otherwise page $q$ is already fully outside the cache. Moreover,
$ag(q)$ must be \emph{non-tight}, i.e., evicting page $q$ must not violate the reserve constraint of the agent that owns it.
The last condition for $q$ to be considered available for eviction depends on whether the agent $i_t := ag(p_t)$ is \emph{isolated} or not: (i) if agent $i_t$ is isolated, then $ag(q) = i_t$ should hold, i.e., only unmarked $i_t$-pages are available for eviction; and (ii) if agent $i_t$ is not isolated, then $ag(q)$ should also be non-isolated. We recall again that among all available pages for eviction, pages with the smallest $y^t_q$ value are evicted first.

\begin{algorithm}[h]
\caption{Fractional Marking Algorithm for \CachingWithReserves{}}\label{alg:water-filling}
\DontPrintSemicolon
\texttt{/* Initialization */}\;
$r_0 \leftarrow 1$ \tcc*{global phase counter}
$r_i \leftarrow 1,\ \forall i \in [m]$ \tcc*{local phase counters}
Let $P(i,0) \subset \U(i)$ be set of $i$-pages in the initial cache (assume $|P(i,0)| \geq k_i$)$\ \forall i \in [m]$ \;
All agents $i \in [m]$ are \emph{\full} and all pages $p$ in the cache are \unmarked\;

\For{each page request $p_t$ of agent $i_t$}{
\uIf{$y_{p_t} = 0$, i.e., $x_{p_t} = 1$,}{\emph{Mark} page $p_t$ and serve the request.}
\uElseIf{agent $i_t$ is \fettered,}{
\texttt{/* Continuously fetch page $p_t$ while uniformly evicting all unmarked $i_t$-pages. */}\;
Set $y_{p_t} \leftarrow 0$, \emph{mark} page $p_t$ and serve the request. \;
Increase $y_q$ at the same rate for all unmarked $i_t$-pages in the cache until the cache becomes feasible, i.e. $\sum_{p \in \U} y_p \geq |\U| - k$ holds. \label{algline:evict-isolated}\;
\If{agent $i_t$ now has $k_i$ marked pages} {
    Designate $i_t$ as \full\;
}
}
\uElseIf{$\exists$ page $q$ owned by some \loose\  agent\footnotemark and satisfying $y_q < 1$,}{
\texttt{/* Continuously fetch page $p_t$ while uniformly evicting all unmarked pages (belonging to any \loose\ agent) with the least $y$-value.  */}\;
Set $y_{p_t} \leftarrow 0$, \emph{mark} page $p_t$ and serve the request. \;
Increase $y_q$ at the same rate for all unmarked pages of \loose\ agents with the smallest $y$ values until the cache becomes feasible, i.e. $\sum_{p \in \U} y_p \geq |\U| - k$ holds. \label{algline:evict-non-isolated}\;
}
\Else{
\texttt{/* End of phase */}\;
\For{each agent $i \in [m]$}{
\uIf{$i$ has strictly fewer than $k_i$ marked pages,}{Designate $i$ as \fettered.\;}
\Else{
\texttt{/* $i$ is \full\ and undergoes a phase reset */}\;
Set $P(i,r_i) \leftarrow $ collection of all (integral and marked) $i$-pages in cache.\;
Set $r_i \leftarrow r_i + 1$\;
All marked $i$-pages are now \unmarked\;
}
}
Set $r_0 \leftarrow r_0 + 1$\;
Re-process the current page request $p_t$ in the new phase
\label{algline:phase-end}\;
}
}
\end{algorithm}

\footnotetext{Agent $i_t$ is considered non-tight here because fetching $p_t$ while evicting other $q \in \U(i_t) \setminus p_t$ does not violate reserve  feasibility.}


\subparagraph*{Notation}
Let $\I(t) \subsetneq [m]$ denote the set of \fettered\ agents at time $t$. For a global phase $r_0$, we use $\I(r_0)$ to denote the set of \fettered\ agents at the end of phase $r_0$. Let $\T(t)$ denote the set of \emph{tight} agents at time $t$. At any time $t$, let $r_i^t$ denote the value of the local phase counter $r_i$ for agent $i$, and let $R_i = r_i^T$ be the total number of local phases for agent $i$. By definition, for any agent $i \in [m]$, $P(i,r_i-1)$ and $P(i,r_i)$ denote the set of $i$-pages in the cache (integrally) at the beginning of the $r_i$th local phase and the end of the $r_i$th local phase for agent $i$, respectively. For any agent $i \in [m]$, let $M(i,t) \subseteq \U(i)$ denote the set of marked $i$-pages in the cache and $U(i,t) = P(i,r_i^t - 1) \setminus M(i,t)$ denote the set of unmarked $i$-pages.

We emphasize that the notion of \unmarked\ pages will only be relevant while referring to pages in $P(i,r^t_i-1)$ for some $i,t$; in particular, every $i$-page $q \in \U(i) \setminus (P(i,r^t_i-1) \cup M(i,t))$ is not \marked, but we do not refer to it as \unmarked. Analogous to the notion of clean and stale pages used by the randomized marking algorithm~\cite{fiat1991competitive}, we define \emph{clean, pseudo-clean} and \emph{stale} pages as follows. Fix an agent $i$ and let $r_i$ be its local phase counter at time $t$. Any $i$-page $q \in P(i,r_i-1)$ is considered \emph{stale}. The currently requested page $p_t$ is said to be \emph{clean} if $p_t \notin P(i,r_i-1)$. 
Next, we say that the currently requested page $p_t$ is \emph{pseudo-clean} if $p_t \in P(i,r_i-1)$ and $y^t_{p_t} = 1$ holds right before Algorithm~$\A$ starts to fetch $p_t$ into the cache.  Lemmas \ref{lem:tight} and \ref{lem:frontier} show that a pseudo-clean page necessarily belongs to an agent who was \fettered\ at the start of (global) phase $r_0$ but is \full\ at time $t$. To simplify notation, we drop the superscript $t$ from all notation whenever the time index is clear from the context. 

The following lemma compiles a list of key invariants that are maintained throughout the execution of the algorithm that follow directly from an examination of Algorithm \ref{alg:water-filling}. 

\begin{lemma}\label{lem:invariants}
Algorithm \ref{alg:water-filling} maintains the following invariants.
\begin{enumerate}[(i)]
    \item When a new phase begins, all marked pages belong to \fettered\ agents.
    
    \item \label{inv:tight} At any time $t$, all \fettered\ agents are tight.
    
    \item \label{inv:samevalue} At any time $t$ and for any agent $i$, all unmarked pages of agent $i$ have the same $y$ value.
    
    \item Any page belonging to an \fettered\ agent is (fractionally) evicted only in those timesteps when a different page of the same agent has been requested.
\end{enumerate}
\end{lemma}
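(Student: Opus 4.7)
The plan is to prove all four invariants by a simultaneous induction over time, handling each of the events in Algorithm~\ref{alg:water-filling} that can alter the state --- namely marking, unmarking, fractional eviction (treated as a continuous process within a single step), and the end-of-phase block. The base case uses the initial configuration, where no page is marked, every agent is \full, and $y_p \in \{0,1\}$ with $\sum_p y_p = |\U| - k$. For each invariant I will enumerate which of these events can falsify the claimed property and verify that none of them does.

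Three of the four invariants fall out by direct inspection. For (i), new marks are only created while serving requests, and the end-of-phase block unmarks exactly the pages of \full\ agents; \fettered\ agents retain their marks, so at the start of every new global phase the only surviving marks sit on pages of \fettered\ agents. For (iv), the algorithm's only line that can (fractionally) evict a page of an \fettered\ agent $i$ is inside the \fettered\ branch, which is entered only when $ag(p_t) = i$; moreover the evicted pages are drawn from $U(i_t,t)$, a set that excludes $p_t$ since $p_t$ has just been marked. For (iii), uniform-rate eviction preserves equality within each agent's unmarked pool: in the \fettered\ branch all unmarked $i_t$-pages move at the same rate; in the \full\ branch the ``smallest $y$-value'' set is, by the inductive hypothesis, a union of entire unmarked pools of \full\ \loose\ agents, so each such pool either advances in lockstep or stays put; marking a requested page removes it from its pool without touching the other $y$-values; and at a phase reset the newly unmarked $i$-pages of a \full\ agent are precisely its previously marked pages, all of which carry $y = 0$.

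Invariant (ii) is the delicate step and I expect it to be the main obstacle, since one has to rule out an agent being labeled \fettered\ while still having slack in its reserve constraint. I will establish it in two parts. First, at the instant $i$ is designated \fettered\ at the end of a global phase, $i$ must be \tight: if instead $\sum_{p \in \U(i)} x_p > k_i$ while strictly fewer than $k_i$ pages of $i$ are marked (the designation rule), some unmarked $i$-page $q \in P(i, r_i-1)$ must have $y_q < 1$; but such a $q$ is then available for eviction in the \full\ branch (its owner $i$ is still \full\ and, by assumption, \loose), contradicting the fact that the end-of-phase branch was entered. Second, once $i$ is \fettered, invariant (iv) confines changes in $\sum_{p \in \U(i)} x_p$ to timesteps $t$ with $ag(p_t) = i$, and on each such step the algorithm fetches mass $y^t_{p_t}$ of $p_t$ into the cache while uniformly evicting the same total mass from other unmarked $i$-pages, so the sum stays at $k_i$. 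Hence $i$ remains \tight\ for the entirety of its \fettered\ spell, which closes the induction.
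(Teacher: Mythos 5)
Your proof is correct, and in fact it supplies an argument where the paper gives none: the paper simply asserts that these invariants ``follow directly from an examination of Algorithm~\ref{alg:water-filling}.'' Your simultaneous induction over time is the natural way to make that assertion precise, and you correctly identify the one non-trivial point, namely that an agent designated \fettered\ at a phase end must already be \tight, which you derive by contradiction from the fact that the end-of-phase branch can only be entered when no unmarked page of a \loose\ agent remains evictable; the persistence of tightness then follows from mass conservation within the \fettered\ branch together with invariant (iv). Two small points deserve explicit mention. First, (ii) and (iv) are mutually dependent within a single timestep: your justification of (iv) (``the only line that can evict a page of an \fettered\ agent is the \fettered\ branch'') tacitly uses (ii) at the start of the step, since the non-\fettered\ branch evicts pages of \loose\ agents and one must know that \fettered\ agents are \tight\ (and stay \tight\ during the continuous eviction, as their mass is never touched) to exclude them; your simultaneous-induction framing makes this non-circular, but the dependence should be stated. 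Second, in the designation argument you infer from $\sum_{p\in\U(i)}x_p > k_i > |M(i,t)|$ that some \emph{unmarked} page $q\in P(i,r_i-1)$ has $y_q<1$; this needs the auxiliary observation that every $i$-page outside $P(i,r_i-1)\cup M(i,t)$ has $x_p=0$ (pages only gain mass when requested, at which point they are marked, and a page only leaves the marked/stale sets at a reset of a \full\ agent, at which moment all its unmarked stale pages are fully evicted). This observation is easy to add to the induction, and the paper itself relies on the same unstated fact (e.g., in the integrality count in Lemma~\ref{lem:full-mass}), so it is a matter of completeness rather than a flaw in your approach.
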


The following lemmas show that the algorithm is well-defined and that the operations in Lines~\ref{algline:evict-isolated} and \ref{algline:evict-non-isolated} of Algorithm \ref{alg:water-filling} are always feasible. 

\begin{lemma} \label{lem:fettered-mass}
If the requested page $p_t$ has $x_{p_t}^t \in [0, 1)$ and agent $i_t$ is \fettered, then $p_t$ can be fetched fully by evicting unmarked pages of agent $i_t$.
\end{lemma}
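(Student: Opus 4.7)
The plan is to show that $U(i_t,t) \setminus \{p_t\}$ holds enough $x$-mass to absorb the deficit $y_{p_t}^t = 1 - x_{p_t}^t$ incurred while fully fetching $p_t$. The main lever is invariant~(\ref{inv:tight}) of Lemma~\ref{lem:invariants}: since $i_t$ is isolated, it is tight, giving $\sum_{p \in \U(i_t)} x_p^t = k_{i_t}$.

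As a preliminary, I would establish an auxiliary invariant by induction on local phases for $i_t$: every $i_t$-page outside $P(i_t,r_{i_t}^t-1) \cup M(i_t,t)$ has $x_p^t = 0$. The base case is the initial state. For the inductive step, I would argue that at the end of a local phase (i.e., at a non-isolated global phase-end for $i_t$), every non-marked $i_t$-page has $x = 0$: this holds by tightness whenever $i_t$ is tight (since $|M(i_t,t)| \ge k_{i_t}$ forces all $x$-mass onto marked pages), and it holds by the phase-end triggering condition (``no non-tight agent has an unmarked page with $y<1$'') when $i_t$ is non-tight. Within the local phase itself the only way a page can leave the ``$x=0$'' bucket is via a fresh request, which instantly moves it into $M(i_t,t)$. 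Combined with the fact that marked pages always satisfy $x_p^t = 1$, and invariant~(\ref{inv:samevalue}) assigning a common value $u$ to all pages in $U(i_t,t)$, the tightness identity rewrites as $|M(i_t,t)| + |U(i_t,t)| \cdot u = k_{i_t}$, so $|U(i_t,t)| \cdot u = k_{i_t} - |M(i_t,t)| \geq 1$, using the isolated condition $|M(i_t,t)| < k_{i_t}$.

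To finish, I would split on whether $p_t \in U(i_t,t)$. If so, then $u = x_{p_t}^t$ and the other $|U(i_t,t)|-1$ unmarked pages contribute $|U(i_t,t)|\,u - u \geq 1 - x_{p_t}^t = y_{p_t}^t$, exactly the amount that line~\ref{algline:evict-isolated} needs to reclaim. Otherwise the auxiliary invariant forces $x_{p_t}^t = 0$, so the deficit is $1$ while the full set $U(i_t,t)$ already supplies $|U(i_t,t)|\,u \geq 1$ of evictable mass—again enough, and the uniform raising of line~\ref{algline:evict-isolated} does not overshoot because all $U(i_t,t)$ pages start from the common value $u$. The main obstacle I anticipate is pinning down the auxiliary invariant cleanly, since it is what ties the strict notion of \emph{unmarked} to the total $x$-mass available on $i_t$-pages; the counting itself is a one-line consequence of tightness.
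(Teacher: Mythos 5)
Your proof is correct and takes essentially the same route as the paper's: tightness of the isolated agent $i_t$ together with $|M(i_t,t)| < k_{i_t}$ forces at least one unit of $x$-mass onto its unmarked pages, which is enough to absorb the deficit $y_{p_t}^t = 1 - x_{p_t}^t$. The only difference is that you make explicit, via your auxiliary invariant, that non-marked $i_t$-pages outside $P(i_t,r_{i_t}-1) \cup M(i_t,t)$ carry zero mass, a fact the paper's one-line counting argument uses implicitly.
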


\begin{proof}
As agent $i_t$ is isolated when page $p_t$ is requested, $\sum_{p \in \U(i)} x_p^t = k_i$ (by invariant (\ref{inv:tight}) in Lemma \ref{lem:invariants}) and  $i_t$ has fewer than $k_i$ marked pages in cache. Hence $\sum_{p \in U(i,t)} x_p^t \geq 1$ and $\sum_{p \in U(i,t) \setminus \{p_t\}} x_p^t \geq 1 - x_{p_t}^t$. 
\end{proof}

\begin{lemma} \label{lem:full-mass}
If the requested page $p_t$ has $0 < x^t_{p_t} < 1$ and its owner $i_t$ is \full, then there is always enough fractional mass of pages belonging to \loose\ agents that can be evicted to fully fetch page $p_t$. In particular, line~\ref{algline:evict-non-isolated} of Algorithm \ref{alg:water-filling} is well-defined.
\end{lemma}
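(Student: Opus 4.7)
The plan is to show that the total \unmarked\ mass carried by \loose\ agents, after excluding $p_t$'s own contribution, is at least $y_{p_t}^t = 1 - x_{p_t}^t$, which is precisely the eviction demand of Line~\ref{algline:evict-non-isolated}. The argument has two ingredients: a case-by-case accounting of cache mass, and one integrality observation.

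First I would check that $i_t$ itself is \loose. Since $0 < x_{p_t}^t < 1$, page $p_t$ is not \marked\ (\marked\ pages have $x = 1$), and $x_{p_t}^t > 0$ forces $p_t \in P(i_t, r_{i_t}-1)$ (a page outside $P(i_t, r_{i_t}-1)$ at the start of the phase can only acquire positive $x$ by being served, at which point it is immediately \marked). Hence $p_t \in U(i_t, t)$, and by invariant~(iii) of Lemma~\ref{lem:invariants} every \unmarked\ $i_t$-page carries the common value $y_{p_t}^t < 1$. Combined with $M_{i_t} \geq k_{i_t}$ (since $i_t$ is \full), this yields $s_{i_t} := \sum_{p \in \U(i_t)} x_p^t > k_{i_t}$, so $i_t \notin \T$.

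Next I would sum cache mass case-by-case. For a \tight\ \full\ agent $j$ the chain $k_j = s_j \geq M_j \geq k_j$ forces $M_j = k_j$ and \unmarked\ mass $s_j - M_j = 0$; for a \fettered\ agent invariant~(ii) gives $s_j = k_j$; and a \loose\ agent contributes \unmarked\ mass $s_j - M_j$. Using $\sum_j s_j = k$, the total \unmarked\ mass held by \loose\ agents rearranges to
\[
E \;=\; \sum_{j \notin \T} (s_j - M_j) \;=\; k \;-\; \sum_{j \in \I} k_j \;-\; \sum_{j \notin \I} M_j.
\]
Each term on the right is integer-valued, so $E \in \mathbb{Z}$. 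Since $p_t \in U(i_t, t)$ with $i_t \notin \T$ contributes $x_{p_t}^t > 0$ to $E$, we must have $E \geq 1$. Consequently, the evictable mass outside of $p_t$ itself is $E - x_{p_t}^t \geq 1 - x_{p_t}^t = y_{p_t}^t$, so Line~\ref{algline:evict-non-isolated} can always complete the fetch. The main obstacle is the integrality step: without it, one would only be able to conclude $E > 0$, which is not enough when $y_{p_t}^t$ is close to~$1$.
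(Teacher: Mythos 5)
Your overall skeleton (the unmarked mass held by \loose\ agents is a positive integer, hence at least $1$, hence at least $y_{p_t}^t$ after removing $p_t$'s own contribution) is the same integrality argument the paper uses. But your execution has a genuine error: the claim ``$M_{i_t} \geq k_{i_t}$ since $i_t$ is \full'' is false. Non-isolation does \emph{not} mean the agent currently holds $k_{i_t}$ marked pages: a \full\ agent has all its marks erased at every phase reset, so early in a phase it can have far fewer than $k_{i_t}$ marked pages while still being \full. Concretely, take $k=2$, agent $1$ with $k_1=1$ and pages $\{a,b\}$, agent $2$ with $k_2=0$; start the phase with cache $\{a,b\}$ and request a page of agent $2$: the algorithm evicts $a$ and $b$ each to $x=1/2$, agent $1$ is now \tight, \full, has zero marked pages, and if $a$ is requested next we have $0<x^t_a<1$ with $i_t\in\T(t)$. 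This breaks your first step ($i_t\notin\T$ is simply not true in general) and also your case analysis: a \tight\ \full\ agent can carry positive unmarked mass (the paper's proof of Lemma~\ref{lem:ylarge} explicitly allows $|M(i,t)|<k_i$ for tight agents), so the identity $E=k-\sum_{j\in\I}k_j-\sum_{j\notin\I}M_j$ is wrong, and in the example above the literally non-tight agents can carry zero unmarked mass, so $E\geq 1$ fails.

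The missing case is exactly the one the paper's footnote addresses: during this step agent $i_t$ is \emph{treated} as \loose, because fetching $p_t$ compensates for evicting its other unmarked pages, so those pages are legitimately available. A correct argument therefore splits into two cases. If $i_t$ is literally \tight, argue as in Lemma~\ref{lem:fettered-mass}: $k_{i_t}=|M(i_t,t)|+\sum_{p\in U(i_t,t)}x^t_p$ with the unmarked sum positive (it contains $x^t_{p_t}>0$), so by integrality $|M(i_t,t)|\leq k_{i_t}-1$ and the unmarked $i_t$-mass excluding $p_t$ is at least $1-x^t_{p_t}$. If $i_t$ is \loose, your integrality computation (with $\T$, not $\I$, in the identity, i.e.\ $E=k-\sum_{j\in\T}k_j-\sum_{j\notin\T}M_j$, which is an integer without any appeal to $M_j\geq k_j$) gives $E\geq 1$ and finishes as you intended; this second case is the paper's proof.
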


\begin{proof}
Suppose page $p_t$ is fetched in a continuous manner. To show that page $p_t$ can be fetched fully, it suffices to show that at any instantaneous time $t$ when $x_{p_t}^t < 1$, there always exists an unmarked page $q$ belonging to a non-tight agent $i$ such that $x_q^t > 0$, i.e. page $q$ can be evicted. Observe that $k = \sum_{q \in \U} x_{q}^t = \sum_{i \in \T(t)} k_i + \sum_{i \notin \T(t)} \sum_{q \in \U(i)}x_{q}^t$. Due to integrality of $k$ and $\{k_i\}_{i \in [m]}$, we must have $\sum_{i \notin \T(t)} \sum_{q \in \U(i)}x_{q}^t$ is an integer. Since any marked page $q$ always has $x_q^t = 1$, $\mu := \sum_{i \notin \T(t)} \sum_{q \in U(i, t)}x_{q}^t$ is also an integer. Further, since $i_t \notin \T(t)$ and $x_{p_t}^t > 0$, we must have $\mu \geq 1$ and hence there exists a page $q$ belonging to some \loose\ agent $i$ with $x_q^t > 0$ as desired.
\end{proof}

Since we always evict an available page with the least $y$ value, at any time step $t$, all available pages (i.e., unmarked pages $q$ belonging to \loose\ agents and satisfying $y_q < 1$) have the same $y$ value at all times. We denote this common $y$-value by $h^*$ and refer to the corresponding set of evictable pages (with $y$-value $h^*$) as the \emph{frontier}. The following two key structural lemmas formalize this property.

\begin{lemma} \label{lem:tight} 
At any time $t$, let $i$ be an agent that was isolated at the beginning of the current phase, and let $q$ be one of its unmarked pages. Then $y^t_q < 1$ if and only if $i$ is still \fettered\ at time $t$.
\end{lemma}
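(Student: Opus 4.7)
My plan is to establish the equivalence by a mass-balance argument for the total $i$-mass, anchored on invariants (\ref{inv:tight}) and (\ref{inv:samevalue}) together with one additional tracking invariant that I would prove first.

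The tracking invariant states: at every time $t$ and for every agent $j$, any $j$-page $p$ with $x_p^t > 0$ satisfies $p \in P(j, r_j^t - 1) \cup M(j, t)$. I would prove this by induction on time. It holds initially by the definition of $P(j, 0)$. During a phase it is preserved because every fetch marks its target (so any page becoming positive lands in $M$) and evictions only decrease $x$-values. Across a phase boundary, for a non-isolated agent that is reset, the previously-unmarked pages being dropped from $P$ have $y = 1$ (hence $x = 0$) because the phase only ends when no unmarked page of a non-tight agent still has $y < 1$; for an isolated agent no reset occurs and the invariant carries over trivially.

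Granting this invariant and letting $\alpha$ denote the common $y$-value of all unmarked $i$-pages (by invariant (\ref{inv:samevalue})), the total $i$-mass decomposes as
\[
\sum_{p \in \U(i)} x_p^t \;=\; |M(i, t)| + |U(i, t)|\,(1 - \alpha),
\]
since marked pages contribute $1$ apiece, unmarked pages contribute $1 - \alpha$ apiece, and all other $i$-pages contribute $0$ by the tracking invariant. For the forward direction I would assume $i$ is still isolated at time $t$. By invariant (\ref{inv:tight}), $i$ is tight so the sum above equals $k_i$; the isolated designation gives $|M(i, t)| < k_i$; and $|P(i, r_i - 1)| \geq k_i$ (maintained because a reset only fires when $|M| \geq k_i$) combined with $|P(i, r_i - 1) \cap M(i, t)| \leq |M(i, t)| < k_i$ yields $|U(i, t)| \geq 1$. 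Substituting gives $|U(i, t)|(1 - \alpha) = k_i - |M(i, t)| > 0$, hence $\alpha < 1$.

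For the converse direction I would argue the contrapositive: if $i$ is not isolated at time $t$, then $y_q^t = 1$ for every $q \in U(i, t)$. Let $t' \leq t$ be the moment at which $i$'s designation flips from isolated to non-isolated within the current global phase. This happens immediately after a fetch that raises $|M(i, \cdot)|$ to $k_i$; because $i$ was isolated (hence tight) just before the fetch and the fetch only redistributes mass within $\U(i)$, the mass identity at $t'$ with $|M(i, t')| = k_i$ forces $|U(i, t')|(1 - \alpha') = 0$, so $\alpha' = 1$. Between $t'$ and $t$ the current global phase does not end (else $t$ would lie in a later phase than the one in which $i$ was isolated at its start), so marks of $i$ are never erased and $U(i, t) \subseteq U(i, t')$. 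For any such $q$, its $y$-value was $1$ at $t'$; after $t'$ its $y$-value can only change via a fetch (which would mark it, contradicting $q \in U(i, t)$) or an eviction (which only raises $y$, already at the cap $1$). Hence $y_q^t = 1$.

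The main obstacle I anticipate is verifying the tracking invariant cleanly across a phase-boundary reset for a non-isolated agent, specifically ruling out positive-mass $j$-pages outside $P(j, r_j - 1) \cup M(j, t)$ at the instant of the reset; this rests on the end-of-phase rule that leaves no unmarked page of a non-tight agent with $y < 1$, together with a short case split between tight and non-tight non-isolated agents at that moment.
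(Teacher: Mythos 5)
Your proof is correct and follows essentially the same route as the paper: the \emph{if} direction via tightness (invariant (ii)) and the common $y$-value of unmarked pages (invariant (iii)) forcing unmarked mass $k_i - |M(i,t)| > 0$, and the \emph{only if} direction by examining the moment the $k_i$th $i$-page is marked (which drives all remaining unmarked $i$-pages to $y=1$) and then using that $y$ never decreases for unmarked pages within the phase. The only difference is that you make explicit, as a separate ``tracking invariant,'' the fact that positive-mass $i$-pages lie in $P(i,r_i-1)\cup M(i,t)$, which the paper's proof uses implicitly.
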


\begin{proof}
For the \emph{if} direction, suppose that $i$ is still \fettered. By invariant (\ref{inv:tight}), it is also tight. By invariant (\ref{inv:samevalue}), all its unmarked pages have the same $x$-value and in total they occupy $k_i - |M(i,t)| > 0$ units of cache space. Thus, $x_q^t > 0$ and $y^t_q < 1$. 

For the \emph{only if} direction, suppose that $i$ is no longer isolated. Just before the $k_i$th $i$-page to be marked was requested, $(k_i-1)$ $i$-pages were marked. Since $i$ was tight, the total $x$ value of all its unmarked pages must have been 1. Then the algorithm replaced all of them with the $k_i$th marked page, and the $x$-value of all remaining unmarked pages became $0$. Thus, the property holds for a newly \full\ agent $i$. This property continues to hold for the rest of the phase since $y_q$ never decreases for an unmarked page.
\end{proof}

\begin{lemma} \label{lem:frontier}
At any time $t$, there is a value $h^*_t \in [0,1]$ such that:
for any agent $i$ that was non-isolated at the beginning of the current phase and 
any unmarked $i$-page $q$,  $y_q \leq h^*_t$ holds and $y_q = h^*_t$ holds whenever $i$ is \loose.
\end{lemma}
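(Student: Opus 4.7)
My plan is to prove the lemma by induction on the events of Algorithm~\ref{alg:water-filling}, defining $h^*_t$ as the largest $y$-value among unmarked pages of agents non-isolated at the start of the current global phase (with $h^*_t = 0$ if no such page exists). Then Part~(a) is immediate, and all the work is in Part~(b): whenever such an agent $i$ is currently non-tight, every unmarked $i$-page has $y_q = h^*_t$. Two observations simplify the bookkeeping: within a phase, isolation can only transition from isolated to non-isolated (so the set of agents in scope is determined at phase start); and by invariant~(\ref{inv:tight}) any currently non-tight agent is currently non-isolated.

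The base case is the first instant of any phase: every in-scope agent has just completed a phase reset, so its unmarked pages are exactly the former marked pages, all integrally in cache with $y=0$, and $h^*_t=0$ works. For the inductive step, the cache-hit event ($y_{p_t}=0$) changes no $y$-value and no tightness or isolation status, and invariant~(\ref{inv:samevalue}) preserves the common $y$-value on each agent's unmarked pages; the isolated-owner event of line~\ref{algline:evict-isolated} only touches $i_t$-pages, and $i_t$ is out of scope by the first observation above; and the phase-end event falls back on the base case.

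The main case -- and the main obstacle -- is the non-isolated-owner event of line~\ref{algline:evict-non-isolated}. Setting $y_{p_t}\leftarrow 0$ raises $\sum_{p\in\U(i_t)} x_p$ by $y_{p_t}>0$, possibly turning $i_t$ from tight to non-tight; by the inductive hypothesis and invariant~(\ref{inv:samevalue}), $i_t$'s remaining unmarked pages share a common value $h_{i_t}\leq h^*_{t^-}$. The water-filling then continuously evicts unmarked pages of non-tight agents with the currently smallest $y$. I plan to track this as a single level $\ell$ rising monotonically, starting at $\ell = h_{i_t}$: if $h_{i_t}<h^*_{t^-}$ then only $i_t$'s pages rise first, until $\ell$ reaches $h^*_{t^-}$; thereafter all non-tight agents' unmarked pages sit at $\ell$ by the inductive hypothesis Part~(b) and rise together, and an agent can only drop out of the eligible set by becoming tight, freezing its pages at the current $\ell$. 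When feasibility is restored at some $\ell_{\mathrm{final}}$ (which exists by Lemma~\ref{lem:full-mass}), setting $h^*_t:=\ell_{\mathrm{final}}$ verifies both parts: still-non-tight agents' pages are exactly at $h^*_t$, pages of agents that became tight during the event are at some $\ell\leq h^*_t$, and pages of agents tight throughout retain $y\leq h^*_{t^-}\leq h^*_t$. The delicate point to formalize is the transient "catch-up" subphase where $i_t$'s unmarked pages climb from $h_{i_t}$ up to the old frontier before the collective ascent resumes; this is exactly what the algorithm's ``smallest-$y$-first'' eviction rule accomplishes.
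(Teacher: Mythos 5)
Your proof is correct and follows essentially the same route as the paper's: induction over time within a phase, with the base case at a phase reset, and for the main eviction event the requested agent's unmarked pages first ``catching up'' from their frozen level to the frontier, after which the frontier rises uniformly while agents that become tight drop out frozen at a level at most the final one. The only gloss is the case where $i_t$ was isolated at the start of the phase (a pseudo-clean request), in which Lemma~\ref{lem:tight} puts its unmarked pages at $y=1$ so they simply do not participate in the eviction; the paper's own proof leaves this case equally implicit.
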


\begin{proof}
We prove the lemma by induction. Clearly, the lemma holds at the start of the phase: all unmarked pages belonging to \full\ agents have $y$-value $0$. Now consider a time $t$ during phase $r$ such that the lemma holds for all timepoints before $t$ in this phase. We may also assume that $y_{p_t} > 0$, since otherwise none of the variables are modified in this timestep.

By induction hypothesis, any unmarked $i$-page $q$ satisfies $y_q \leq h^*_{t-1}$, and this inequality is tight whenever $i$ is \loose. If agent $i_t$ is \loose, then its unmarked pages are already part of the frontier. Otherwise, Algorithm~$\A$ fetches $p_t$ fully into the cache by increasing the $y$-value of other unmarked $i_t$-pages until one of the following happens: (a) $p_t$ is fully fetched. In this case, $i_t$ continues to remain \tight; or (b) The $y$-value of unmarked $i_t$-pages becomes equal to the frontier's $y$-value, $h^*_{t-1}$. In the latter case, unmarked $i_t$-pages become part of the frontier and the $y$-value of the frontier is uniformly increased until $p_t$ gets fully fetched into the cache. If some agent $i'$ becomes \tight\ before the fetch operation is completed, then its unmarked pages get excluded from the frontier and the corresponding $y$-values remain unchanged for the rest of this timestep. In all cases, the lemma continues to hold since the $y$-value of the frontier is never decreased and only \tight\ agents get dropped from the frontier.
\end{proof}

\begin{remark}
Within any phase, $h^*_t$ is non-decreasing over time and takes values $0$ and $1$ at the endpoints. This follows from the fact that $\A$ never decreases $y_q$ for an unmarked page $q \neq p_t$.
\end{remark}

The following lemma shows that any page that is not completely in Algorithm~$\A$'s cache must be evicted to at least a $1/k$ portion. This property will be useful to us in Sections~\ref{sec:potential-function}~and~\ref{sec:rounding}. 

\begin{lemma} \label{lem:ylarge}
At the end of any time step $t$, for any page $p \in \U$, we have $y_p^t = 0$ or $y_p^t \geq 1/k$.
\end{lemma}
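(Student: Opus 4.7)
My plan is to prove the lemma by induction on the time step $t$, with the base case being immediate since initially every $y_p^0 \in \{0,1\}$. At step $t$, Algorithm~$\A$ can change a $y$-value in only two ways: by setting $y_{p_t}\leftarrow 0$ while serving the request, or by uniformly raising the $y$-values of a specific group of pages (the frontier in the non-isolated branch, or $U(i_t,t)$ in the isolated branch). The first action only introduces zeros, and by induction any $y_p\ge 1/k$ at the start of the step is preserved under a monotone increase, so the only pages that need attention are those whose $y$-value was $0$ at the start of step $t$ and strictly increased during it.

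For the \emph{isolated branch}, invariant~(iii) of Lemma~\ref{lem:invariants} says all pages in $U(i_t,t)$ share a common new value $y^i_t$. Combining tightness of $i_t$ (invariant~(ii)) with a structural invariant (discussed below) that every $i_t$-page in $\U(i_t)\setminus(P(i_t,r_{i_t}-1)\cup M(i_t,t))$ has $y=1$, the mass balance on agent $i_t$ becomes
\[
|M(i_t,t)| + |U(i_t,t)|(1-y^i_t) = k_{i_t}, \qquad \text{hence} \qquad y^i_t = \frac{N-k_{i_t}}{|U(i_t,t)|},
\]
where $N := |P(i_t,r_{i_t}-1)\cup M(i_t,t)|$. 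Since $N\ge k_{i_t}$ always, either $N=k_{i_t}$ (giving $y^i_t=0$) or $N\ge k_{i_t}+1$. In the latter case, $|U(i_t,t)|\le N$ combined with $k_{i_t}\le k-1$ (from $\sum_j k_j<k$) yields $y^i_t\ge (N-k_{i_t})/N\ge 1/k$.

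For the \emph{non-isolated branch}, the starting frontier value $h^*_{t-1}$ is either $0$ or $\ge 1/k$ by induction; in the latter case monotonicity of $h^*$ within a phase gives $h^*_t\ge 1/k$ at once. In the harder subcase $h^*_{t-1}=0$, Lemma~\ref{lem:frontier} forces every unmarked $i_t$-page to have $y\le 0$, and the structural invariant forces every $i_t$-page in $\U(i_t)\setminus(P\cup M)$ to have $y=1$, so any $p_t$ with $y^{\text{old}}_{p_t}>0$ must satisfy $y^{\text{old}}_{p_t}=1$. Exactly one unit of mass is then evicted, distributed uniformly over a frontier of size $f\le k$ (each frontier page sat in the cache at $y=0$ before the step), so the new frontier value is $1/f\ge 1/k$. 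The same mass-balance calculation as in the isolated branch shows that pages in $U(j,t)$ of a tight non-isolated agent $j$ are frozen at $(|P_j\cup M_j|-k_j)/|U_j|$, which is $0$ or $\ge 1/k$ by the same arithmetic.

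The main obstacle is establishing the structural invariant used in both cases: for every agent $i$ and time $t$, every $q\in\U(i)\setminus(P(i,r_i-1)\cup M(i,t))$ satisfies $y_q=1$. I plan to prove this by a parallel induction on $t$. Within a phase the algorithm never modifies $y$-values of such pages, so the invariant is preserved. At a phase-end reset of a non-isolated agent, the pages that newly move into $\U(i)\setminus P$ are exactly the previous $U(i,t)$; if the agent was non-tight at phase end, the phase-termination rule forces those $U$-pages to $y=1$, and if it was tight, the reset condition $|M_i|\ge k_i$ combined with tightness and the inductive invariant gives $\sum_{q\in U}x_q = k_i-|M_i|\le 0$, again forcing $y=1$ on every $U$-page. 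Isolated agents are not reset, so the invariant is preserved trivially.
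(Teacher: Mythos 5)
Your proof is correct, but it is organized quite differently from the paper's. The paper gives a short, non-inductive argument that works directly at the fixed time $t$: for every \tight\ agent $i$ it writes the mass balance $k_i = |M(i,t)| + \sum_{p \in U(i,t)}x_p^t$ and uses integrality of $k_i$, $|M(i,t)|$, $|U(i,t)|$ to force the common unmarked value to be $0$ or $\geq 1/|U(i,t)| \geq 1/k$; and for the \loose\ agents it uses integrality once more, deducing from $k = \sum_{i\in\T(t)}k_i + \sum_{i\notin\T(t)}\sum_{p\in\U(i)}x_p^t$ that $|U(t)|\,h^*_t$ is an integer, whence $h^*_t = 0$ or $h^*_t \geq 1/|U(t)| \geq 1/k$. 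Your tight/isolated-agent computation is essentially identical to the paper's first case (isolated implies tight, and your frozen-agent remark is exactly the paper's tight-agent case), but you replace the paper's second integrality argument for the frontier with an induction on $t$ plus a counting argument (one unit of evicted mass spread over at most $k$ pages), and you make explicit, via a parallel induction, the structural invariant that every $i$-page outside $P(i,r_i-1)\cup M(i,t)$ has $y=1$ --- a fact the paper uses implicitly (e.g.\ when it equates $\sum_{p\in\U(i)}x_p^t$ with $|M(i,t)|+\sum_{p\in U(i,t)}x_p^t$) and which indeed holds because a page can only regain cache mass by being requested, whereupon it is marked and later absorbed into the new $P(i,\cdot)$ set. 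What your route buys is self-containedness and an explicit audit of where the ``$y=1$ outside $P\cup M$'' fact is needed; what the paper's route buys is brevity and no induction, since the integrality trick handles the frontier in one line. Two small points of care in your write-up: the evicted unit of mass is not literally ``distributed uniformly over a frontier of size $f$,'' because agents may become tight and freeze mid-step, so you should phrase the conclusion as $h^*_t \geq 1/f \geq 1/k$ (total mass $1$ is at most $f\cdot h^*_t$ since every touched page ends at value at most $h^*_t$), which together with your separate mass-balance treatment of frozen tight agents closes the case; and your bound $(N-k_{i_t})/N \geq 1/k$ is fine but the cleaner route is $|U(i_t,t)| \leq |P(i_t,r_{i_t}-1)| \leq k$, matching the paper.
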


\begin{proof}
First, note that for all marked pages, we have $y_p^t = 1 - x_p^t = 0$. Let $i \in \T(t)$ be any \tight\ agent. Then we have $k_i = \sum_{p \in \U(i)} x_p^t = |M(i,t)| + \sum_{p \in U(i,t)} x_p^t$. By Lemma \ref{lem:invariants} (part \ref{inv:samevalue}), all unmarked pages of agent $i$ have the same $y$ value: $y_p^t = 1 - x_p^t = h_i$ (say). Since $k_i$ is integral, we have either $h_i = 0$ or $h_i$. Rearranging, we have $|U(i,t)| h_i = |M(i,t)| + |U(i,t)| - k_i$. Since all terms on the RHS are integral, we have either $h_i = 0$ or $h_i \geq 1/|U(i,t)| \geq 1/k$.

By Lemma \ref{lem:frontier}, all unmarked pages $p$ belonging to \loose\ agents satisfy $y_p^t = h_t^*$. Let $U(t)$ be the set of all unmarked pages belonging to all \loose\ agents that were also \full\ at the beginning of the phase. Recall that by definition, we have $|U(t)| \leq k$ since all pages in $U(t)$ must have been fully in the cache at the beginning of the current phase. Since we have $k = \sum_{i \in \T(t)} k_i + \sum_{i \notin \T(t)} \sum_{p \in \U(i)} x_p^t$, once again by integrality of $k$ and $\{k_i\}$, we must have that $\sum_{p \in U(t)} y_p^t = \sum_{p \in U(t)} h_t^*$ is an integer. Hence, either $h_t^* = 0$ or $h_t^* \geq 1/|U(t)| \geq 1/k$.
\end{proof}

\subsection{Analysis Overview}
At any time $t$, we consider the set of $y$ values of pages in $\bigcup_{i \in [m]} P(i,r_i - 1)$ as the \emph{state} of the system. We define a non-negative potential function $\Psi$ that is purely a function of this state. For any page request $p_t$, we attempt to bound the algorithm's cost by an increase in the potential function, thereby bounding the total cost incurred by the algorithm by the final value of the potential function. There are two difficulties with this approach: (i) when a phase ends, the potential function abruptly drops since all the unmarked pages that were fully evicted no longer contribute to the state, and (ii) when the agent $i_t$ was \fettered\ at the beginning of the phase but is now \full, the change in potential is not sufficient to cover the fetch cost. In both these situations we charge the cost incurred by the online algorithm to a new quantity that is a function of the sets $\{P(i,r_i)\}$. To complete the analysis, we show that this quantity is upper-bounded by the cost of the optimal solution.

\subsection{Potential Function Analysis}
\label{sec:potential-function}

Consider the function $\phi : [0,1] \to \Rp$ defined as:
\begin{equation} \label{eq:phipot}
\phi(h) \triangleq 2h \cdot \ln(1+kh)
\end{equation}
As $h$ goes from $0$ to $1$, $\phi(h)$ increases from $0$ to $2\ln(1+k)$. 

The potential at any time $t$ is defined as follows:
\begin{equation}
    \Psi(t) \triangleq \sum_{i=1}^{m} \sum_{p \in U(i,t)} \phi(y_p^t)
\end{equation}
Note that only unmarked pages at any time $t$ contribute to the potential. So when page $p_t$ is fetched at time $t$ and marked, it stops contributing to the potential. But since $\phi$ is monotone, the newly evicted pages increase their contribution to the potential. We remark that the potential is purely a function of the state of the system as defined by the $y$ values of unmarked pages in the cache and is thus always bounded by a quantity independent of the length of the page request sequence.


\begin{lemma} \label{lem:largeslope}
For any $h \geq 1/k$, we have $\phi(h) \geq h$ and $\phi'(h) \geq 1 + 2 \ln(1+kh)$. 
\end{lemma}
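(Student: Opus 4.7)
The plan is to verify both inequalities by direct computation, using only elementary calculus.

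For the first inequality $\phi(h) \ge h$, I would rewrite it as $2\ln(1+kh) \ge 1$, i.e., $1+kh \ge \sqrt{e}$. Since $\sqrt{e} \approx 1.6487 < 2$, and since $1+kh$ is increasing in $h$, it suffices to check the bound at $h=1/k$, where $1+kh = 2 \ge \sqrt{e}$. So for all $h \ge 1/k$ we get $2\ln(1+kh) \ge 2\ln 2 \ge 1$, which yields $\phi(h) = 2h\ln(1+kh) \ge h$.

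For the second inequality, I would just differentiate. Using the product rule,
\begin{equation*}
\phi'(h) = 2\ln(1+kh) + 2h \cdot \frac{k}{1+kh} = 2\ln(1+kh) + \frac{2kh}{1+kh}.
\end{equation*}
So the claim $\phi'(h) \ge 1 + 2\ln(1+kh)$ is equivalent to $\frac{2kh}{1+kh} \ge 1$, which rearranges to $2kh \ge 1+kh$, i.e., $kh \ge 1$. This is exactly the hypothesis $h \ge 1/k$.

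Neither step is a serious obstacle; both reduce to one-line algebraic manipulations. The only thing worth being careful about is that the bound $2\ln 2 \ge 1$ (used in the first part) is a concrete numerical inequality rather than something that follows from a symbolic simplification, so I would state it explicitly rather than leave it implicit.
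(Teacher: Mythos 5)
Your proof is correct and takes essentially the same route as the paper: the derivative computation $\phi'(h) = 2\ln(1+kh) + \tfrac{2kh}{1+kh}$ and the reduction of the second inequality to $kh \ge 1$ is exactly the paper's argument (written there as $2(1 - 1/(1+kh) + \ln(1+kh))$ with $1/(1+kh) \le 1/2$). The only cosmetic difference is in the first inequality, where you invoke the numerical fact $2\ln 2 \ge 1$ (equivalently $1+kh \ge 2 \ge \sqrt{e}$) while the paper uses $\ln(1+x) \ge x/(1+x)$; both are fine.
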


\begin{proof}
The first conclusion follows from the logarithmic inequality $\ln(1+x) \geq x/(1+x)$ which holds for any nonnegative $x$: we have $\phi(h) = 2h \ln(1+kh) \geq h \cdot 2kh/(1+kh) \geq h$ whenever $kh \geq 1$.
Next, $\phi'(h) = \frac{d\phi}{dh} = 2(1 - 1/(1+kh) + \ln(1+kh))$. So, for any $h \geq 1/k$ we have $1/2 \geq 1/(1+kh)$, which gives the other conclusion. 
\end{proof}

The rest of this section is devoted to proving the following theorem where we bound the total cost incurred by the algorithm in terms of the sets $\{P(i,r_i)\}$ and the number of requests to pseudo-clean pages.

\begin{theorem} \label{thm:potential-function}
The following bound holds on the cost incurred by $\A$ to process the first $T$ page requests:
\[
\cost_{\A}(\sigma) \leq 2\ln(1+k) \cdot \left(mk + \sum_{t=1}^T \bone_{{p_t} \text{ is pseudo-clean}} + \sum_{i \in [m]} \sum_{r_i=1}^{R_i} |P(i,r_i-1) \setminus P(i,r_i)|\right).
\]
\end{theorem}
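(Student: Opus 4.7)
The plan is to use amortized analysis with $\Psi(t)$ as the potential: for each time step $t$ I bound the cost $y_{p_t}^t$ by the increase in $\Psi$ from processing the request, plus a $2\ln(1+k)$ correction when $p_t$ is pseudo-clean; separately, I show that at the end of each global phase $\Psi$ decreases by exactly $2\ln(1+k) \sum_i |P(i,r_i-1) \setminus P(i,r_i)|$, where the sum is over agents reset at this transition. Since $\Psi(0)=0$ (no pages are evicted initially) and $\Psi(T) \leq 2\ln(1+k)\cdot mk$ (each of at most $mk$ unmarked stale pages contributes $\phi(\cdot) \leq \phi(1) = 2\ln(1+k)$), summing telescopically yields the theorem.

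The phase-end drop is verified by inspection. When the $r$-th global phase ends for an agent $i$ that is non-isolated at that moment, $U(i,t) = P(i,r-1) \setminus M(i,t)$ equals $P(i,r-1) \setminus P(i,r)$, and every such page has $y = 1$ since the reset was triggered by the inability to evict further. After the reset, marks are cleared and the new $U(i,t) = P(i,r)$ enters with $y = 0$, so $\Psi$ loses $\phi(1)\cdot|P(i,r-1)\setminus P(i,r)|$ per reset agent. Isolated agents are untouched at the transition.

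For the per-step amortized bound I perform a case analysis. When $y_{p_t}^t = 0$ the cost is $0$ and the bound is trivial. When $p_t$ is stale and $i_t$ is isolated, the algorithm uniformly raises the unmarked $i_t$-pages which share a common value $h = y_{p_t}^t$ by invariant~\ref{inv:samevalue}, and Lemma~\ref{lem:ylarge} gives $h \geq 1/k$; using convexity of $\phi$ for the frontier increase and the identity $\phi'(h) - \phi(h)/h = 2kh/(1+kh) \geq 1$ for $h \geq 1/k$ yields $\cost \leq \Delta\Psi$. When $p_t$ is stale, non-pseudo-clean, and $i_t$ is non-isolated, Lemma~\ref{lem:frontier} places $p_t$ on the frontier with $y_{p_t}^t = h^*_t \geq 1/k$; parametrizing the fetch by absorbed mass and invoking $\phi'(u) \geq 1 + 2\ln(1+ku)$ from Lemma~\ref{lem:largeslope} over $u \in [h^*_t, h^*_{\text{post}}]$ gives $\Delta\Psi^{\text{frontier}} \geq h^*_t + \phi(h^*_t)$, again enough for $\cost \leq \Delta\Psi$.

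The remaining cases are clean and pseudo-clean requests, both incurring cost $1$, where I must show $\Delta\Psi^{\text{frontier}} \geq 1$. When the fetch starts at $h^*_t \geq 1/k$ the bound $\phi'(u) \geq 1$ suffices immediately. The main obstacle is when $h^*_t = 0$ (the first non-isolated fetch of a phase), because $\phi'(0)=0$. Here I use that the frontier size $n(u)$ is bounded by $k$ (the total count of unmarked stale pages of originally-non-isolated agents is at most the cache size), so if $v(u) = \int_0^u n(s)\,ds$ denotes absorbed $y$-mass, then $v(u) \leq ku$ and hence $u(v) \geq v/k$. Since $\phi'$ is increasing, a change of variables gives
\[
\Delta\Psi^{\text{frontier}} = \int_0^1 \phi'(u(v))\,dv \;\geq\; \int_0^1 \phi'(v/k)\,dv \;=\; k\phi(1/k) \;=\; 2\ln 2 \;>\; 1.
\]
For pseudo-clean $p_t$ the additional $-\phi(1) = -2\ln(1+k)$ contribution from $p_t$ leaving $U$ is exactly absorbed by the $2\ln(1+k)$ correction term. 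Combining the per-step bounds, the phase-end drops, and $\Psi(T) \leq 2mk\ln(1+k)$ completes the proof.
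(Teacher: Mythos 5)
Your proposal is correct and follows essentially the same route as the paper's proof: the same potential $\Psi$, the per-step amortization (cost $\leq \Delta\Psi(t)$ for partial fetches and clean requests, plus a $2\ln(1+k)$ correction only for pseudo-clean requests), the exact phase-end drop $2\ln(1+k)\sum_i |P(i,r_i-1)\setminus P(i,r_i)|$, and the telescoping with $\Psi(0)=0$, $\Psi(T)\leq 2mk\ln(1+k)$. The only divergence is the sub-step where a full miss is served starting from frontier level $0$: you bound the gain by a change of variables, $\int_0^1 \phi'(v/k)\,dv = 2\ln 2 > 1$, using that at most $k$ pages ever receive mass, whereas the paper argues per evicted page via Lemma~\ref{lem:ylarge} (each ends at $0$ or at least $1/k$) together with $\phi(h)\geq h$ from Lemma~\ref{lem:largeslope}; both arguments are valid.
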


Recall that the algorithm incurs a cost of $y_{p_t}^t$ to fetch page $p_t$ at time $t$. So the total cost incurred by the algorithm is simply $\cost_{\A}(\sigma) = \sum_{t} y_{p_t}^t$. We first bound this cost for time steps when the requested page $p_t$ is at least partially in the cache, i.e., $y_{p_t}^t < 1$. Recall by Lemmas \ref{lem:fettered-mass} and \ref{lem:full-mass}, the algorithm does not undergo a phase transition in this time step. 

\begin{lemma} \label{lem:paypartialfetch}
Consider any time step $t$ such that $y_{p_t}^t < 1$ for the currently requested (unmarked) page $p_t$. Let $\Delta \Psi(t)$ denote the change in the potential function during time step $t$. Then  
$y_{p_t}^t \leq \Delta \Psi(t)$.
\end{lemma}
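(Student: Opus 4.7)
The plan is to isolate the trivial case $y_{p_t}^t = 0$, where $\Delta\Psi(t) = 0$ and the inequality is immediate, and focus on $h := y_{p_t}^t \in (0,1)$, in which Lemma~\ref{lem:ylarge} additionally gives $h \geq 1/k$. I would account for $\Delta\Psi(t)$ by noting that exactly two kinds of changes occur during the fetch: $(i)$ $p_t$ moves from $U(i_t,t)$ into $M(i_t,t)$, reducing its contribution to $\Psi$ from $\phi(h)$ to $0$; and $(ii)$ some collection of other unmarked pages $q$ have their $y$-values raised, with the total increase equal to the evicted mass $h$.

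The key structural step I would establish is that every such page $q$ begins the fetch with a value $y_q^{in} \geq h$. I would prove this by a short case split aligned with the branches of Algorithm~\ref{alg:water-filling}. When $i_t$ is isolated, every evicted page is an unmarked $i_t$-page, and invariant~(iii) of Lemma~\ref{lem:invariants} forces all of them to share the value $h$. When $i_t$ is non-isolated, Lemma~\ref{lem:frontier} gives $h \leq h^*_t$; the non-isolated non-tight subcase collapses to $h = h^*_t$ and every evicted page is at $y = h^*_t$ on the frontier, while in the more subtle non-isolated tight subcase the algorithm first raises $i_t$'s own unmarked pages (all at $y = h$ by invariant~(iii)) and only after they reach the frontier value $h^*_t \geq h$ does it start raising external frontier pages. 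In every sub-step the pre-step $y$-value of any raised page is therefore at least $h$.

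With this in hand, I would close out using convexity of $\phi$. A short computation gives $\phi''(h) = 2k(2+kh)/(1+kh)^2 \geq 0$, so $\phi'$ is non-decreasing, and convexity yields, for each evicted page,
\[
\phi(y_q^{fin}) - \phi(y_q^{in}) \;\geq\; \phi'(y_q^{in})(y_q^{fin} - y_q^{in}) \;\geq\; \phi'(h)(y_q^{fin} - y_q^{in}).
\]
Summing over $q$ and using $\sum_q (y_q^{fin} - y_q^{in}) = h$, together with the $-\phi(h)$ contributed by $p_t$ itself, yields $\Delta\Psi(t) \geq h\,\phi'(h) - \phi(h)$. Applying Lemma~\ref{lem:largeslope} (valid since $h \geq 1/k$) gives $h\,\phi'(h) \geq h(1 + 2\ln(1+kh)) = h + \phi(h)$, hence $\Delta\Psi(t) \geq h = y_{p_t}^t$.

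The main obstacle I anticipate is the structural claim, especially the non-isolated tight subcase where the fetch proceeds in two distinct stages and one must uniformly argue $y_q^{in} \geq h$ for pages in both stages; once that is in place, the convexity manipulation combined with Lemma~\ref{lem:largeslope} is essentially mechanical.
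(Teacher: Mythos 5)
Your proposal is correct and follows essentially the same route as the paper: both reduce to the case $h = y_{p_t}^t \geq 1/k$ via Lemma~\ref{lem:ylarge}, argue through the isolated/non-isolated (tight/non-tight) cases with Lemmas~\ref{lem:invariants} and~\ref{lem:frontier} that every evicted page starts at $y \geq h$, and then invoke Lemma~\ref{lem:largeslope} so that the potential gain covers both the fetch cost $h$ and the lost term $\phi(h)$. Your convexity/first-order bound is just a slightly more explicit formalization of the paper's ``rate of increase at least $1+2\ln(1+kh)$'' argument, so the two proofs are essentially identical.
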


\begin{proof}
We assume that $y_{p_t}^t \geq \frac{1}{k}$, since otherwise by Lemma \ref{lem:ylarge}, we must have $y_{p_t}^t = 0$ and the lemma follows trivially. Since $y_{p_t}^t < 1$, by Lemma \ref{lem:frontier}, either agent $i_t$ is tight or we have $y_q^t = y_{p_t}^t$ for every  unmarked page $q$ owned by any \loose\ agent $i$ that was \full\ at the start of this phase. 
In either case, the pages that get evicted to make space for $p_t$ have their initial $y$ values at least $y_{p_t}^t \geq 1/k$. The potential function $\Psi$ changes in this step due to two factors: (i) $\Psi$ drops as page $p_t$ stops contributing to the potential as soon as it gets marked; and (ii) $\Psi$ increases as the $y$-value of (fractionally) evicted pages increases in this step.

Let $h \triangleq y^t_{p_t}$. At the beginning to time $t$, page $p_t$ contributed exactly $\phi(h) = 2 h \ln(1 + kh)$ to the potential; This contribution is lost as soon as $p_t$ gets marked. To prove the lemma, it suffices to show that the rate of increase in the potential function (without including $p_t$'s contribution) is at least $1+2\ln(1+kh)$ throughout the eviction of an $h$ amount of unmarked pages belonging to \loose\ agents: the $1$ term in total pays for the fetch-cost of $h$ and the $2\ln(1+kh)$ term in total pays for the $2 h \ln(1 + kh)$ loss in potential. 
This directly follows from Lemma~\ref{lem:largeslope} from the fact that the $y$-values of pages that are fractionally evicted in this timestep were already at least $h \geq 1/k$. Here, we also use the monotonicity of the function $h' \mapsto \ln(1+kh')$.
\end{proof}

We still need to bound the cost incurred by the algorithm when the incoming request is to a page that is fully outside the cache. Note that the algorithm incurs exactly unit cost for all such time steps. The following lemma shows that the total cost incurred by the algorithm can be bounded by the drop in potential function at the end of a phase and by a term that depends only on the change in the potential function while processing a request to a page fully outside the cache.

\begin{lemma} \label{lem:cost-break-up}
For any global phase $r_0$, let $\Delta \Psi(r_0)$ denote the change in the potential function at the end of phase $r_0$ (line \ref{algline:phase-end} in Algorithm \ref{alg:water-filling}). Let $R_0$ denote the total number of global phases and $T$ denote the time at the end of phase $R_0$. Then we have the following upper bound on the cost incurred by $\A$ for processing the first $T$ page requests: 
\[
\cost_\A(\sigma) \leq 2mk \ln(1+k) + \sum_{t \in [T] : y_{p_t}^t = 1} (1 - \Delta \Psi(t)) - \sum_{r_0=1}^{R_0} \Delta \Psi(r_0)
\]
\end{lemma}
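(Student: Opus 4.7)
The plan is to decompose the total cost $\sum_{t=1}^T y_{p_t}^t$ according to whether the requested page is partially in the cache ($y_{p_t}^t < 1$) or fully outside ($y_{p_t}^t = 1$), bound each contribution by a term involving $\Delta \Psi(t)$, and then telescope the per-time-step potential changes together with the per-phase potential drops.

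First, for every time step with $y_{p_t}^t < 1$, Lemmas~\ref{lem:fettered-mass} and \ref{lem:full-mass} guarantee that no phase end is triggered during the processing of $p_t$, and Lemma~\ref{lem:paypartialfetch} directly gives $y_{p_t}^t \leq \Delta \Psi(t)$. For every time step with $y_{p_t}^t = 1$, the fetch cost is exactly $1$, which I would rewrite via the trivial identity $1 = (1 - \Delta \Psi(t)) + \Delta \Psi(t)$. Summing over all $t \in [T]$ yields
\begin{equation*}
\cost_{\A}(\sigma) = \sum_{t=1}^T y_{p_t}^t \leq \sum_{t=1}^T \Delta \Psi(t) + \sum_{t \in [T] : y_{p_t}^t = 1} (1 - \Delta \Psi(t)).
\end{equation*}

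Next, since every change to $\Psi$ between time $0$ and time $T$ is accounted for by either a per-request change $\Delta \Psi(t)$ or an end-of-phase change $\Delta \Psi(r_0)$ at line~\ref{algline:phase-end}, telescoping gives $\sum_{t=1}^T \Delta \Psi(t) = \Psi(T) - \Psi(0) - \sum_{r_0=1}^{R_0} \Delta \Psi(r_0)$. Since $\phi$ is non-decreasing on $[0,1]$ with $\phi(1) = 2\ln(1+k)$, and since $|U(i,T)| \leq |P(i, r_i^T - 1)| \leq k$ for each of the $m$ agents, the terminal potential satisfies $\Psi(T) \leq 2 m k \ln(1+k)$, while $\Psi(0) \geq 0$. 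Substituting into the previous inequality yields the claimed bound.

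The main subtlety is the bookkeeping: ensuring a clean partition of potential changes into per-request terms $\Delta \Psi(t)$ and per-phase-end terms $\Delta \Psi(r_0)$. This is safe because line~\ref{algline:phase-end} is the only place where marks get erased and the $P$-sets get updated, so the two kinds of changes do not overlap within a single iteration, even when a phase end and the subsequent re-processing of $p_t$ happen in the same time step.
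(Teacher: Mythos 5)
Your proposal is correct and follows essentially the same route as the paper: split the cost by whether $y_{p_t}^t < 1$ or $y_{p_t}^t = 1$, apply Lemma~\ref{lem:paypartialfetch} to the former, telescope the potential changes against the end-of-phase drops $\Delta\Psi(r_0)$, and bound $\Psi(T) \leq 2mk\ln(1+k)$ with $\Psi(0) = 0$. Your rewriting of the unit cost as $(1-\Delta\Psi(t)) + \Delta\Psi(t)$ is just an algebraic rearrangement of the paper's identical telescoping argument.
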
 

\begin{proof}
We have: 
\begin{align*}
\cost_\A(\sigma) &= \sum_{t \in [T]} y_{p_t}^t = \sum_{t \in [T] : y_{p_t}^t < 1} y_{p_t}^t + |\{t \in [T] : y_{p_t}^t = 1\}| \\
& \leq \sum_{t : y_{p_t}^t < 1} \Delta \Psi(t) + |\{t : y_{p_t}^t = 1\}| \tag{Using Lemma~\ref{lem:paypartialfetch}} \\
& = \Psi(T) - \Psi(0) - \sum_{t : y_{p_t}^t = 1} \Delta\Psi(t) - \sum_{r_0=1}^{R_0} \Delta \Psi(r_0) + |\{t : y_{p_t}^t = 1\}|.
\end{align*}
The lemma follows since $\Psi(T) \leq 2mk \ln(1+k)$ and $\Psi(0) = 0$. The bound on $\Psi(T)$ is because we have $m$ agents each with $|P(i,r_i-1)| \leq k$, and $\phi(1) = 2\ln(1+k)$.
\end{proof}


So, it is enough to bound the total cost and drop in potential for time steps when the requested page is fully outside the cache and also to bound the drop in potential when the phase changes.

\begin{proof}[Proof of Theorem~\ref{thm:potential-function}]
Consider any time step $t$ such that the currently requested page $p_t$ is fully outside the cache, i.e. $y_{p_t}^t = 1$. We differentiate such requests into two cases depending on whether the page $p_t$ is in the set $P(i_t,r_{i_t} - 1)$ at the time or not. In other words, we do a case analysis on $p_t$ being clean or pseudo-clean. (Recall that only unmarked pages in $P(i_t,r_{i_t} - 1)$ contribute to the potential).

\medskip

\emph{Case 1: $p_t \notin P(i_t,r_{i_t} - 1)$, i.e, $p_t$ is clean.} Since page $p_t \notin U(i,t)$, it does not contribute to the potential before (or after) the request has been served. Consider any page $q$ that is evicted (fractionally) by the algorithm in this step. By Lemma \ref{lem:invariants}, before the eviction, we have $y_q = 0$ or $y_q \geq 1/k$. In either case, by Lemma \ref{lem:largeslope}, we have $\Delta  \phi(y_q) \geq \Delta y_q$ where $\Delta y_q$ denotes the change in $y$-value of page $q$ in this step. Since we have $\sum_q \Delta y_q = y_{p_t}^t = 1$, we have $\Delta \Psi(t) = \sum_{q} \Delta \phi(y_q) \geq 1$.

\smallskip

\emph{Case 2: $p_t \in P(i_t,r_{i_t}- 1)$, i.e., $p_t$ is pseudo-clean.} By the same reasoning as above, we have $\sum_{q \neq p_t} \Delta \phi(y_q) \geq 1$. However, in this case, page $p_t$ also contributed exactly $2\ln(1+k)$ to the potential at the beginning of the time step. So we have
$\Delta \Psi(t) = \sum_{q \neq p_t} \Delta \phi(y_q) - 2 \ln(1+k) \geq 1 - 2 \ln(1+k)$.

Combining the two cases we get:
\begin{equation}
    \sum_{t : y_{p_t}^t = 1}( 1 - \Delta\Psi(t)) \leq 2 \ln(1+k) \cdot |\{t : y_{p_t} = 1 \text{ and } p_t \in P(i_t,r_{i_t} - 1)\}|
\end{equation}

Consider the end of some phase $r_0$ and let $i$ be a \full\ agent. Let $r_i$ denote the current local phase of agent $i$ that must also end along with the global phase $r_0$. Consider any unmarked page $q$ in $U(i,t)$. As the phase $r_0$ is ending, page $q$ must be fully evicted and thus contributes $\phi(1)$ to the potential. Once phase $r_0$ ends and phase $r_0+1$ begins, page $q$ no longer contributes to the potential. Note that the set of such unmarked pages is exactly $P(i,r_i-1) \setminus P(i,r_i)$. Hence, the change in potential at the end of (global) phase $r_0$ is given by:  
\[
\Delta \Psi(r_0) = -2\ln(1+k) \cdot \sum_{i \notin \I(r_0)} |P(i,r_i-1) \setminus P(i,r_i)|
\]
Since an agent only changes its local phase when it is \full\ at the end of a global phase, we have:
\begin{equation}
    \sum_{r_0=1}^{R_0} \Delta \Psi(r_0) = -2\ln(1+k) \cdot \sum_{i \in [m]} \sum_{r_i=1}^{R_i} |P(i,r_i-1) \setminus P(i,r_i)|.
\end{equation}
The theorem now follows from Lemma~\ref{lem:cost-break-up}.
\end{proof}

\subsection{\boldmath A Lower Bound on \texorpdfstring{$\OPT$}{OPT} through Dual Fitting}
\label{sec:dual-fitting}

In this section, we give a novel LP-based lower bound on the cost of any  offline algorithm for caching with reserves via dual-fitting. This lower bound analysis is new even for the classical unweighted paging setting. 
Crucially, the lower bound derived here perfectly matches the two terms used to bound the cost of the fractional algorithm $\A$ in Theorem~\ref{thm:potential-function},
thereby completing the proof of our main result (Theorem~\ref{thm:fraconline}).

We now describe the linear relaxation of the caching with reserves problem and its dual program. The following notation will be useful. For any page $q \in \U$, let $t_{q,1} < t_{q,2} < ...$ denote the time steps when $q$ is requested in the online sequence. For an integer $a \geq 0$, define $I(q,a) = \{t_{q,a} + 1, \ldots, t_{q,a+1}-1\}$ to be the time interval between the $a$th and $(a+1)$th requests for $q$. We define $t_{q,0} \triangleq 0$ for all pages. Let $a(q,t)$ denote the number of requests to page $q$ that have been seen until time $t$ (inclusive). Hence, by definition, for any time $t$ and page $q \in \U \setminus \{p_t\}$, we have $t \in I(q, a(q,t))$. The primal LP has variables $y(q,a) \in [0,1]$ which denote the portion of page $q$ that is evicted between its $a$th and $(a+1)$th requests, i.e., $1-y(q,a)$ portion of $q$ is held in the cache during the time-interval $I(q,a)$. 
For convenience, we define $n \triangleq |\U|$ and $n_i \triangleq |\U(i)|$ for any $i \in [m]$. The first and second set of primal constraints encode the cache size constraint and the agent-level reserve constraints for all times. The dual LP has variables $\alpha(t)$ and $\beta(t,i)$ corresponding to these primal constraints. We also have dual variables $\gamma(q,a)$ corresponding to the primal constraint encoding $y(q,a) \leq 1$. Besides nonnegativity, the dual has a single constraint for each interval $I(q,a)$. The primal and dual LPs are stated below. We emphasize that we use these linear programs purely for analysis and the algorithm itself does not need to solve any linear program.

\medskip

\noindent
\hspace{-1em}
\begin{minipage}[t]{0.5\textwidth}
\footnotesize
\begin{center}
    \textbf{Primal LP}
\end{center}
\begin{align}
    & \min \quad \sum_{q \in \U} \sum_{\substack{a \geq 1}} y(q, a) & \notag \\
    & \text{subject to:} \notag \\
    & \quad \sum_{q \in \U, q \neq p_t} y(q, a(q,t)) \geq n - k & \forall t \label{eq:primalcachesize} \\
    & \sum_{q \in \U(i), q \neq p_t} y(q, a(q,t)) \leq n_i - k_i & \forall t, \forall i \label{eq:primalreserve} \\
    & \hspace{7em} y(q,a) \leq 1 & \forall q, \forall a \label{eq:primalvarbound} \\
    & \hspace{9em}\  y \geq 0
\end{align}
\end{minipage}
\quad\vline\hspace{-1em}
\begin{minipage}[t]{0.5\textwidth}
\footnotesize
\begin{center}
    \textbf{Dual LP}
\end{center}
\begin{align}
&\max \sum_t (n-k) \alpha(t) - \sum_{t,i} (n_i - k_i) \beta(t,i) \notag\\
&\hspace{9em} - \sum_{q,a} \gamma(q,a) \notag \\
&\text{subject to:} \notag \\
&\sum_{t \in I(q,a)} \bigl(\alpha(t) - \beta(t, ag(q)) \bigr) - \gamma(q,a) \notag \\ & \hspace{100pt} \quad \leq 1 \, \, \forall q, \forall a \label{eq:dualconstraint} \\
& \hspace{9em}\ \alpha,\beta, \gamma \geq 0 
\end{align}
\end{minipage}

\medskip
\medskip
\medskip

Consider time $T$ that marks the end of a global phase $R_0$ for some integer $R_0$. 
Let $\OPT = \cost_{\OPT}(\sigma)$ denote the total cost incurred by an optimal offline algorithm. By weak LP duality, the objective function of the Dual LP yields a lower bound on $\OPT$ for any feasible dual solution. We now construct an explicit dual solution $(\alpha,\beta,\gamma)$ whose objective value is roughly equal to the total number of clean and pseudo-clean pages seen by the algorithm. See Section~\ref{subsec:waterfilling} to recall relevant notation and terminology.  
The dual solution is updated at the end of each (global) phase in two stages. Updates in the first stage, denoted $\update(r_0,1)$, are simple and account for stale pages belonging to \full\ agents that got evicted in the most recent local phase for that agent. Updates in the second stage, denoted $\update(r_0,2)$, are more involved and account for the pseudo-clean pages of agents who lost their \fettered\ status in the current phase. The dual solution that we maintain will always be approximately feasible up to $O(1)$ factors, so the objective value of this dual solution serves as a lower bound on $\OPT(T)$ within a constant factor. We remark that the assumption that $T$ marks the end of a phase is without loss of generality since it can lead to at most an additive $O(k)$ loss in the lower bound. Formally, we show the following. 


\begin{theorem} \label{thm:dualguarantee}
Let $T$ denote the timepoint when global phase $R_0$ ends, and let  $(R_i)_{i \in [m]}$ denote the corresponding local phase counters. 
Let $(\alpha,\beta,\gamma)$ denote the dual solution that is constructed by the end of time $T$, i.e., the solution that arises from a sequential application of dual updates in the order $\update(1,1),\update(1,2),\update(2,1),\update(2,2),\dots, \update(R_0,1)$, and $\update(R_0,2)$.
We have:
\begin{enumerate}[(a)]
    \item The dual solution is approximately feasible: for any $i$-page $q$ and an integer $a \geq 0$, $\sum_{t \in I(q,a)} (\alpha(t) - \beta(t,i)) - \gamma(q,a) \leq \dualviolation$ holds.
    
    \item The dual objective value of $(\alpha,\beta,\gamma)$ is: 
    \begin{multline*}
    \dualcost(R_0) \triangleq \sum_{t = 1}^{T} (n-k) \alpha(t) - \sum_{t=1}^{T} \sum_{i \in [m]} (n_i - k_i)  \beta(t,i) - \sum_{q \in \U} \sum_{a = 1}^{a(q,T)} \gamma(q,a) \\
    = \sum_{i \in [m]} \sum_{r_i=1}^{R_i} |P(i,r_i-1) \setminus P(i,r_i)| + \sum_{r_0 = 1}^{R_0} \sum_{i \in \I(r_0-1) \setminus \I(r_0)} (|P(i,r_i-1) \cup P(i,r_i)| - k_i).
    \end{multline*}
\end{enumerate}
\end{theorem}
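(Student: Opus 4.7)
The plan is to construct the dual variables $\alpha, \beta, \gamma$ incrementally, injecting new mass at the end of each global phase in two stages as suggested by the names $\update(r_0,1)$ and $\update(r_0,2)$. The first stage should account for stale pages that were evicted from the cache during the local phases of non-isolated agents, and the second stage should account for pseudo-clean pages of agents whose status changed from isolated to non-isolated during phase $r_0$. Each stage will be calibrated so that its contribution to the dual objective equals exactly the corresponding term in the target expression in part~(b), and so that its contribution to the left-hand side of the feasibility constraint~\eqref{eq:dualconstraint} is small for every page/interval pair. The overall argument then proceeds by induction on $r_0$: feasibility and the objective identity are both restored after $\update(r_0,1)$ and again after $\update(r_0,2)$.

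For the construction itself, I would set $\alpha(t)$ and $\beta(t, i)$ so that the coefficient of $\alpha(t)$, namely $n - k$, is partially matched by coefficients of $-(n_i - k_i)$ on $\beta(t, i)$ for tight or isolated agents, leaving a net contribution proportional to $\sum_{i \notin \T(t)} |U(i, t)|$ summed against the right scaling factor. Summing this over a local phase of a non-isolated agent $i$, and noting that the evicted stale pages $P(i, r_i - 1) \setminus P(i, r_i)$ correspond exactly to those pages that leave the frontier by the end of the phase, $\update(r_0, 1)$ will produce the first term of the objective. The second stage works analogously but is triggered only at the moment of a status change for $i \in \I(r_0-1) \setminus \I(r_0)$: at that instant, the $|P(i, r_i - 1) \cup P(i, r_i)| - k_i$ pages that exceed $i$'s reserve contribute, because these are precisely the pages that act ``pseudo-clean'' when re-requested in a subsequent non-isolated phase.

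Approximate feasibility in part~(a) is where the bulk of the work lies. For a fixed $i$-page $q$ and interval $I(q, a)$, I would partition $\sum_{t \in I(q,a)}(\alpha(t) - \beta(t,i))$ by the global phases that $I(q,a)$ intersects. Within any single phase in which agent $i$ is tight throughout, the construction should force $\alpha(t) - \beta(t, i) = 0$, so the contribution is zero. Within a phase in which $i$ is non-tight, page $q$ must have already been evicted or marked once it becomes stale, so the corresponding portion of the sum is bounded by a small constant (essentially $1$) through the structural properties of the frontier established in Lemmas~\ref{lem:tight} and~\ref{lem:frontier}. The page-specific variable $\gamma(q, a)$ is reserved for the at most two boundary phases of $I(q, a)$, namely the one containing $t_{q, a}$ and the one containing $t_{q, a+1}$, where partial contributions may still exceed $1$. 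Summing the per-phase bounds and using $\gamma$ to absorb only the boundary slack then yields the overall violation of at most $\dualviolation$.

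The hard part will be the charging in $\update(r_0,2)$ and its interaction with the feasibility proof. Specifically, when an agent $i$ transitions from isolated to non-isolated, the pseudo-clean pages accumulated during its earlier isolated local phases must be booked into the dual objective without double-counting against the feasibility budget for pages that were already evicted (and whose $\alpha - \beta$ contributions were charged in an earlier phase). I would handle this by maintaining an invariant that, just before the transition, the $\gamma$ values of pages in $P(i, r_i - 1) \cup P(i, r_i)$ already cover the accumulated excess in all still-open intervals $I(q, a)$ that straddle the transition, so that the fresh mass added by $\update(r_0, 2)$ raises the feasibility violation of any such interval by only a bounded additive amount. Verifying this invariant inductively, and in particular tracking how the dual mass of a single $i$-page is split between $\update(\cdot, 1)$ and $\update(\cdot, 2)$ events across isolated and non-isolated phases, is the main technical obstacle.
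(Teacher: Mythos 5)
Your high-level plan (two-stage dual injections at the end of each global phase, with stage one paying for evicted stale pages of non-isolated agents and stage two for pseudo-clean pages of agents leaving isolation) is the same as the paper's, but the two places where you commit to specifics do not work. The most serious issue is feasibility. You reserve $\gamma(q,a)$ exclusively for the at most two boundary phases of $I(q,a)$ and bound each interior phase in which $i$ is non-tight by $1$. That per-phase bound is true (the total $\alpha$-mass added in a phase is $1$), but it does not sum to a constant: a single interval $I(q,a)$ can straddle arbitrarily many global phases in which agent $i$ is non-isolated and hence receives no compensating $\beta(t,i)$ mass, so your argument only yields a violation proportional to the number of such phases, not $\dualviolation$. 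The paper avoids this precisely by spending $\gamma$ in the \emph{interior}: in $\update(r_0,1)$, every increase of $\alpha(t)$ is accompanied by an equal increase of $\gamma(q,a(q,t))$ for every page $q$ of a non-isolated agent with $q \notin P(i,r_i-1)\cup P(i,r_i)$, and since $q$ is not requested during $I(q,a)$ it can survive in the sets $P(i,\cdot)$ for only $O(1)$ further local-phase increments; hence all but at most three interior phases contribute net zero, giving the bound of $5$. Without interior $\gamma$ increments (or an equivalent cancellation) the claimed constant bound fails.

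The second gap concerns the exact identity in part (b) and the mechanism of $\update(r_0,2)$. The paper obtains the term $|P(i,r_i-1)\cup P(i,r_i)|-k_i$ by going back to phase $r_0-1$ and \emph{resetting} $\beta(t,i)$ to $0$ for each $i \in \I(r_0-1)\setminus\I(r_0)$ (the only decrease anywhere), transferring that mass into $\gamma(q,a(q,t))$ only for $i$-pages outside $P(i,r_i-1)\cup P(i,r_i)$; the net gain is exactly $(n_i-k_i)-(n_i-|P(i,r_i-1)\cup P(i,r_i)|)$. Your proposal neither identifies this $\beta$-reset mechanism nor explains how fresh mass added ``at the instant of transition'' preserves the unit $\alpha$-budget per phase that your boundary bound relies on; moreover, your suggested invariant --- that $\gamma$ of pages \emph{inside} $P(i,r_i-1)\cup P(i,r_i)$ covers the straddling excess --- would charge $\gamma$ against precisely the pages whose exemption creates the $|P(i,r_i-1)\cup P(i,r_i)|-k_i$ gain, so the equality in part (b) would be destroyed. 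Similarly, calibrating $\beta(t,i)$ to agents that are ``tight or isolated'' at time $t$ does not give the identity; it must be the set $\I(r_0)$ of agents isolated at the end of the phase, so that $-\sum_{i\in\I(r_0)}(n_i-k_i)$ combines with the cache decomposition $k=\sum_{i\in\I(r_0)}k_i+\sum_{i\notin\I(r_0)}|P(i,r_i)|$ and the stage-one $\gamma$ charges to yield exactly $\sum_{i\notin\I(r_0)}|P(i,r_i-1)\setminus P(i,r_i)|$.
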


We first show how Theorem~\ref{thm:dualguarantee} implies that our fractional algorithm $\A$ is $O(\log k)$-competitive. 

\begin{proof}[Proof of Theorem~\ref{thm:fraconline}]
In Theorem~\ref{thm:potential-function} we proved the following upper bound on the cost incurred by $\A$ for processing the first $T$ page requests:
\[
\cost_{\A}(\sigma) \leq 2 \ln(1+k) \cdot \Bigl(mk + \sum_{t=1}^T \bone_{{p_t} \text{ is pseudo-clean}} + \sum_{i \in [m]} \sum_{r_i=1}^{R_i} |P(i,r_i-1) \setminus P(i,r_i)|\Bigr).
\]
Clearly, the second nontrivial term in the above cost-expression matches the first term in the expression for $\dualcost(R_0)$.
Now consider an arbitrary global phase $r_0 \in \{1,\dots,R_0\}$ and a timestep $t$ in this phase. By definition, a pseudo-clean page $p_t$ is necessarily stale, i.e., $p_t \in P(i_t,r_{i_t}-1)$ holds, and it must be that agent $i_t$ was \fettered\ at the start of phase $r_0$ but is \full\ by time $t$. Therefore, $i_t \in \I(r_0-1) \setminus \I(r_0)$ and the following holds: 
\[
|P(i,r_i-1) \cup P(i,r_i)| - k_i \geq |P(i,r_i)| - k_i \geq | \{ t \in \text{phase } r_0 : p_t \text{ is pseudo-clean} \} |.
\]
In the above, the final inequality is because among all pages in $P(i,r_i)$ (w.r.t. the order in which they were marked by $\A$), the first $k_i$ pages are not pseudo-clean.
Thus, the first nontrivial term in the cost-expression for $\A$ can be bounded by the second term in $\dualcost(R_0)$.

Overall, we have shown that $\cost_{\A}(\sigma) \leq 2\ln(1+k) \cdot \bigl( mk + \dualcost(R_0) \bigr)$ holds. Since the dual solution is $O(1)$-feasible, we get that $\A$ is $O(\log k)$-competitive. 
\end{proof}

We now furnish the details of our dual updates. Initially, all our dual variables $\{\alpha(t)\}$, $\{\beta(i,t)\}$, $\{\gamma(q,a)\}$ with $t \in [T], i \in [m], q \in \U, a \in [a(q,T)]$ are set to zero. We assume that the dual updates are applied in the sequence given in Theorem~\ref{thm:dualguarantee}. That is, the set of updates in $\{\update(r_0,s)\}_{r_0 \in [R_0], s \in \{1,2\}}$ are applied in increasing order of $r_0$ and within each phase first stage updates are applied first.
With a slight abuse of notation, let $\dualcost(r_0,s)$ denote the objective value of the dual solution right after updates until $\update(r_0,s)$ (inclusive) have been applied where $r_0 \in [R_0], s \in \{1,2\}$. Note that $\dualcost(R_0) = \dualcost(R_0,2)$. We also define $\dualcost(0,1) = \dualcost(0,2) := 0$. Throughout our updates, we ensure that the dual objective value never decreases, i.e., $0 \leq \dualcost(1,1) \leq \dualcost(1,2) \leq \dots \leq \dualcost(R_0,1) \leq \dualcost(R_0,2)$ holds. 
We remark that $\beta$ variables may decrease and this only happens in the second stage; However, the $\alpha$ and $\gamma$ variables never decrease.

In Section~\ref{subsec:firststagedual}, we describe the first stage of updates and show that the gain in the dual objective corresponds to the first term in Theorem \ref{thm:dualguarantee}(b). In Section~\ref{subsec:secondstagedual}, we describe the second stage of updates and show that the gain in the dual objective corresponds to the second term in Theorem \ref{thm:dualguarantee}(b). Lastly, in Section~\ref{subsec:dualfeasible}, we show that the dual solution that we maintain is always feasible up to constant factors and thus complete the proof of Theorem \ref{thm:dualguarantee}.

\paragraph*{First Stage of Dual Updates} \label{subsec:firststagedual}

Fix a phase $r_0 \in [R_0]$ and consider the set $\I(r_0) \subsetneq [m]$ of agents that are designated as \fettered\ at the end of phase $r_0$. Let $C(r_0)$ denote the set of timesteps $t$ (in this phase) when the following two conditions hold: (a) $y^t_{p_t} = 1$ in the fractional algorithm $\A$ just before $p_t$ is requested; and (b) $i_t$ is not \fettered\ at time $t$. Define $\ell^{(r_0)} \triangleq |C(r_0)|$. It is not hard to see that the following is an equivalent expression for $\ell^{(r_0)}$.
\begin{equation} \label{eq:globalcleanpages}
\ell^{(r_0)} := \sum_{i \notin \I(r_0-1) \cup \I(r_0)} |P(i,r_i) \setminus P(i,r_i-1)| + \sum_{i \in \I(r_0-1) \setminus \I(r_0)} (|P(i,r_i)| - k_i).
\end{equation}
Observe that for agents who are \full\ both at the start and end of phase $r_0$, $\ell^{(r_0)}$ counts all their clean pages. However, for agents who were \fettered\ at the start of this phase but are no longer \fettered\ by the end, $\ell^{(r_0)}$ only counts clean and pseudo-clean pages that are requested \emph{after} the agent has become \full.
Roughly speaking, the motivation for the definition of $\ell^{(r_0)}$ comes from the intuition that an offline algorithm should incur, on an average, a cost of $\Omega(\ell^{(r_0)})$ to serve page requests in phase $r_0$. 


\subparagraph*{Description of \boldmath  $\update(r_0,1)$:}
 For each time $t \in C(r_0)$, we separately apply the following updates. First, we increase $\alpha(t)$ by $1/\ell^{(r_0)}$. Next, we increase $\beta(t,i)$ by $1/\ell^{(r_0)}$ for every agent $i \in \I(r_0)$. Last, for each agent $i \notin \I(r_0)$, we increase $\gamma(q,a(q,t))$ by $1/\ell^{(r_0)}$ for every $i$-page $q \in \U(i) \setminus (P(i,r_i-1) \cup P(i,r_i))$.

It will be clear from the description of our updates that the $\alpha$ and $\beta$ variables that were modified in $\update(r_0,1)$ were previously at $0$. However, no such guarantee holds for the affected $\gamma$ variables. We also remark that the same $\gamma(q,a)$ variable can be increased more than once during $\update(r_0,1)$; this happens when there are multiple times $t \in C(r_0)$ with the same $a(q,t)$ value. In fact, since the $\gamma(q,a)$ variables arise from intervals $I(q,a)$ that can possibly span across multiple phases, it is possible that the same $\gamma$ variable is increased by different $1/\ell^{(r_0)}$ amounts across different $\update(r_0,1)$ steps.

For convenience, let $t \in \text{phase } r_0$ be a shorthand for all timepoints in phase $r_0$. The following result will be useful to us. 

\begin{lemma} \label{lem:totalbetaupdate} 
Let $(\alpha,\beta,\gamma)$ denote the dual solution that is obtained right after $\update(r_0,1)$ has been applied. We have: (a) $\sum_{t \in \text{phase } r_0} \alpha(t) = 1$; and (b) $\sum_{t \in \text{phase } r_0} \beta(t,i) = 1$ for any agent $i \in \I(r_0)$.
\end{lemma}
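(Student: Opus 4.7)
\textbf{Proof plan for Lemma~\ref{lem:totalbetaupdate}.} The lemma is essentially a direct counting calculation from the prescription of $\update(r_0,1)$, and my plan has two steps.

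First, I would verify that immediately before $\update(r_0,1)$ is invoked, we have $\alpha(t) = 0$ for every $t$ in phase $r_0$, and $\beta(t,i) = 0$ for every $t$ in phase $r_0$ and every $i \in \I(r_0)$. All dual variables are initialized to zero, so it suffices to argue that none of the preceding updates $\update(r_0',s)$ with $r_0' < r_0$ touches these variables. For the first-stage update $\update(r_0',1)$, inspection of its description shows that it only modifies $\alpha(t')$, $\beta(t',i')$, and $\gamma(q,a(q,t'))$ for $t' \in C(r_0') \subseteq \text{phase } r_0'$, hence nothing indexed by a time in phase $r_0$ is affected. The analogous statement for the second-stage updates $\update(r_0',2)$, once they are specified in Section~\ref{subsec:secondstagedual}, will follow from the same localization property --- each second-stage update acts only on variables whose time index lies in phase $r_0'$.

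Second, after $\update(r_0,1)$ is executed, the explicit prescription yields $\alpha(t) = 1/\ell^{(r_0)}$ for every $t \in C(r_0)$ and $\alpha(t) = 0$ for $t \in \text{phase } r_0 \setminus C(r_0)$. Summing and recalling that $\ell^{(r_0)} = |C(r_0)|$ gives
\[
\sum_{t \in \text{phase } r_0} \alpha(t) \;=\; \frac{|C(r_0)|}{\ell^{(r_0)}} \;=\; 1,
\]
which is part (a). For part (b), fix any $i \in \I(r_0)$. The same update sets $\beta(t,i) = 1/\ell^{(r_0)}$ on exactly the timesteps $t \in C(r_0)$ and leaves $\beta(t,i) = 0$ elsewhere in phase $r_0$, so the identical calculation gives $\sum_{t \in \text{phase } r_0} \beta(t,i) = 1$.

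The whole argument is essentially bookkeeping; I do not expect any real obstacle beyond confirming the localization property of each type of update. The only mildly delicate point is that this lemma implicitly relies on the second-stage updates of earlier phases not touching $\beta(t,i)$ for $t \in \text{phase } r_0$, a fact that should be read off from the forthcoming description in Section~\ref{subsec:secondstagedual} rather than re-derived here.
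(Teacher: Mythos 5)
Your proposal is correct and matches the paper's argument, which simply observes that the choice $\ell^{(r_0)} = |C(r_0)|$ makes each sum telescope to $1$; your additional bookkeeping that the relevant $\alpha(t)$ and $\beta(t,i)$ with $t$ in phase $r_0$ are untouched by earlier updates is exactly the implicit localization the paper relies on. One small slip that does not affect the conclusion: the second-stage update $\update(r_0',2)$ modifies variables indexed by times in phase $r_0'-1$ (the previous phase), not phase $r_0'$, but these are still times in phases strictly before $r_0$, so your argument goes through.
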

\begin{proof}
Follows directly from our choice of $\ell^{(r_0)} = |C(r_0)|$. 
\end{proof}

Our key technical result in this section is that the gain in the dual objective value that comes from $\update(r_0,1)$ is equal to the number of stale pages owned by \full\ agents that were not requested in their most recent local phases. For convenience, 
we use the prefix $\Delta$ to refer to changes that occured during $\update(r_0,1)$.


\begin{lemma} \label{lem:firstupdate}
We have $\Delta\dualcost(r_0,1) = \sum_{i \notin \I(r_0)} |P(i,r_i-1) \setminus P(i,r_i)|$, where $\Delta \dualcost(r_0, 1) \triangleq \dualcost(r_0, 1) - \dualcost(r_0-1, 2)$ is the change in the dual objective after $\update(r_0, 1)$.  
\end{lemma}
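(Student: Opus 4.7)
The plan is to directly compute the change in each term of the dual objective caused by $\update(r_0, 1)$ and then apply a cache-fullness identity at the end of phase $r_0$ to simplify.

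First, I would track the three sums in the dual objective $\sum_t (n-k)\alpha(t) - \sum_{t, i}(n_i - k_i)\beta(t, i) - \sum_{q, a}\gamma(q, a)$ under the $1/\ell^{(r_0)}$ increments, recalling $|C(r_0)| = \ell^{(r_0)}$. The $\alpha$-updates give $\Delta \sum_t \alpha(t) = 1$; for each $i \in \I(r_0)$ the $\beta$-updates give $\Delta \sum_t \beta(t, i) = 1$, while $\beta(t, i)$ is untouched for $i \notin \I(r_0)$. For the $\gamma$-updates, I would fix $i \notin \I(r_0)$ and a page $q \in \U(i) \setminus (P(i, r_i - 1) \cup P(i, r_i))$: at each $t \in C(r_0)$ we add $1/\ell^{(r_0)}$ to $\gamma(q, a(q, t))$, so the total increment to $\sum_a \gamma(q, a)$ is exactly $\ell^{(r_0)} \cdot 1/\ell^{(r_0)} = 1$, regardless of how the $a(q, t)$ values distribute across the intervals of $q$. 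Summing over all such $q$ yields $\Delta \sum_{q, a} \gamma(q, a) = \sum_{i \notin \I(r_0)} (n_i - |P(i, r_i - 1) \cup P(i, r_i)|)$.

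Plugging these into the dual objective and using $\sum_i n_i = n$ gives
\[
\Delta \dualcost(r_0, 1) = -k + \sum_{i \in \I(r_0)} k_i + \sum_{i \notin \I(r_0)} |P(i, r_i - 1) \cup P(i, r_i)|.
\]
The key step I would then invoke is a cache-fullness identity at the end of phase $r_0$. Since $\sum_{p \in \U} x_p = k$ always holds, each $i \in \I(r_0)$ is tight by Lemma~\ref{lem:invariants}(\ref{inv:tight}) and therefore contributes exactly $k_i$; and for each $i \notin \I(r_0)$, every unmarked $i$-page is fully evicted at phase-end (all such pages lie on the frontier with $h^*_t = 1$ by Lemma~\ref{lem:frontier} and the remark following it), so only marked pages, exactly $|P(i, r_i)|$ of them, contribute. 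This yields $\sum_{i \notin \I(r_0)} |P(i, r_i)| = k - \sum_{i \in \I(r_0)} k_i$. Decomposing $|P(i, r_i - 1) \cup P(i, r_i)| = |P(i, r_i)| + |P(i, r_i - 1) \setminus P(i, r_i)|$ and substituting above, the $-k + \sum_{i \in \I(r_0)} k_i$ term cancels against $\sum_{i \notin \I(r_0)} |P(i, r_i)|$, leaving exactly the claimed $\sum_{i \notin \I(r_0)} |P(i, r_i - 1) \setminus P(i, r_i)|$.

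The principal obstacle is really just the cache-fullness identity at the phase boundary; the remainder is careful linear bookkeeping. The subtlety worth flagging is that a single $\gamma(q, a)$ can be incremented multiple times within one $\update(r_0, 1)$ (when multiple $t \in C(r_0)$ share the same interval index $a(q, t)$) and that intervals $I(q, a)$ can span multiple global phases, but within this one update the total mass deposited for a fixed $q$ always telescopes to $1$, sidestepping those complications.
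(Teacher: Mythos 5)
Your proposal is correct and follows essentially the same route as the paper's proof: the same term-by-term accounting of the $\alpha$, $\beta$, and $\gamma$ increments (including the observation that each affected page deposits total $\gamma$-mass exactly $1$ during this update), followed by the same cache-fullness identity at the end of phase $r_0$ to cancel $-k + \sum_{i \in \I(r_0)} k_i + \sum_{i \notin \I(r_0)} |P(i,r_i)|$. One cosmetic remark: for an agent that was isolated at the start of phase $r_0$ but non-isolated at its end, the fact that its unmarked pages are fully evicted at phase end follows from Lemma~\ref{lem:tight} (and, for tight non-isolated agents, from having exactly $k_i$ marked pages) rather than from Lemma~\ref{lem:frontier} alone, but this does not affect the validity of the argument.
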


\begin{proof}
Since the only affected $\alpha(t)$ and $\beta(t,i)$ variables have are those with $t \in C(r_0)$ and they are all increased by exactly $1/|C(r_0)|$, we get:
\begin{equation*} \label{eq:firstupdatepart1}
\begin{aligned} 
\Delta \dualcost(r_0,1) & = \sum_t (n-k) \Delta \alpha(t) - \sum_{t,i} (n_i - k_i) \Delta \beta(t,i) - \sum_{q,a} \Delta \gamma(q,a) \\
& = (n-k)  - \sum_{i \in \I(r_0)} (n_i - k_i)  - \sum_{q,a} \Delta \gamma(q,a) \\ 
& = \Bigl( \sum_{i \notin \I(r_0)} n_i \Bigr) - k + \Bigl( \sum_{i \in \I(r_0)} k_i \Bigr) - \sum_{q,a} \Delta \gamma(q,a) \\ 
\end{aligned}
\end{equation*}

Now observe that for every $t \in C(r_0)$ and $i \notin \I(r_0)$, the $\update(r_0,1)$ step increases the $\gamma(q,a)$ variable corresponding to exactly $n_i - |P(i,r_i-1) \cup P(i,r_i)|$ unique $i$-pages, each by an amount $1/\ell^{(r_0)}$. So we have $\sum_{q,a} \Delta \gamma(q,a) = (1/\ell^{(r_0)}) \cdot \sum_{t \in C(r_0)} \sum_{i \notin \I(r_0)} (n_i - |P(i,r_i-1) \cup P(i,r_i)|) = \sum_{i \notin \I(r_0)} (n_i - |P(i,r_i-1) \cup P(i,r_i)|)$.
Substituting back into the equation above, we get:
\begin{equation*} \label{eq:firstupdatepart2}
\begin{aligned} 
\Delta \dualcost(r_0,1) & = \Bigl( \sum_{i \notin \I(r_0)} n_i \Bigr) - k + \Bigl( \sum_{i \in \I(r_0)} k_i \Bigr) - \sum_{i \notin \I(r_0)} (n_i - |P(i,r_i-1) \cup P(i,r_i)|) \\
& = \Bigl( - k + \sum_{i \in \I(r_0)} k_i + \sum_{i \notin \I(r_0)} |P(i,r_i)| \Bigr) + \sum_{i \notin \I(r_0)} |P(i,r_i-1) \setminus P(i,r_i)|
\end{aligned}
\end{equation*}
The lemma follows from observing that the above group of terms within the parentheses is $0$: this is because the cache (of size $k$) at the end of phase $r_0$ consists exactly $k_i$ (fractional) pages for \fettered\ agents  $i \in \I(r_0)$ and exactly $|P(i,r_i)|$ (integral) pages for \full\ agents $i \notin \I(r_0)$.
\end{proof}




\paragraph*{Second Stage of Dual Updates} \label{subsec:secondstagedual}

We now describe the second stage of dual updates that are carried out at the end of each phase $r_0 \in [R_0]$. Unlike the first stage, where we only increased the $\alpha$ and $\beta$ variables of time steps in phase $r_0$, in the second stage we decrease the $\beta$ variables of time steps in the previous phase $r_0-1$.

\subparagraph*{Description of \boldmath  $\update(r_0,2)$:} These dual updates correspond to agents that were \fettered\ at the end of phase $r_0-1$ but are no longer \fettered\ at the end of phase $r_0$. Consider an agent $i \in \I(r_0-1) \setminus \I(r_0)$. For every time $t$ in phase $r_0-1$ with $\beta(t,i) > 0$ (i.e., $t \in C(r_0-1)$), we do the following: we increase $\gamma(q,a(q,t))$ by $\beta(t,i)$ for all $i$-pages $q \in \U(i) \setminus \bigl(P(i,r_i-1) \cup P(i,r_i)\bigr)$ followed by resetting $\beta(t,i)$ to $0$. 

For clarity, we note the following: (i) resetting $\beta(t,i)$ to zero is the only dual update when a variable is \emph{decreased}; (ii) the $\beta(t,i)$ updates are applied to timepoints in phase $r_0-1$ (i.e., the previous phase); and (iii) the $\gamma(q,a(q,t))$ variables that we updated above were unchanged while applying  $\update(r_0-1,1)$ at the end of phase $r_0-1$ because their owner $i$ was designated as \fettered\ at that time. 
The reason for decreasing the $\beta(t,i)$ variables is that it leads to an increase in the dual objective, which will be needed to pay for costs associated with pseudo-clean pages.
We formalize this in the following lemma.




\begin{lemma}
\label{lem:dual-secondstage}
We have $\Delta \dualcost(r_0, 2) =  \sum_{i \in \I(r_0-1) \setminus \I(r_0)} \left(|P(i,r_i-1) \cup P(i,r_i)| - k_i\right)$ where $\Delta \dualcost(r_0, 2) \triangleq \dualcost(r_0, 2) - \dualcost(r_0, 1)$ is the change in the dual objective after $\update(r_0, 2)$.
\end{lemma}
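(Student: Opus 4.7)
The plan is to directly compute $\Delta \dualcost(r_0, 2)$ from the definition of the update. Since $\update(r_0, 2)$ does not modify any $\alpha(t)$, only the $\beta$ and $\gamma$ contributions matter. Fix an agent $i \in \I(r_0-1) \setminus \I(r_0)$. The update resets every positive $\beta(t, i)$ with $t$ in phase $r_0-1$ to zero. To evaluate the total decrease, I would invoke Lemma~\ref{lem:totalbetaupdate}(b) applied to phase $r_0-1$, which gives $\sum_{t \in \text{phase } r_0-1} \beta(t, i) = 1$ immediately after $\update(r_0-1, 1)$. I need to check that this sum is preserved through the intervening $\update(r_0-1, 2)$; this is immediate because that step only modifies $\beta(\cdot, i')$ for $i' \in \I(r_0-2) \setminus \I(r_0-1)$, while our agent $i$ lies in $\I(r_0-1)$. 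Hence the total $\beta$-decrease for agent $i$ equals $1$, contributing a gain of $(n_i - k_i)$ to the dual objective.

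For the $\gamma$ contribution, each $t$ in phase $r_0-1$ with $\beta(t, i) > 0$ triggers an increase of $\gamma(q, a(q, t))$ by $\beta(t, i)$ for every $q \in \U(i) \setminus (P(i, r_i-1) \cup P(i, r_i))$, a set of size $n_i - |P(i, r_i-1) \cup P(i, r_i)|$. Summing over such $t$ and again invoking Lemma~\ref{lem:totalbetaupdate}(b), the total added to the $\gamma$-sum for agent $i$ is exactly $n_i - |P(i, r_i-1) \cup P(i, r_i)|$, which enters the dual objective as a loss. Combining the $\beta$-gain and $\gamma$-loss per agent and summing over $i \in \I(r_0-1) \setminus \I(r_0)$ yields $\sum_{i \in \I(r_0-1) \setminus \I(r_0)} \bigl( (n_i - k_i) - (n_i - |P(i, r_i-1) \cup P(i, r_i)|) \bigr) = \sum_{i \in \I(r_0-1) \setminus \I(r_0)} (|P(i, r_i-1) \cup P(i, r_i)| - k_i)$, the claimed identity.

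The main obstacle here is bookkeeping rather than a deeper technical difficulty: one must carefully track which dual variables are modified by which update step and argue that earlier updates do not interfere with the tally. In particular, the fact that $\update(r_0-1, 1)$ only writes to $\gamma(q, a(q, t))$ for agents $i \notin \I(r_0-1)$ guarantees that the $\gamma$ variables we touch in $\update(r_0, 2)$ — which correspond to agents in $\I(r_0-1) \setminus \I(r_0) \subseteq \I(r_0-1)$ — were previously untouched in the current phase, so the $\gamma$-increments add cleanly to the dual objective as computed above.
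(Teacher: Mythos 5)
Your proof is correct and follows essentially the same route as the paper's: fix $i \in \I(r_0-1)\setminus\I(r_0)$, use Lemma~\ref{lem:totalbetaupdate}(b) for phase $r_0-1$ to see the total $\beta$-decrease is $1$ (gaining $n_i-k_i$), pair each decrease with $\gamma$-increments over the $n_i - |P(i,r_i-1)\cup P(i,r_i)|$ pages outside $P(i,r_i-1)\cup P(i,r_i)$, and take the difference. Your explicit check that $\update(r_0-1,2)$ does not disturb agent $i$'s $\beta$ variables is a nice bit of bookkeeping the paper leaves implicit, and the closing remark about the $\gamma$ variables being previously untouched is not actually needed (only the increments matter for the change in objective), mirroring a side note already made in the paper.
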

\begin{proof}
Fix an agent $i \in \I(r_0-1) \setminus \I(r_0)$.
In Lemma~\ref{lem:totalbetaupdate} we showed that after $\update(r_0-1, 1)$, $\sum_{t \in \text{phase } r_0-1} \beta(t,i)$ = 1 and $\beta(t,i) \in \{0, 1/\ell^{(r_0-1)}\}$. Consider any time $t$ in phase $r_0-1$ with $\beta(t, i) > 0$. By the definition of $\update(r_0, 2)$, we decrease $\beta(t, i)$ by $1/\ell^{(r_0-1)}$ while increasing $\gamma(q, a(q, t))$ by the same amount for all pages $q \in \U(i) \setminus \bigl( P(i, r_i-1) \cup P(i, r_i) \bigr)$. Recalling the coefficients in the dual objective function, we see that the updates corresponding to agent $i$ increases the dual objective by exactly:
\[(n_i - k_i) - |\U(i) \setminus (P(i,r_i) \cup P(i,r_i -1))| = |(P(i,r_i) \cup P(i,r_i -1))| - k_i. \qedhere\]
\end{proof}


\paragraph*{Approximate Dual Feasibility} \label{subsec:dualfeasible}

We finish this section by showing that the dual solution is always approximately feasible.


\begin{lemma} \label{lem:dualfeasible}
Let $(\alpha,\beta,\gamma)$ denote the dual solution that is obtained right after $\update(r_0,s)$ has been applied for some $r_0 \in [R_0]$ and $s \in \{0,1\}$. 
For any $i$-page $q$ and an integer $a \geq 1$ satisfying $a \leq a(q,T)$, we have $\sum_{t \in I(q,a)} (\alpha(t) - \beta(t,i)) - \gamma(q,a) \leq \dualviolation$.
\end{lemma}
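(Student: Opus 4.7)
The plan is to fix $q \in \U(i)$ and an integer $a \geq 1$, and decompose the LHS $\sum_{t \in I(q,a)}(\alpha(t) - \beta(t,i)) - \gamma(q,a)$ into per-phase contributions. Since every $\gamma(q,a)$-increment is triggered at some time $t$ with $a(q,t) = a$ (so $t \in I(q,a) \cup \{t_{q,a}\}$), and discarding any bonus contribution from $t = t_{q,a}$ only enlarges the LHS, it suffices to bound
\[
\sum_{r}\ \sum_{t \in I(q,a) \cap C(r) \cap \text{phase } r}\Bigl[\bigl(\alpha(t) - \beta(t,i)\bigr) - \Delta_t\gamma(q,a)\Bigr],
\]
where $\Delta_t\gamma(q,a)$ collects the self-generated $\gamma$-increments that $t$ triggers, through $\update(r,1)$ and (if applicable) $\update(r+1,2)$.

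Next, for each such $t \in C(r)$, I would do a two-axis case analysis: (i) whether $q \in P(i, r_i-1) \cup P(i, r_i)$ for the local-phase index $r_i$ of $i$ corresponding to $r$; and (ii) which of the following holds for $i$: Case A, $i \notin \I(r)$; Case B, $i \in \I(r) \cap \I(r+1)$ (or $r = R_0$ and $i \in \I(R_0)$); or Case C, $i \in \I(r) \setminus \I(r+1)$. In Case B, $\beta(t,i) = 1/\ell^{(r)}$ survives and cancels $\alpha(t)$ exactly. In Cases A and C the final $\beta(t,i) = 0$, so $\alpha(t) - \beta(t,i) = 1/\ell^{(r)}$; this is in turn cancelled exactly by $\Delta_t\gamma(q,a) = 1/\ell^{(r)}$ \emph{precisely when} $q \notin P(i, r_i-1) \cup P(i, r_i)$ (see the descriptions of the two update stages). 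The only non-zero per-$t$ contribution therefore arises in Cases A or C combined with $q \in P(i, r_i-1) \cup P(i, r_i)$, and it equals $1/\ell^{(r)}$; summing over $t$ in $I(q,a) \cap C(r) \cap \text{phase } r$ yields at most $1$ per phase.

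The main step, and the main obstacle, is bounding the number of phases with non-zero contribution. I plan to use two structural observations. First, since $q$ is not requested during $I(q,a)$ and a page is included in $P(i, r_i)$ only if it is marked at the end of local phase $r_i$ (and marks are acquired only via requests), the membership $q \in P(i, r_i-1) \cup P(i, r_i)$ can hold only for two local phases of $i$: the local phase $r_i^a$ containing $t_{q,a}$ (via $P(i, r_i^a)$, because $q$ is marked at $t_{q,a}$ and carried to the end of that local phase) and the immediately next local phase $r_i^a + 1$ (via $P(i, r_i-1) = P(i, r_i^a)$); for any later local phase, $q$ is unmarked throughout and so absent from both $P$-sets. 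Second, using the monotonicity of $\I$ (once non-isolated, always non-isolated), within any single local phase of $i$ spanning global phases $r_1 \leq \cdots \leq r_2$, we have $i \in \I(r) \cap \I(r+1)$ (Case B, contribution $0$) for all $r_1 \leq r \leq r_2 - 2$; only $r = r_2 - 1$ (Case C, when the local phase ends) and $r = r_2$ (Case A, first global phase with $i$ non-isolated) fall into the contributing cases.

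Combining these observations gives at most $2 \times 2 = 4$ non-trivially contributing phases, each contributing at most $1$; allowing slack for the boundary phases $r_s, r_e$ (where $I(q,a)$ only partially overlaps) yields the claimed bound $\dualviolation$. The delicate part will be carefully linking the $P$-set evolution over the local phases of $i$ to the trigger conditions of the two update stages, and invoking the monotonicity of $\I$ to rule out oscillatory behaviour that could otherwise inflate the count. The same analysis applies after either stage of updates, since the post-$\update(r_0,1)$ state differs from the post-$\update(r_0,2)$ state only for agents $i \in \I(r_0-1) \setminus \I(r_0)$ in phase $r_0-1$, a discrepancy that can be absorbed into the additive constant.
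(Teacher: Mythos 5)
Your overall skeleton is essentially the paper's: bound the two boundary phases using the fact that the total $\alpha$-mass added per global phase is $1$, then for the intermediate phases do a case analysis on the isolation status of $i$ and on whether $q \in P(i,r_i-1)\cup P(i,r_i)$, observe that $\alpha(t)$ is cancelled either by a surviving $\beta(t,i)$ or by the $\gamma(q,a)$-increments generated in $\update(\cdot,1)$/$\update(\cdot,2)$, and count the remaining contributing phases via the evolution of the $P(i,\cdot)$ sets. However, one of your justifications is wrong and your final accounting does not reach the stated constant. The appeal to ``monotonicity of $\I$ (once non-isolated, always non-isolated)'' is false: isolation is re-decided at the end of every global phase, and an agent whose marks were just erased can finish a later phase with fewer than $k_i$ marked pages and be designated \fettered\ again, so the isolation status can oscillate. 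What you actually need --- that within a single local phase of $i$ spanning global phases $r_1,\dots,r_2$ only $r=r_2-1$ (your Case C) and $r=r_2$ (your Case A) can contribute --- does hold, but it follows from the definition of a local phase (for $r_1 \le r < r_2$ the local phase did not reset, so $i$ had fewer than $k_i$ marked pages and hence $i \in \I(r)$), not from any monotonicity; the paper instead counts local-phase increments of $i$ against the fact that $q$ is never requested inside $I(q,a)$.

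Second, the arithmetic at the end is too loose for the lemma as stated. You count at most $2 \times 2 = 4$ contributing intermediate phases and then ``allow slack'' for the two boundary phases, where the cancellation structure is unavailable because $I(q,a)$ only partially overlaps them and each can contribute up to $1$; that yields $4+2=6$, not $\dualviolation$. The paper's accounting is $2$ for the boundary phases plus at most $3$ interior occurrences of the contributing cases, giving exactly $\dualviolation$. A constant of $6$ would still give $O(1)$-approximate feasibility and hence the competitive ratio, but it does not prove the inequality claimed here, so you must tighten the count (e.g., identify which of your four candidate phases coincide with, or are excluded by, the boundary phases). Relatedly, the indexing you deferred is exactly where care is needed: in your Case C the compensating $\gamma$-increment is applied during $\update(r+1,2)$ and is gated by the $P$-sets as they stand after the local-phase reset at the end of phase $r+1$, so your ``$q$ lies in the $P$-sets for at most two local phases'' claim has to be matched against those indices before the per-phase cancellations can be summed as you propose.
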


\begin{proof}
First of all, for the purposes of this proof, the specific values of $r_0$ and $s$ are irrelevant, so we ignore them. Fix some $i$-page $q$ and an integer $a$ satisfying $a \leq a(q,T)$. Recall that $I(q,a) = \{t_{q,a}+1, \dots, t_{q,a+1}-1 \}$, where $t_{q,a'}$ denotes the time when $q$ is requested for the $a'$th time; We redefine $t_{q,a+1}$ to be $T+1$ if $t_{q,a+1} > T$ holds. 

The lemma holds trivially if $I(q,a)$ is empty, so we assume otherwise. Let $r^b_0,r^e_0 \in [R_0]$ denote the global phases that contain timesteps $t_{q,a}+1$ and $t_{q,a+1}-1$, respectively. Clearly, $r^b_0 \leq r^e_0$. Another easy case of the lemma is when $r^e_0 \leq r^b_0 + 1$ holds. The desired conclusion follows easily because all the dual variables are nonnegative and the sum of all $\alpha(t)$ variables in any phase is at most $1$ (by Lemma~\ref{lem:totalbetaupdate}). Formally, 
\[
\sum_{t \in I(q,a)} \bigl(\alpha(t) - \beta(t,i) \bigr) - \gamma(q,a) \leq \sum_{t \in \text{phase } r^b_0} \alpha(t) + \sum_{t \in \text{phase } r^e_0} \alpha(t) \leq 2.
\]

Now suppose that $r^b_0 + 2 \leq r^e_0$ holds. Define $Z := \{r^b_0+1,\dots,r^e_0-1\}$. 
Repeating the above calculation, we get: 
\[
\sum_{t \in I(q,a)} (\alpha(t) - \beta(t,i)) - \gamma(q,a) \leq 2 + \left\{ \sum_{r_0 \in Z} \sum_{t \in \text{phase } r_0} \bigl(\alpha(t) - \beta(t,i) \bigr) \right\} - \gamma(q,a),
\] 
so the crux of the lemma is to show that the sum of $\delta(t) \triangleq \alpha(t) - \beta(t,i)$ over timepoints spanning phases in $Z$ is not much larger than $\gamma(q,a)$. For a phase $r_0 \in Z$, we overload the notation $\delta(r_0)$ to mean $\sum_{t \in \text{phase } r_0} \delta(t)$. Note that by nonnegativity of $\beta$ variables and Lemma~\ref{lem:firstupdate}, $\delta(r_0) \leq 1$ for every $r_0 \in Z$. We do a case analysis on phase $r_0 \in Z$ to get a better handle on the changes that happens during our dual update procedures.

\begin{enumerate}[(a)]
    \item Suppose that $i \in \I(r_0) \cap \I(r_0+1)$ holds. Since $i$ is \fettered\ by the end of phase $r_0$, we know that any increase in $\alpha(t)$ (as part of $\update(r_0,1)$) for some $t \in C(r_0)$ is accompanied with the same increase in $\beta(t,i)$. Since $i \in \I(r_0+1)$ holds, $\update(r_0+1,2)$ does not decrease/reset any of the $\{\beta(t,i)\}_{t \in \text{phase } r_0}$ variables to $0$. Thus, $\delta(r_0) = 0$ holds. Note that there can be an arbitrary number of phases $r_0$ that fall under this case, but this is not a problem for us since $\delta(r_0) = 0$. 
    
    \item Suppose that $i \in \I(r_0) \setminus \I(r_0+1)$ and $q \in P(i,r_i-1) \cup P(i,r_i)$ hold. We rely on the trivial bound $\delta(r_0) \leq 1$ for this case. Since $i$ is \fettered\ by the end of phase $r_0$ but is \full\ by the end of phase $r_0+1$, the local phase counter $r_i$ goes up by $1$ at the end of phase $r_0+1$, and subsequently the $P(i,\cdot)$ set gets updated with some new  collection of marked $i$-pages. By definition of $I(q,a)$, there are no page-requests for $q$ during any of the phases in $Z$. Thus, there can be at most $2$ local phase increments for agent $i$ before $q$ gets dropped from the $P(i,\cdot)$ set; By the design of $\A$, page $q$ cannot enter any of the future $P(i,r_i)$ until the next time it is requested, which does not happen during any of the phases in $Z$.
    
    \item Suppose that $i \in \I(r_0) \setminus \I(r_0+1)$ and $q \notin P(i,r_i-1) \cup P(i,r_i)$ hold. Similar to case~(a) above, we know that any increase in $\alpha(t)$ (as part of $\update(r_0,1)$) for some $t \in C(r_0)$ is accompanied with the same increase in $\beta(t,i)$. Now, although $\update(r_0+1,2)$ decreases/resets all the $\{\beta(t,i)\}_{t \in C(r_0)}$ variables to $0$, it also increases $\gamma(q,a)$ by the same amount since $q \notin P(i,r_i-1) \cup P(i,r_i)$. Thus, $\sum_{t \in \text{phase } r_0} \alpha(t)$ equals the increase in $\gamma(q,a)$ due to $\update(r_0+1,2)$. So, the difference is essentially $0$.
    
    \item Suppose $i \notin \I(r_0)$ and $q \in P(i,r_i-1) \cup P(i,r_i)$ hold. We rely on the trivial bound $\delta(r_0) \leq 1$ for this case. Since $i$ is \full\ by the end of phase $r_0$, the local phase counter $r_i$ goes up by $1$ at the end of phase $r_0$. Repeating the argument from case~(c), there can be at most $1$ more local phase increment for agent $i$ before $q$ gets dropped from the $P(i,\cdot)$ set for the rest of the phases in $Z$.
    
    \item Suppose $i \notin \I(r_0)$ and $q \notin P(i,r_i-1) \cup P(i,r_i)$ hold. Since $i$ is \full\ by the end of phase $r_0$ and $q$ is not in $P(i,r_i-1) \cup P(i,r_i)$, we know that any increase in $\alpha(t)$ (as part of $\update(r_0,1)$) for some $t \in C(r_0)$ is accompanied with the same increase in $\gamma(q,a)$.
    Thus, $\sum_{t \in \text{phase } r_0} \alpha(t)$ equals the increase in $\gamma(q,a)$ due to $\update(r_0,1)$. So, the difference is essentially $0$.
    
\end{enumerate}

From the above case analysis, it follows that $\{\sum_{r_0 \in Z} \sum_{t \in \text{phase } r_0} \bigl(\alpha(t) - \beta(t,i) \bigr)\} - \gamma(q,a)$ is bounded by the number of phases $r_0 \in Z$ for which case~(b)~or~(d) hold. Since we argued that there can be at most $3$ such occurences, $\sum_{t \in I(q,a)} (\alpha(t) - \beta(t,i)) - \gamma(q,a) \leq 2 + 3 =  \dualviolation$ holds.
\end{proof}

We now prove the main theorem in this section by combining the above lemmas. 

\begin{proof}[Proof of Theorem \ref{thm:dualguarantee}] The first part of the theorem follows from Lemma \ref{lem:dualfeasible}. 
The second part follows directly from Lemmas \ref{lem:firstupdate} and \ref{lem:dual-secondstage} since we have
$\dualcost(R_0) = \sum_{r_0=1}^{R_0} \bigl( \Delta \dualcost(r_0, 1) + \Delta \dualcost(r_0, 2) \bigr)$
\end{proof}
\section{Rounding} \label{sec:rounding} 

In this section we show how to convert the fractional Algorithm \ref{alg:water-filling} into a randomized integral online algorithm for \emph{Caching with Reserves}, each step of which runs in polynomial time. 

The algorithm maintains a uniform distribution on $N = k^3$ valid cache states. In each step, these states are updated based on the actions of the fractional algorithm. The randomized algorithm selects one of these states uniformly at random in the beginning, and then follows it throughout the run.

Initially, all $N$ cache states in the distribution are the same as the initial cache state in Algorithm \ref{alg:water-filling}. 
Given the fractional algorithm values $x^t_p$ after each page request, the distribution is updated in two steps. First, we produce a discretized version of these fractions, $\tilde{x}^t_p$, which are also a feasible fractional solution. This is based on the technique in \cite{adamaszek2018log}. Second, we update the $N$ cache states so that all of them remain valid, and for each page $p$, exactly $N \cdot \tilde{x}^t_p$ of the states contain $p$. This is based on the technique in \cite{ibrahimpur2022caching}. We note that the rounding procedure can be done online, as it does not need the knowledge of any future page requests.

\subsection{Discretization Procedure}

\newcommand{\round}[2]{\left \lfloor #1 \right \rfloor_{#2}}

In this subsection, we explain how to perform the first step: discretizing the fractional algorithm's values $x^t_p$ into $\tilde{x}^t_p$ which are multiples of $\frac{1}{N}$. Our procedure is quite simple: we iterate over the pages in any order $\pi$ that arranges all pages belonging to the same agent consecutively (i.e., order the agents arbitrarily and order each agent's pages arbitrarily, but do not interleave pages from different agents). Then for $i \in [|\U|]$, set:
\begin{align*}
  \tilde{x}^t_{\pi(i)} &\triangleq
    \round{\sum_{j=1}^i x^t_{\pi(j)}}{1/N} - \round{\sum_{j=1}^{i-1} x^t_{\pi(j)}}{1/N}
\end{align*}
where $\round{a}{b}$ denotes rounding $a$ down to the nearest multiple of $b$; formally: $\round{a}{b} \triangleq b \left\lfloor a/b \right \rfloor$.

\begin{lemma} \label{lem:discretization}
Discretization satisfies the following guarantees:
\begin{enumerate}
  \item $\tilde{x}^t_p$ is a multiple of $1/N$
  \item $\left| \tilde{x}^t_p - x^t_p \right| < 1/N$
  \item for each agent $i$, $\left| \sum_{p \in \U(i)} \tilde{x}^t_p - \sum_{p \in \U(i)} x^t_p \right| < 1/N$
  \item if $x^t_p \in \{0, 1\}$, then $\tilde{x}^t_p = x^t_p$
\end{enumerate}
\end{lemma}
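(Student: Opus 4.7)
The proof plan rests on the simple observation that the discretized values are defined as telescoping differences of rounded prefix sums. Let $S_i \triangleq \sum_{j=1}^i x^t_{\pi(j)}$ and $\tilde{S}_i \triangleq \round{S_i}{1/N}$, with $S_0 = \tilde{S}_0 = 0$. Then by definition $\tilde{x}^t_{\pi(i)} = \tilde{S}_i - \tilde{S}_{i-1}$, while $x^t_{\pi(i)} = S_i - S_{i-1}$. Property~(1) is immediate: each $\tilde{S}_i$ is a multiple of $1/N$, so their difference is as well.

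The key technical fact I will use for properties~(2) and~(3) is that rounding down satisfies $\round{a}{1/N} \in (a - 1/N, a]$ for any $a \geq 0$, so the error $e_i \triangleq \tilde{S}_i - S_i$ lies in $(-1/N, 0]$ for every $i$. Both properties are differences of two such errors, at which point the bound follows by telescoping. Specifically, $\tilde{x}^t_{\pi(i)} - x^t_{\pi(i)} = e_i - e_{i-1} \in (-1/N, 1/N)$, which gives~(2). For~(3), I use the fact that the ordering $\pi$ groups each agent's pages into a consecutive block, say indices $\{a_i, \dots, b_i\}$ for agent $i$. Then $\sum_{p \in \U(i)} \tilde{x}^t_p - \sum_{p \in \U(i)} x^t_p = (\tilde{S}_{b_i} - \tilde{S}_{a_i - 1}) - (S_{b_i} - S_{a_i - 1}) = e_{b_i} - e_{a_i-1}$, which again lies in $(-1/N, 1/N)$. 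The reliance on consecutive ordering is the only conceptual point worth highlighting here, and is also what motivated the particular choice of $\pi$ in the algorithm.

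For property~(4), suppose $x^t_{\pi(i)} \in \{0,1\}$. If $x^t_{\pi(i)} = 0$, then $S_i = S_{i-1}$, so $\tilde{S}_i = \tilde{S}_{i-1}$ and $\tilde{x}^t_{\pi(i)} = 0$. If $x^t_{\pi(i)} = 1$, then $S_i = S_{i-1} + 1$; since $1$ is itself a multiple of $1/N$, rounding commutes with adding $1$, giving $\tilde{S}_i = \tilde{S}_{i-1} + 1$ and hence $\tilde{x}^t_{\pi(i)} = 1$. There is no real obstacle in this proof — it is essentially bookkeeping around the rounding operator — but the clean way to present all four parts uniformly is to isolate the prefix-sum/error decomposition first and then derive each property as a one-line consequence.
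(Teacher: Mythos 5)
Your proof is correct and follows essentially the same route as the paper: both arguments reduce each guarantee to a difference of rounding errors of prefix sums, with the rounding-down error confined to an interval of length $1/N$, the consecutive-block ordering used for guarantee (3), and the integer case handled by noting the rounding affects consecutive prefix sums identically. Your explicit $e_i$ notation is just a cleaner packaging of the paper's inline telescoping computation.
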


\begin{proof}
The first guarantee, $\tilde{x}^t_p$ is a multiple of $1/N$, follows from $\round{\cdot}{1/N}$ being a multiple of $1/N$ by definition.

We now prove the second guarantee, $\left| \tilde{x}^t_p - x^t_p \right| < 1/N$. Let $i$ be $\pi^{-1}(p)$, and we get:
\begin{align*}
  a - \round{a}{b}
    &= a - b \left\lfloor a/b \right \rfloor \\
    &\in \bigl[ a - b(a/b), a - b(a/b-1) \bigr) \\
    &= [ 0, b ) \\
  \tilde{x}^t_p - x^t_p
    &= \tilde{x}^t_{\pi(i)} - x^t_{\pi(i)} \\
    &= \underbrace{\left[
        \round{\sum_{j=1}^i x^t_{\pi(j)}}{1/N} -
        \round{\sum_{j=1}^{i-1} x^t_{\pi(j)}}{1/N}
      \right]}_{\tilde{x}^t_{\pi(i)}} - \underbrace{\left[
        \sum_{j=1}^i x^t_{\pi(j)} - \sum_{j=1}^{i-1} x^t_{\pi(j)}
      \right]}_{x^t_{\pi(i)}} \\
    &= \left[ 
        \sum_{j=1}^{i-1} x^t_{\pi(j)} -
        \round{\sum_{j=1}^{i-1} x^t_{\pi(j)}}{1/N}
      \right] - \left[
        \sum_{j=1}^i x^t_{\pi(j)} -
        \round{\sum_{j=1}^i x^t_{\pi(j)}}{1/N}
      \right] \\
    &\in \left(-\frac{1}{N}, \frac{1}{N}\right) \\
  \left| \tilde{x}^t_p - x^t_p \right| &< \frac{1}{N}
\end{align*}

The proof of the third guarantee is essentially the same. It is here we use the fact that each agent's pages are consecutive in $\pi$. Suppose our agent's pages are $\{\pi(\ell), ..., \pi(h)\}$:
\begin{align*}
  \sum_{p \in \U(i)} \tilde{x}^t_p - \sum_{p \in \U(i)} x^t_p
    &= \sum_{i=\ell}^h \tilde{x}^t_{\pi(i)} - \sum_{i=\ell}^h x^t_{\pi(i)} \\
    &= \left[
        \round{\sum_{j=1}^h x^t_{\pi(j)}}{1/N} -
        \round{\sum_{j=1}^{\ell-1} x^t_{\pi(j)}}{1/N}
      \right] - \left[
        \sum_{j=1}^h x^t_{\pi(j)} - \sum_{j=1}^{\ell-1} x^t_{\pi(j)}
      \right] \\
    &= \left[
        \sum_{j=1}^{\ell-1} x^t_{\pi(j)} -
        \round{\sum_{j=1}^{\ell-1} x^t_{\pi(j)}}{1/N}
      \right] - \left[
        \sum_{j=1}^h x^t_{\pi(j)} -
        \round{\sum_{j=1}^h x^t_{\pi(j)}}{1/N}
      \right] \\
    &\in \left(-\frac{1}{N}, \frac{1}{N}\right) \\
  \left| \sum_{p \in \U(i)} \tilde{x}^t_p - \sum_{p \in \U(i)} x^t_p \right| &< \frac{1}{N}
\end{align*}
The fourth guarantee follows by noting that if $x^t_{\pi(i)}$ is an integer, then the $\round{\cdot}{1/N}$ operator decreases $\sum_{j=1}^i x^t_{\pi(j)}$ as much as it decreases $\sum_{j=1}^{i-1} x^t_{\pi(j)}$.
\end{proof}



\begin{corollary} \label{cor:reserve}
  If $\{x^t_p\}$ satisfy total cache capacity ($\sum_{p \in \U} x^t_p \le k$) and reserve requirements ($\sum_{p \in \U(i)} x^t_p \ge k_i$), then so do  $\{\tilde{x}^t_p\}$.
\end{corollary}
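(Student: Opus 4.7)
The plan is to derive both bounds directly from Lemma~\ref{lem:discretization}, using the integrality of $k$ and of each $k_i$ together with the fact that $\tilde{x}^t_p$ takes values in $(1/N)\cdot\mathbb{Z}$.

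For the cache capacity bound, I would exploit the telescoping structure of the discretization. Summing $\tilde{x}^t_{\pi(i)}$ over $i\in[|\U|]$ collapses to
\[
\sum_{p \in \U} \tilde{x}^t_p \;=\; \round{\sum_{j=1}^{|\U|} x^t_{\pi(j)}}{1/N} - \round{0}{1/N} \;=\; \round{\sum_{p \in \U} x^t_p}{1/N}.
\]
Since $\round{\cdot}{1/N}$ is a floor operation, this is at most $\sum_{p \in \U} x^t_p \le k$, giving the cache capacity bound immediately.

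For the reserve constraints, the idea is to combine parts (1) and (3) of Lemma~\ref{lem:discretization} with the integrality of $k_i$. By part (3),
\[
\sum_{p \in \U(i)} \tilde{x}^t_p \;>\; \sum_{p \in \U(i)} x^t_p - \tfrac{1}{N} \;\ge\; k_i - \tfrac{1}{N}.
\]
By part (1), each $\tilde{x}^t_p$ is a multiple of $1/N$, so $\sum_{p \in \U(i)} \tilde{x}^t_p$ is itself a multiple of $1/N$. Since $k_i$ is a nonnegative integer, $k_i$ is also a multiple of $1/N$, and the largest multiple of $1/N$ that is strictly less than $k_i$ is exactly $k_i - 1/N$. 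Hence any multiple of $1/N$ that exceeds $k_i - 1/N$ must be at least $k_i$, yielding $\sum_{p \in \U(i)} \tilde{x}^t_p \ge k_i$.

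There is no real obstacle here; the only subtle point is the conversion of the strict inequality from Lemma~\ref{lem:discretization}(3) into the weak inequality $\ge k_i$, which is precisely where the discretization granularity $1/N$ and the integrality of $k_i$ have to be invoked together. This is why the ordering $\pi$ in the discretization procedure was chosen to keep each agent's pages consecutive: it ensures the telescoping argument applies agent-by-agent, so that the error from rounding remains strictly below $1/N$ on each agent's total rather than accumulating.
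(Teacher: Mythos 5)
Your proposal is correct and follows essentially the same route as the paper: the capacity bound via the telescoping of the rounded prefix sums, and the reserve bound by combining Lemma~\ref{lem:discretization}(1) and (3) with the integrality of $k_i$ and the $1/N$ granularity (the paper phrases this last step as a proof by contradiction, but it is the same argument).
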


\begin{proof}
  The total cache capacity constraint continues to hold due to a telescoping argument:
  \begin{align*}
    \sum_{p \in \U} \tilde{x}^t_p
      &= \sum_{i \in [|\U|]} \tilde{x}^t_{\pi(i)} 
      = \sum_{i \in [|\U|]} \left[
        \round{\sum_{j=1}^i x^t_{\pi(j)}}{1/N} - \round{\sum_{j=1}^{i-1} x^t_{\pi(j)}}{1/N}
      \right] \\
      &= \round{\sum_{j=1}^{|\U|} x^t_{\pi(j)}}{1/N} 
      \le \sum_{j=1}^{|\U|} x^t_{\pi(j)}
      \le k
  \end{align*}

  Next, we will prove that reserve cache sizes are satisfied. For the sake of contradiction, suppose that for some agent $i$, $\sum_{p \in \U(i)} \tilde{x}^t_p < k_i$. Since the right-hand side of this inequality is an integer and therefore a multiple of $1/N$, the left-hand side, which is also a multiple of $1/N$ due to being a sum of multiples of $1/N$ (by Lemma \ref{lem:discretization}'s first guarantee), must be at least a full multiple of $1/N$ less than the right-hand side: $\sum_{p \in \U(i)} \tilde{x}^t_p \le k_i - 1/N$. But this contradicts Lemma \ref{lem:discretization}'s third guarantee, $\left| \sum_{p \in \U(i)} \tilde{x}^t_p - \sum_{p \in \U(i)} x^t_p \right| < 1/N$. Therefore for all agents $i$, $\sum_{p \in \U(i)} \tilde{x}^t_p \ge k_i$, completing the proof.
\end{proof}

\begin{corollary} \label{cor:cost}
  $\sum_p \left| \tilde{x}^t_p - x^t_p \right| \le k^2 / N$
\end{corollary}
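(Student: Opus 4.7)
The plan is to reduce the claim to a counting problem on the fractional $x^t_p$-values. First, I would invoke the fourth guarantee of Lemma~\ref{lem:discretization}: for any page $p$ with $x^t_p \in \{0,1\}$, we have $\tilde{x}^t_p = x^t_p$, so such pages contribute $0$ to the sum. Thus it suffices to bound $\sum_{p \in S} |\tilde{x}^t_p - x^t_p|$, where $S := \{p : x^t_p \in (0,1)\}$. By the second guarantee, each such term is strictly less than $1/N$, so $\sum_p |\tilde{x}^t_p - x^t_p| < |S|/N$. The task therefore reduces to proving $|S| \leq k^2$.

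To bound $|S|$, I would establish a dual version of Lemma~\ref{lem:ylarge}: at the end of any time step $t$, every page $p$ satisfies $x^t_p = 0$ or $x^t_p \geq 1/k$. Combined with the cache-size equality $\sum_p x^t_p = k$, this immediately gives $|S|/k \leq \sum_{p \in S} x^t_p \leq k$, so $|S| \leq k^2$ and the corollary follows. The dual lower bound uses a case analysis mirroring the proof of Lemma~\ref{lem:ylarge}. By invariant~(iii) of Lemma~\ref{lem:invariants}, all unmarked pages of a given agent share a common $x$-value. For a tight isolated agent $i$ with $|M(i,t)| < k_i$, the unmarked pages have $x^t_p = (k_i - |M(i,t)|)/|U(i,t)|$; since the numerator is at least $1$ and $|U(i,t)| \leq |P(i,r_i-1)| \leq k$, this gives $x^t_p \geq 1/k$. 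For a frontier page (an unmarked page of a non-tight non-isolated agent), Lemma~\ref{lem:frontier} gives $x^t_p = 1 - h^*_t$, and the integrality of $|U(t)|(1 - h^*_t)$ (analogous to the integrality of $|U(t)| h^*_t$ used in the proof of Lemma~\ref{lem:ylarge}) yields $x^t_p = 0$ or $x^t_p \geq 1/|U(t)| \geq 1/k$. The remaining unmarked pages satisfy $x^t_p = 0$: either by Lemma~\ref{lem:tight} (for agents that were isolated at the start of the phase but became non-isolated), or because a non-isolated tight agent necessarily has $|M(i,t)| \geq k_i$, forcing the common unmarked $x$-value to be $0$.

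The main obstacle is simply verifying that the case analysis is exhaustive and tracking which structural subcase applies to each agent at time $t$; the integrality-based lower bounds themselves follow the same two-line argument already used in Lemma~\ref{lem:ylarge}. Once the dual bound $x^t_p \in \{0\} \cup [1/k, 1]$ is in hand, the averaging inequality $|S| \leq k \sum_{p \in S} x^t_p \leq k \cdot k$ closes out the proof.
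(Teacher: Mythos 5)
Your overall conclusion is right, but you take a genuinely different route from the paper. The paper's proof is a direct count of fractional pages: without loss of generality there are at most $k$ agents (zero-reserve agents can be merged), and for each agent the only pages that can be strictly fractional are unmarked pages of $P(i,r_i-1)$, of which there are at most $k$; so at most $k^2$ pages are distorted, each by at most $1/N$ (guarantees 2 and 4 of Lemma~\ref{lem:discretization}). You instead prove a ``dual'' of Lemma~\ref{lem:ylarge}, namely $x^t_p \in \{0\} \cup [1/k,1]$, and combine it with $\sum_p x^t_p = k$ to get $|S| \le k^2$ by averaging. Your route is heavier (it redoes the integrality case analysis) but buys a cleaner counting step: it avoids the paper's ``at most $k$ agents'' reduction entirely, and the dual bound is a true statement that follows from essentially the same arguments as Lemma~\ref{lem:ylarge}.

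One sub-claim in your case analysis is wrong as stated: a \emph{non-isolated} tight agent need \emph{not} have $|M(i,t)| \ge k_i$. Isolation status is only assigned at the end of a phase, so an agent that was non-isolated at the start of the phase can become tight mid-phase with fewer than $k_i$ (possibly zero) marked pages, and then its unmarked pages carry the strictly positive common value $(k_i - |M(i,t)|)/|U(i,t)|$, not $0$. This does not sink your proof, because such pages are covered by exactly the tight-agent formula you already wrote down in your first case (which in fact does not use isolation at all): the numerator is a positive integer and $|U(i,t)| \le |P(i,r_i-1)| \le k$, so the value is at least $1/k$. The correct case split is therefore: tight agents (common unmarked value $(k_i-|M(i,t)|)/|U(i,t)|$, which is $0$ or $\ge 1/k$); non-tight agents that were non-isolated at the start of the phase (frontier value $1-h^*_t$, which is $0$ or $\ge 1/k$ by the integrality of $|U(t)|(1-h^*_t)$); and non-tight agents that were isolated at the start of the phase, whose unmarked pages have $x=0$ by Lemma~\ref{lem:tight}; all other pages are marked ($x=1$) or never fetched this local phase ($x=0$). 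With that repair your argument is complete and correct.
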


\begin{proof}
  Without loss of generality, there are at most $k$ agents since we can combine all agents that do not have any reserve. Each agent $i$ has at most $k$ fractional pages by the algorithm (the $i$-pages that were in cache when $i$'s local phase began). The $x$ value of each fractional page is distorted by at most $1/N$ by Lemma \ref{lem:discretization}'s second guarantee, while not being distorted for non-fractional pages by the fourth guarantee. This completes the proof.
\end{proof}


\begin{lemma} \label{lem:cost-discret}
  Let $x^t_p$ and $x^{t+1}_p$ be the amounts of each page $p$ in cache in the fractional algorithm for two consecutive time steps, and $\tilde{x}^t_p$ and $\tilde{x}^{t+1}_p$ be the corresponding discretized values. Then the cost of cache update from $\tilde{x}^t$ to $\tilde{x}^{t+1}$ (call it $\tilde{c}$) is at most twice the cost of cache update from $x^t$ to $x^{t+1}$ (call it $c$).
\end{lemma}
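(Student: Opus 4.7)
The plan is to express both $c$ and $\tilde c$ as one-half of the corresponding $\ell_1$ change across pages, bound the discretized change by the triangle inequality, and absorb the small discretization error using Lemma~\ref{lem:ylarge}. The starting observation is that both cache states are exactly full: Algorithm~\ref{alg:water-filling} maintains $\sum_p x^t_p = k$ (the complement of $\sum_p y^t_p = |\U| - k$), and the telescoping calculation in the proof of Corollary~\ref{cor:reserve} yields $\sum_p \tilde x^t_p = \round{k}{1/N} = k$. Since $\sum_p(x^{t+1}_p - x^t_p) = \sum_p(\tilde x^{t+1}_p - \tilde x^t_p) = 0$, the positive and negative parts of each signed sum balance, so $2c = \sum_p |x^{t+1}_p - x^t_p|$ and $2\tilde c = \sum_p |\tilde x^{t+1}_p - \tilde x^t_p|$.

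With this setup, I would apply the triangle inequality pagewise through the identity $\tilde x^{t+1}_p - \tilde x^t_p = (x^{t+1}_p - x^t_p) + (\tilde x^{t+1}_p - x^{t+1}_p) - (\tilde x^t_p - x^t_p)$ and sum over all pages to get
\[
2\tilde c \;\leq\; 2c \;+\; \sum_{p \in \U} |\tilde x^{t+1}_p - x^{t+1}_p| \;+\; \sum_{p \in \U} |\tilde x^t_p - x^t_p| \;\leq\; 2c + 2k^2/N
\]
by two applications of Corollary~\ref{cor:cost}. Substituting the chosen value $N = k^3$ then simplifies this to $\tilde c \leq c + 1/k$.

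To finish, I would split on the value of $c$. If $c = 0$ then $x^t_{p_{t+1}} = 1$ and no fractional mass moves during step $t+1$, so $\tilde x^{t+1} = \tilde x^t$ (since the discretization is a deterministic function of the cache state) and $\tilde c = 0 = 2c$. Otherwise $c = y^t_{p_{t+1}} > 0$, and Lemma~\ref{lem:ylarge} applied at the end of step $t$ forces $c \geq 1/k$, whence $\tilde c \leq c + 1/k \leq 2c$. The main obstacle is that the factor of $2$ is tight precisely when $c = 1/k$, so every bit of slack from the choice $N = k^3$ is consumed; it is essential that the quantization guarantee of Lemma~\ref{lem:ylarge} rules out intermediate costs in $(0, 1/k)$, because a pure triangle-inequality argument would otherwise only produce $\tilde c \leq c + O(1/k)$, which is not a constant-factor bound when $c$ itself is tiny.
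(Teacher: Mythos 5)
Your proposal is correct and follows essentially the same route as the paper's own proof: both express the costs as half the pagewise $\ell_1$ change, apply the triangle inequality together with Corollary~\ref{cor:cost} and $N=k^3$ to get $\tilde c \le c + 1/k$, and then invoke Lemma~\ref{lem:ylarge} to split into the cases $c=0$ and $c \ge 1/k$. The only (harmless) addition is your explicit verification that the caches stay exactly full so that cost equals half the $\ell_1$ distance, which the paper states without the telescoping detail.
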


\begin{proof}
  Lemma \ref{lem:ylarge} implies that either $c=0$ (i.e., the requested page was already in cache and there is no change to the cache state), or $c \ge 1/k$. In the first case, there is no change to the discretized cache state either, so $\tilde{c}=0$. So we focus on the second case. By triangle inequality, for any page $p$,
  \[
      |\tilde{x}^t_p - \tilde{x}^{t+1}_p| \le 
      |\tilde{x}^t_p - x^t_p| + |x^t_p - x^{t+1}_p| + |x^{t+1}_p - \tilde{x}^{t+1}_p|.
  \]
  We note that since cost is incurred for adding pages to cache, and both the original fractional solution and the discretized one add as much page mass to cache as they evict, $2c =  \sum_p |x^t_p - x^{t+1}_p|$, and similarly for $\tilde{c}$. Summing the above inequality over $p$, we get
  \[
    2\tilde{c} =  \sum_p |\tilde{x}^t_p - \tilde{x}^{t+1}_p| \le
    \sum_p |x^t_p - x^{t+1}_p| + 2k^2/N = 2 c + 2/k \le 4c,
  \]
  where we used $\sum_p (|\tilde{x}^t_p - x^t_p|+ |x^{t+1}_p - \tilde{x}^{t+1}_p|) \leq 2k^2/N$ by Corollary \ref{cor:cost}, then $N = k^3$, then $c \ge 1/k$.
\end{proof}

\subsection{Updating the Distribution of Cache States}

In this subsection, we explain how to perform the second step: updating the $N$ cache states. We would like (i) all cache states to be valid, (ii) exactly $N \cdot \tilde{x}^t_p$ (integral due to our discretization step) of the states to contain page $p$, and (iii) to not use too many evictions.

Formally, let $\X$ be a set of $N$ cache states with $k$ pages each, which corresponds to the discretized values $\tilde{x}^t_p$ for time step $t$. Given the discretized values $\tilde{x}^{t+1}_p$ for time step $t+1$, we show how to transform $\X$ into $\X'$, in which each page $p$ appears in exactly $N \cdot \tilde{x}^{t+1}_p$ cache states and such that each cache state satisfies all the reserve requirements.

Let $P$ be a multiset of pages whose fraction in the cache increased from time $t$ to $t+1$, with each page $p$ appearing $\max(0, (\tilde{x}^{t+1}_p - \tilde{x}^{t}_p)N)$ times. Let $Q$ be an analagous multiset for decreases, with each page appearing $\max(0, (\tilde{x}^{t}_p - \tilde{x}^{t+1}_p)N)$ times. Since the total amount of pages in the cache is unchanged, $|P| = |Q|$. We find a matching between pages in $P$ and $Q$ and use it to transform $\X$ into $\X'$ gradually, one pair at a time. The matching is constructed as follows. First, any pages from $P$ and $Q$ that belong to the same agent are matched up. Then, the remaining pages in $P$ and $Q$ are matched up arbitrarily.

\begin{lemma} \label{lem:req}
Let $(p_1, q_1), (p_2, q_2), ...$ be the matching between $P$ and $Q$ described above. Then for any $j$, the fractional solution that adds $1/N$ fraction of pages $p_1,...,p_j$ to $\tilde{x}^{t}$ and removes $1/N$ fraction of pages $q_1,...,q_j$ from it satisfies all the reserve requirements.
\end{lemma}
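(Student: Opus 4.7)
The plan is to check two feasibility conditions as we process the $j$ pairs: total cache capacity $\sum_{p \in \U} x_p \leq k$ and per-agent reserve $\sum_{p \in \U(i)} x_p \geq k_i$ for each $i$. The first is immediate: every pair swap adds $1/N$ to one coordinate and removes $1/N$ from another, so the total mass is invariant at $\sum_p \tilde{x}^t_p \leq k$ (using Corollary~\ref{cor:reserve} at time $t$). The interesting part is the per-agent reserve, which I would prove via a monotonicity argument enabled by the rule that same-agent pairs are matched first.

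Fix an agent $i$ and let $a_i, b_i$ be the number of multiset copies of $i$-pages in $P$ and $Q$ respectively. The same-agent-first rule means the first $\min(a_i, b_i)$ pairs involving $i$ add and remove an $i$-page simultaneously, leaving $\sum_{p \in \U(i)} x_p$ unchanged; any remaining cross-agent pair touching $i$ either always adds an $i$-page (if $a_i > b_i$) or always removes one (if $a_i < b_i$), since all ``excess'' lies on one side. Thus agent $i$'s mass evolves monotonically through the $j$ swaps, regardless of the order in which cross-agent pairs are processed. A simple two-case split then finishes the argument: if $a_i \geq b_i$, the intermediate mass is at least the starting mass $\sum_{p \in \U(i)} \tilde{x}^t_p \geq k_i$ (Corollary~\ref{cor:reserve} at time $t$); if $a_i < b_i$, the mass decreases monotonically from $\sum_{p \in \U(i)} \tilde{x}^t_p$ down to $\sum_{p \in \U(i)} \tilde{x}^{t+1}_p$, so at any intermediate step it is at least $\sum_{p \in \U(i)} \tilde{x}^{t+1}_p \geq k_i$ (Corollary~\ref{cor:reserve} at time $t+1$).

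The main conceptual obstacle is recognizing that the same-agent-first matching rule is precisely what makes the per-agent mass \emph{monotone} during the $j$-step process. Without it, a single agent could both gain and lose mass at different intermediate steps, and we would have to argue feasibility at each step by comparing against both endpoints while carrying along some worst-case slack; here, monotonicity lets us compare only to the ``lower'' endpoint (whichever of $\tilde{x}^t$ and $\tilde{x}^{t+1}$ has smaller mass at $i$), and feasibility of that endpoint is exactly what Corollary~\ref{cor:reserve} guarantees. No ordering assumption on the cross-agent pairs is needed beyond same-agent-first.
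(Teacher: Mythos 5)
Your proof is correct and uses essentially the same idea as the paper: the same-agent-first matching rule makes each agent's page mass change monotonically across the swaps, so every intermediate state is sandwiched between $\tilde{x}^t$ and $\tilde{x}^{t+1}$, both of which are feasible by Corollary~\ref{cor:reserve}. The paper phrases this as an induction over pairs while you phrase it as a per-agent case split on $a_i$ versus $b_i$, but the substance is identical.
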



\begin{proof}
As the pairs are applied, the fractional solution transforms from $\tilde{x}^{t}$ to $\tilde{x}^{t+1}$, both of which satisfy the reserve requirements (by Corollary \ref{cor:reserve}). Inductively, assume that when pairs up to $(p_{j-1}, q_{j-1})$ have been applied, the reserved requirements are still satisfied. If $p_j$ and $q_j$ belong to the same agent $i$, then amount of $i$-pages in the cache doesn't change, thus still satisfying the requirement. If $p_j$ and $q_j$ belong to different agents, then we must be in the second phase of our matching construction. This means that there are no more pages in $P$ belonging to $ag(q_j)$, so the amount of $ag(q_j)$-pages in cache does not increase while applying the remainder of the matching pairs. This means that after applying $(p_j, q_j)$, the amount of $ag(q_j)$-pages in the cache is somewhere between what it has in $\tilde{x}^{t}$ and in $\tilde{x}^{t+1}$. Since both of those satisfy the reserve requirements, so does the intermediate solution.
\end{proof}

We now show how to modify $\X$ with the next pair $(p, q)$ from the matching. This follows the procedure in \cite{ibrahimpur2022caching}, with the difference that we work on a limited number of $N$ sets, and the amount of increase in $p$ and decrease in $q$ is fixed at $1/N$.

Let $\X$ be the current set of cache states (possibly modified by the previous page pairs). 
If there is a cache state $S\in\X$ such that $p\notin S$ and $q\in S$, add $p$ to $S$ and remove $q$ from $S$. Otherwise, find cache states $S\in \X$ and $T\in \X$ with $p\notin S$ and $q\in T$,  add $p$ to $S$ and remove $q$ from $T$. Next, move some page $r \in S \setminus T$ from $S$ to $T$ to adjust the set sizes back to $k$.

At this point, each page is in the correct number of cache states. However, reserve requirements could be violated by one page for $ag(q)$ or $ag(r)$ in the cache states from which the corresponding pages were removed.
In such a case, suppose the requirement is violated for agent $i$ in a cache state $V\in\X$. Since, by Lemma \ref{lem:req}, each reserve requirement is satisfied on average, there must be another set $W \in \X$ which has strictly more than $k_i$ pages belonging to agent $i$. 
We move one such page from $W$ to $V$. Now $V$ has $k+1$ pages, so there must be an agent $j$ which has more than $k_j$ pages in $V$. We move one of $j$'s pages from $V$ to $W$ to restore the sizes. 
This completes the update, resulting in new valid sets corresponding to the fractions $\tilde{x}^{t+1}$.

We conclude by bounding the cost of update to $\X$, and thus the expected cost of the randomized algorithm, relative to the cost of the fractional algorithm.

\begin{lemma}\label{lem:cost-rand}
The cost of update to $\X$ is at most 6 times the cost of fractional cache update from $\tilde{x}^t$ to $\tilde{x}^{t+1}$.
\end{lemma}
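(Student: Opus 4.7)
The plan is to compare the expected cost of the randomized algorithm, which equals the total number of page insertions across all $N$ cache states in $\X$ divided by $N$, against the fractional cost $\tilde c$ of moving from $\tilde{x}^t$ to $\tilde{x}^{t+1}$. Since each of the $|P|$ matched pairs corresponds to an increase of exactly $1/N$ in some page's fraction, $\tilde c = |P|/N$, so it suffices to prove that the total number of insertions across $\X$ during the update is at most $6|P|$.

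I would process the matching pairs $(p,q)$ one at a time and bound the insertions each pair causes. The swap operation itself contributes at most two insertions: in the single-state subcase exactly one insertion happens ($p$ enters the chosen state $S$), while in the two-state subcase there are two (namely $p$ enters $S$ and the auxiliary page $r \in S\setminus T$ enters $T$).

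Next I would bound the reserve-repair step, which can be triggered in at most two states: the state that lost $q$ (possibly violating the reserve of $ag(q)$) and, in the two-state subcase, the state that lost $r$ (possibly violating the reserve of $ag(r)$). For each violation for some agent $i$ in a deficient state $V$, Lemma \ref{lem:req} guarantees that the average number of $i$-pages across $\X$ is at least $k_i$, so there exists a donor state $W$ holding strictly more than $k_i$ pages of $i$. The prescribed two-page exchange between $W$ and $V$ (one page of $i$ moves $W\to V$, then one page of some overfull agent in $V$ moves $V\to W$) contributes exactly two additional insertions and restores all cache sizes and reserves. Summed with the original swap, each pair accounts for at most $2 + 2\cdot 2 = 6$ insertions.

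Multiplying by $|P|$ pairs and dividing by $N$ yields an expected cost of at most $6|P|/N = 6\tilde c$. The most delicate part of the argument is verifying that the reserve repair is self-contained and does not cascade: the exchange is size-neutral on both $V$ and $W$, the donor $W$ started with strictly more than $k_i$ pages of $i$ so it retains at least $k_i$ after losing one, and the agent whose page moves back to $W$ only sees its count in $W$ grow by one, so no new deficit is introduced. This, together with the previous step, closes the bound.
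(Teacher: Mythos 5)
Your proof is correct and follows essentially the same argument as the paper: charge each matched pair $(p,q)$ a fractional cost of $1/N$ and bound the randomized update by at most six unit page movements per pair (the paper counts removals — $q$ from $T$, $r$ from $S$, plus two each for repairing the reserves of $ag(q)$ and $ag(r)$ — which is equivalent to your insertion count). Your extra check that the reserve repair is size-neutral and does not cascade is a nice explicit verification of what the paper leaves implicit in its description of the update procedure.
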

\begin{proof}
Each pair $(p, q)$ in the matching corresponds to a cost of $1/N$ incurred by the discretized fractional solution. In the updates to sets in $\X$, each time a page is removed from one of the cache states incurs a cost of $1/N$ to the randomized algorithm. At most, the following six removals are done: remove $q$ from $T$; remove $r$ from $S$; two pages each are swapped to fix the reserve requirements for $ag(q)$ and $ag(r)$. 
\end{proof}

The proof of Theorem \ref{thm:randomized} follows by combining Lemmas \ref{lem:cost-discret} and \ref{lem:cost-rand}.

\bibliography{caching}

\begin{thebibliography}{10}

\bibitem{achlioptas2000competitive}
Dimitris Achlioptas, Marek Chrobak, and John Noga.
\newblock {Competitive Analysis of Randomized Paging Algorithms}.
\newblock {\em Theoretical Computer Science}, 234(1):203--218, 2000.
\newblock \href {http://dx.doi.org/10.1016/S0304-3975(98)00116-9}
  {\path{doi:10.1016/S0304-3975(98)00116-9}}.

\bibitem{adamaszek2018log}
Anna Adamaszek, Artur Czumaj, Matthias Englert, and Harald R\"{a}cke.
\newblock {An $O(\log k)$-Competitive Algorithm for Generalized Caching}.
\newblock {\em ACM Transactions on Algorithms}, 15(1):6:1--6:18, 2018.
\newblock \href {http://dx.doi.org/10.1145/3280826}
  {\path{doi:10.1145/3280826}}.

\bibitem{agrawal2021tight}
Kunal Agrawal, Michael~A. Bender, Rathish Das, William Kuszmaul, Enoch
  Peserico, and Michele Scquizzato.
\newblock {Tight Bounds for Parallel Paging and Green Paging}.
\newblock In {\em Proceedings of the 32nd Symposium on Discrete Algorithms},
  pages 3022--3041, 2021.
\newblock \href {http://dx.doi.org/10.1137/1.9781611976465.180}
  {\path{doi:10.1137/1.9781611976465.180}}.

\bibitem{bansal2010simple}
Nikhil Bansal, Niv Buchbinder, and Joseph~Seffi Naor.
\newblock {A Simple Analysis for Randomized Online Weighted Paging}.
\newblock {\em Unpublished Manuscript}, 2010.

\bibitem{bansal2010towards}
Nikhil Bansal, Niv Buchbinder, and Joseph~(Seffi) Naor.
\newblock {Towards the Randomized $k$-Server Conjecture: A Primal-Dual
  Approach: (Extended Abstract)}.
\newblock In {\em Proceedings of the 21st Symposium on Discrete Algorithms},
  pages 40--55, 2010.
\newblock \href {http://dx.doi.org/10.1137/1.9781611973075.5}
  {\path{doi:10.1137/1.9781611973075.5}}.

\bibitem{bansal2022learning}
Nikhil Bansal, Christian Coester, Ravi Kumar, Manish Purohit, and Erik Vee.
\newblock {Learning-Augmented Weighted Paging}.
\newblock In {\em Proceedings of the 33rd Symposium on Discrete Algorithms},
  pages 67--89, 2022.
\newblock \href {http://dx.doi.org/10.1137/1.9781611977073.4}
  {\path{doi:10.1137/1.9781611977073.4}}.

\bibitem{chang2007cooperative}
Jichuan Chang and Gurindar~S Sohi.
\newblock {Cooperative Cache Partitioning for Chip Multiprocessors}.
\newblock In {\em Proceedings of the 21st ACM International Conference on
  Supercomputing}, pages 242--252, 2007.
\newblock \href {http://dx.doi.org/10.1145/1274971.1275005}
  {\path{doi:10.1145/1274971.1275005}}.

\bibitem{chiplunkar2023online}
Ashish Chiplunkar, Monika Henzinger, Sagar Sudhir~Kale, and Maximilian
  V{\"o}tsch.
\newblock {Online Min-Max Paging}.
\newblock In {\em Proceedings of the 34th Symposium on Discrete Algorithms},
  pages 1545--1565, 2023.
\newblock \href {http://dx.doi.org/10.1137/1.9781611977554.ch57}
  {\path{doi:10.1137/1.9781611977554.ch57}}.

\bibitem{fiat1991competitive}
Amos Fiat, Richard~M. Karp, Michael Luby, Lyle~A. McGeoch, Daniel~D. Sleator,
  and Neal~E. Young.
\newblock {Competitive Paging Algorithms}.
\newblock {\em Journal of Algorithms}, 12(4):685--699, 1991.
\newblock \href {http://dx.doi.org/10.1016/0196-6774(91)90041-V}
  {\path{doi:10.1016/0196-6774(91)90041-V}}.

\bibitem{ibrahimpur2022caching}
Sharat Ibrahimpur, Manish Purohit, Zoya Svitkina, Erik Vee, and Joshua~R. Wang.
\newblock {Caching with Reserves}.
\newblock In {\em Proceedings of the 25th International Conference on
  Approximation Algorithms for Combinatorial Optimization Problems (APPROX)},
  volume 245, pages 52:1--52:16, 2022.
\newblock \href {http://dx.doi.org/10.4230/LIPIcs.APPROX/RANDOM.2022.52}
  {\path{doi:10.4230/LIPIcs.APPROX/RANDOM.2022.52}}.

\bibitem{wu2022NyxCache}
Wu~Kan, Tu~Kaiwei, Patel Yuvraj, Sen Rathijit, Park Kwanghyun, Arpaci-Dusseau
  Andrea, and Remzi Arpaci-Dusseau.
\newblock {{NyxCache}: Flexible and Efficient Multi-tenant Persistent Memory
  Caching}.
\newblock In {\em Proceedings of the 20th USENIX Conference on File and Storage
  Technologies (FAST)}, pages 1--16, 2022.
\newblock URL: \url{https://www.usenix.org/conference/fast22/presentation/wu}.

\bibitem{kunjir2017robus}
Mayuresh Kunjir, Brandon Fain, Kamesh Munagala, and Shivnath Babu.
\newblock {ROBUS: Fair Cache Allocation for Data-parallel Workloads}.
\newblock In {\em Proceedings of the 2017 ACM International Conference on
  Management of Data (SIGMOD)}, pages 219--234, 2017.
\newblock \href {http://dx.doi.org/10.1145/3035918.3064018}
  {\path{doi:10.1145/3035918.3064018}}.

\bibitem{mcgeoch1991strongly}
Lyle~A. McGeoch and Daniel~D. Sleator.
\newblock {A Strongly Competitive Randomized Paging Algorithm}.
\newblock {\em Algorithmica}, 6:816--825, 1991.
\newblock \href {http://dx.doi.org/10.1007/BF01759073}
  {\path{doi:10.1007/BF01759073}}.

\bibitem{pu2016fairride}
Qifan Pu, Haoyuan Li, Matei Zaharia, Ali Ghodsi, and Ion Stoica.
\newblock {FairRide: Near-Optimal, Fair Cache Sharing}.
\newblock In {\em Proceedings of the 13th USENIX Symposium on Networked Systems
  Design and Implementation (NSDI)}, pages 393--406, 2016.
\newblock URL:
  \url{https://www.usenix.org/conference/nsdi16/technical-sessions/presentation/pu}.

\bibitem{stone1989optimal}
Harold~S. Stone, John Turek, and Joel~L. Wolf.
\newblock {Optimal Partitioning of Cache Memory}.
\newblock {\em IEEE Transactions on Computers}, 41(9):1054--1068, 1992.
\newblock \href {http://dx.doi.org/10.1109/12.165388}
  {\path{doi:10.1109/12.165388}}.

\bibitem{suh2004dynamic}
G~Edward Suh, Larry Rudolph, and Srinivas Devadas.
\newblock {Dynamic Partitioning of Shared Cache Memory}.
\newblock {\em The Journal of Supercomputing}, 28(1):7--26, 2004.
\newblock \href {http://dx.doi.org/10.1023/B:SUPE.0000014800.27383.8f}
  {\path{doi:10.1023/B:SUPE.0000014800.27383.8f}}.

\bibitem{yu2019lacs}
Yinghao Yu, Wei Wang, Jun Zhang, and Khaled~Ben Letaief.
\newblock {LACS: Load-Aware Cache Sharing with Isolation Guarantee}.
\newblock In {\em Proceedings of the 39th IEEE International Conference on
  Distributed Computing Systems (ICDCS)}, pages 207--217. IEEE, 2019.
\newblock \href {http://dx.doi.org/10.1109/ICDCS.2019.00029}
  {\path{doi:10.1109/ICDCS.2019.00029}}.

\end{thebibliography}

\appendix
\section{Analysis of Randomized Marking for Unweighted Paging}
\label{app:regular-paging}

In this section, we consider the classical unweighted paging problem and provide a new proof that the randomized marking~\cite{fiat1991competitive} algorithm is $O(\log k)$-competitive.

Recall that the randomized marking algorithm works in phases as follows. Initially all pages in the cache are \emph{unmarked}. When a page $p_t$ is requested at time $t$, if $p_t$ is already in the cache then it is \emph{marked} and no other change occurs. If $p_t$ is not in the cache, then a page $q$ is chosen uniformly at random from the set of \emph{unmarked} pages currently in the cache and evicted. Page $p_t$ is then brought into the cache and \emph{marked}. Once there are no more unmarked pages left for eviction, the current \emph{phase} ends --- all pages in the cache are \emph{unmarked} and a new phase begins.

Let $\RM$ denote the above randomized marking algorithm and let $\cost_{\RM}(\sigma)$ be the total cost incurred on an instance $\sigma$. For analysis, we consider the following fractional version of the marking algorithm:  The algorithm maintains a set of \emph{marked} and \emph{unmarked} pages as before. At any time $t$, let $x_p^t$ denote the fraction of page $p$ in the cache and $y_p^t \triangleq 1 - x_p^t$ denote the fraction of page $p$ outside the cache. When a page $p_t$ is requested, the fractional marking algorithm sets $x_{p_t} = 1$ and uniformly increases $y_q$ for all unmarked pages $q$ in the cache. As in the randomized marking algorithm, a phase ends when all unmarked pages have been fully evicted from the cache. Observe that for any page $p$ and time $t$, the value $y_p^t$ in the fractional marking algorithm exactly equals the marginal probability that page $p$ has been evicted from the cache by the randomized marking algorithm. Let \FM\ denote the fractional marking algorithm, then we have:
$\E[\cost_{\RM}(\sigma)] = \cost_{\FM}(\sigma)$. In the rest of this section, we focus on bounding the total cost incurred by the fractional marking algorithm.

\paragraph*{Potential Function}
Consider any phase $r$ of the fractional marking algorithm and let $P^{r-1}$ denote the set of exactly $k$ pages in the cache at the beginning of the phase and $P^r$ denote the set of $k$ pages in cache at the end of the phase. By definition, $P^r$ is exactly the set of $k$ distinct pages that are requested during phase $r$. Following standard terminology, we say a page $p \in P^r$ is stale if $p \in P^{r-1}$ and call $p$ as clean otherwise. Let $\ell^{(r)} = |P^r \setminus P^{r-1}|$ denote the total number of clean pages in phase $r$. At any time $t$, let $U(t)$ be the set of unmarked pages at time $t$. Consider the following potential function: 

\[\Psi(t) = \sum_{p \in U(t)} \phi(y_p^t)\]
  where  $\phi(h) = 2h\ln(1+kh)$ as earlier. At any time step $t$, let $\Delta \Psi(t)$ be the total change in potential while processing the requested page $p_t$\footnote{In particular, $\Delta \Psi(t)$ does \emph{not} include any change in potential due to a phase change.}. Let $\Delta \Psi(r)$ denote the change in potential due to end of a phase $r$.

\begin{lemma}
At any time $t$, if $y_{p_t}^t < 1$, then $\Delta \Psi(t) \geq y_{p_t}^t$.
\end{lemma}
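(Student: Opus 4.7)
The plan is to adapt the proof of Lemma~\ref{lem:paypartialfetch} to this simpler classical-paging setting. If $y_{p_t}^t = 0$ then $\Delta\Psi(t) \geq 0 = y_{p_t}^t$ trivially, so set $h \triangleq y_{p_t}^t$ and assume $h \in (0,1)$ throughout.

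The first step is to establish two structural facts. A routine induction on the request index shows that all pages in $U(t)$ share a common $y$-value, since the algorithm only raises $y$-values uniformly across $U(t)$. Marked pages always carry $y = 0$ and pages fully outside the cache carry $y = 1$, so the assumption $y_{p_t}^t \in (0,1)$ forces $p_t \in U(t)$, with common value $h$. Second, I need an analogue of Lemma~\ref{lem:ylarge}: the common $y$-value of unmarked pages is either $0$ or at least $1/k$. This follows from the cache-size identity $|M(t)| + |U(t)|(1-h) = k$, which gives $h = (|M(t)| + |U(t)| - k)/|U(t)|$; since numerator and denominator are integers with $|U(t)| \leq k$ (the initial cache has exactly $k$ pages), the quotient is either $0$ or at least $1/k$. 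Under $h > 0$, we therefore have $h \geq 1/k$.

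The core argument is then a continuous-fetching bookkeeping step. Let $u = |U(t)|$ and imagine $p_t$ being fetched at unit rate; over an increment $d\tau$, the $u-1$ other unmarked pages uniformly absorb $d\tau$ units of evicted mass, each raising its $y$-value by $d\tau/(u-1)$ from the current common value $h' \geq h$. Their aggregate contribution to $\Psi$ therefore grows at rate $\phi'(h')$, which by Lemma~\ref{lem:largeslope} and monotonicity of $h' \mapsto 1+2\ln(1+kh')$ is at least $1+2\ln(1+kh)$ throughout the fetch. Integrating over the fetched mass $h$ yields a potential gain of at least $h + 2h\ln(1+kh) = h + \phi(h)$ from the other unmarked pages. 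Subtracting the loss $\phi(h)$ incurred as $p_t$'s own $y$-value descends from $h$ to $0$ (after which $p_t$ is marked and stops contributing to $\Psi$) gives $\Delta \Psi(t) \geq h = y_{p_t}^t$, as desired. The only step requiring care is the lower bound $h \geq 1/k$; the rate-based bookkeeping afterward is essentially a direct specialization of the argument in Lemma~\ref{lem:paypartialfetch}.
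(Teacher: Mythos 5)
Your proof is correct and follows essentially the same route as the paper's: charge the loss $\phi(h)$ from marking $p_t$ against the gain from evicting mass $h$ off the other unmarked pages at rate at least $1+2\ln(1+kh)$ via Lemma~\ref{lem:largeslope}. Your explicit verification that the common unmarked value satisfies $h \geq 1/k$ (the classical-paging analogue of Lemma~\ref{lem:ylarge}) is a welcome addition that the paper's appendix proof uses only implicitly, but it does not change the argument's structure.
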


\begin{proof}
At any time $t$, all unmarked pages in the cache have been evicted the same amount fractionally, i.e. $y_p^t = y_q^t = h^*, \forall p \neq q \in U(t)$. While serving the requested page $p_t$, the potential decreases as page $p_t$ stops contributing to the potential while the potential increases due to the eviction of all other unmarked pages. Page $p_t$ contributed exactly $\phi(h^*) = 2h^*\ln(1+kh^*)$ to the potential at the beginning of this step. On the other hand, to fetch page $p_t$, the other unmarked pages are evicted by a total amount of $h^*$. By Lemma \ref{lem:largeslope}, the potential increases at a rate of at least $(1 + 2\ln(1+kh^*))$ throughout this eviction process. Summing up both the contributions, we get:
\[
    \Delta \Psi(t) \geq -2h^*\log(1+kh^*) + h^*(1 + 2\ln(1+kh^*)) = y_{p_t}^t. \qedhere
\]
\end{proof}

\begin{lemma}
At the end of any phase $r$, the potential drops by $\Delta \Psi(r) = -2\ell^{(r)} \ln(1+k)$.
\end{lemma}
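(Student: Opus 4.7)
The plan is to directly compute the potential just before and just after the phase reset, and show the difference is $-2\ell^{(r)}\ln(1+k)$.

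First, I would identify the set $U$ of unmarked pages at the very end of phase $r$, immediately before the reset. By definition, phase $r$ ends precisely when every unmarked page has been fully evicted, so every $q \in U$ satisfies $y_q = 1$. Moreover, these unmarked pages are exactly the pages that were in the cache at the start of phase $r$ but were not requested during phase $r$: that is, $U = P^{r-1} \setminus P^r$. Indeed, at the start of phase $r$ the algorithm sets $U \leftarrow P^{r-1}$, and a page leaves $U$ precisely when it is requested in phase $r$ and becomes marked. Thus the potential just before the reset is
\[
\Psi(t_r^-) = \sum_{p \in P^{r-1} \setminus P^r} \phi(1) = |P^{r-1}\setminus P^r|\cdot 2\ln(1+k).
\]

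Next, I would analyze the reset step itself. When phase $r$ ends, all marks are cleared, so the new set of unmarked pages consists of exactly those pages currently in the cache, namely $P^r$, each with $y$-value $0$. (Any previously unmarked page outside the cache is no longer considered when the new phase begins, since $U$ is re-initialized to the current cache contents.) Therefore
\[
\Psi(t_r^+) = \sum_{p \in P^r} \phi(0) = 0.
\]

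Finally, since $|P^{r-1}| = |P^r| = k$, a simple counting argument gives $|P^{r-1}\setminus P^r| = |P^r \setminus P^{r-1}| = \ell^{(r)}$. Combining the two potential values,
\[
\Delta\Psi(r) = \Psi(t_r^+) - \Psi(t_r^-) = -2\ell^{(r)}\ln(1+k),
\]
as claimed. There is no real obstacle here; the only subtlety is being careful about the convention of which pages are considered unmarked immediately after the reset (only those currently in the cache, not stale pages with $y = 1$ that are dropped from consideration in the new phase). Everything else is a direct computation using $\phi(1) = 2\ln(1+k)$ and $\phi(0) = 0$.
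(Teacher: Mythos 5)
Your proposal is correct and follows essentially the same argument as the paper: the unmarked pages at the end of phase $r$ are exactly $P^{r-1}\setminus P^r$, each contributing $\phi(1)=2\ln(1+k)$ just before the reset and nothing after, and the count $|P^{r-1}\setminus P^r| = |P^r\setminus P^{r-1}| = \ell^{(r)}$ gives the claimed drop. Your extra care about which pages are considered unmarked after the reset (the cache contents $P^r$, all with $y=0$, contributing $\phi(0)=0$) is consistent with the paper's convention and does not change the argument.
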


\begin{proof}
At the end of phase $r$, all pages in $P^{r-1} \setminus P^{r}$ are unmarked and have been fully evicted from the cache. Each such page contributed exactly $\phi(1) = 2\ln(1+k)$ to the potential just before the phase ended. Once phase $r$ ends (and phase $r+1$ begins), all these pages stop contributing to the potential. The claim follows by noting that $|P^{r-1} \setminus P^{r}| = |P^{r} \setminus P^{r-1}| = \ell^{(r)}$.
\end{proof}

The previous two lemmas allow us to bound the total cost incurred by the fractional marking algorithm purely in terms of the total number of clean pages.
\begin{lemma}
We have $\cost_{\FM}(\sigma) \leq 2k\ln(1+k) + \left(1+2\ln(1+k)\right) \sum_{r=1}^R \ell^{(r)}$.
\end{lemma}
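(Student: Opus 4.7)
The plan is to write $\cost_{\FM}(\sigma) = \sum_t y_{p_t}^t$ and split this sum according to whether $y_{p_t}^t < 1$ (a partial miss) or $y_{p_t}^t = 1$ (a complete miss). The partial misses will be charged against the change in the potential function $\Psi$ via the first of the two preceding lemmas, while the complete misses will be counted separately and bounded by the total number of clean pages $\sum_r \ell^{(r)}$.

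First, I would telescope the potential over the whole request sequence, writing
\[
  \Psi(T) - \Psi(0) \;=\; \sum_{t=1}^{T} \Delta\Psi(t) \;+\; \sum_{r=1}^{R} \Delta\Psi(r).
\]
Since $\Psi(0)=0$ and $\Psi(T) \le k \cdot \phi(1) = 2k\ln(1+k)$ (at most $k$ unmarked pages, each contributing at most $\phi(1)$), and since the phase-end lemma gives $\Delta\Psi(r) = -2\ell^{(r)}\ln(1+k)$, rearranging yields $\sum_t \Delta\Psi(t) \le 2k\ln(1+k) + 2\ln(1+k)\sum_r \ell^{(r)}$. For every time step with $y_{p_t}^t < 1$, the first lemma gives $y_{p_t}^t \le \Delta\Psi(t)$, so the partial-miss contribution to the cost is at most the sum of $\Delta\Psi(t)$ over those steps.

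For the complete misses, I would argue that $y_{p_t}^t = 1$ forces $p_t$ to be a clean page: all unmarked stale pages share the common value $h^*$, and $h^*$ reaching $1$ triggers the end of the phase, so any stale page that is about to be ``fully missed'' will actually be served as a clean page in the next phase. Hence the number of complete-miss steps is at most $\sum_r \ell^{(r)}$, contributing at most $\sum_r \ell^{(r)}$ to the cost. The step I expect to be the main obstacle is justifying that restricting to partial-miss steps does not lose anything in the potential bound; this amounts to showing $\Delta\Psi(t) \ge 0$ whenever $y_{p_t}^t = 1$, which follows because a clean $p_t$ is never in $U$ (so it contributes nothing to $\Psi$ before or after), while the remaining unmarked pages only see their $y$-values increase and $\phi$ is monotone nondecreasing.

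Putting it together, $\cost_{\FM}(\sigma)$ is at most
\[
  \sum_{t:\,y_{p_t}^t<1}\Delta\Psi(t) \;+\; \big|\{t:\,y_{p_t}^t=1\}\big|
  \;\le\; \sum_t \Delta\Psi(t) \;+\; \sum_{r=1}^{R}\ell^{(r)}
  \;\le\; 2k\ln(1+k) + \bigl(1+2\ln(1+k)\bigr)\sum_{r=1}^{R}\ell^{(r)},
\]
which matches the claimed bound. The only mildly delicate step is the bookkeeping separating in-phase potential changes $\Delta\Psi(t)$ from the phase-boundary drops $\Delta\Psi(r)$, which is already cleanly handled by the two preceding lemmas.
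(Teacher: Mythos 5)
Your proposal is correct and follows essentially the same route as the paper: split the cost into partial misses (charged to $\Delta\Psi(t)$ via the first lemma) and full misses (identified with clean pages), then telescope $\Psi$ using $\Psi(0)=0$, $\Psi(T)\le 2k\ln(1+k)$, and the phase-end drop $\Delta\Psi(r)=-2\ell^{(r)}\ln(1+k)$. Your explicit remark that $\Delta\Psi(t)\ge 0$ at clean-page requests (so dropping those terms only helps) is a small point the paper's chain of (in)equalities glosses over, but otherwise the arguments coincide.
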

\begin{proof}
We have the following chain of equations and inequalities:
\begin{align*}
    \cost_{\FM}(\sigma) &= \sum_{t \in [T]} y_{p_t}^t = \sum_{t \in [T] : y_{p_t}^t < 1} y_{p_t}^t + |\{t \in [T] : y_{p_t}^1 = 1\}|\\
    &= \sum_{t \in [T] : y_{p_t}^t < 1} y_{p_t}^t + \sum_{r=1}^R \ell^{(r)}\\
    &\leq \sum_{t \in [T] : y_{p_t}^t < 1} \Delta \Psi(t) + \sum_{r=1}^R \ell^{(r)} \\
    &= \Psi(T) - \Psi(0) - \sum_{r=1}^R \Delta \Psi(r)+ \sum_{r=1}^R \ell^{(r)} \\
    &\leq 2k\ln(1+k) + \left(1+2\ln(1+k)\right) \sum_{r=1}^R \ell^{(r)}.
\end{align*}
\end{proof}

\begin{theorem}
The randomized marking algorithm is $O(\log k)$-competitive.
\end{theorem}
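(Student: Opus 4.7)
The plan is to combine the upper bound on $\cost_{\FM}(\sigma)$ proved in the preceding lemma (which already gives $\cost_{\FM}(\sigma) \leq 2k\ln(1+k) + (1+2\ln(1+k))\sum_{r=1}^R \ell^{(r)}$) with a matching lower bound $\OPT(\sigma) \geq \tfrac{1}{2}\sum_{r=1}^R \ell^{(r)} - O(k)$. Together with the identity $\E[\cost_{\RM}(\sigma)] = \cost_{\FM}(\sigma)$, these immediately yield $\E[\cost_{\RM}(\sigma)] \leq O(\log k)\cdot \OPT(\sigma) + O(k\log k)$, establishing $O(\log k)$-competitiveness.

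So the only remaining content is the lower bound on $\OPT$. I would argue this by a direct counting argument using the structure of phases. Fix any offline algorithm $B$ and let $F_r$ denote the number of page faults $B$ incurs during phase $r$. The key observation is that the set of \emph{distinct} pages requested across two consecutive phases $r-1$ and $r$ is exactly $|P^{r-1} \cup P^r| = k + \ell^{(r)}$, since $|P^{r-1}| = |P^r| = k$ and $|P^r \setminus P^{r-1}| = \ell^{(r)}$. Because $B$'s cache has capacity $k$, at most $k$ of these $k + \ell^{(r)}$ distinct pages can reside in $B$'s cache at the start of phase $r-1$, so the remaining (at least) $\ell^{(r)}$ distinct pages must be fetched at some point during phases $r-1$ and $r$, each such fetch costing a fault. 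Hence $F_{r-1} + F_r \geq \ell^{(r)}$ for every $r \geq 2$.

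Summing this inequality over $r = 2, \ldots, R$, each $F_r$ appears at most twice on the left-hand side, so $2 \cdot \OPT(\sigma) \geq 2 \sum_{r=1}^R F_r \geq \sum_{r=2}^R \ell^{(r)}$, which gives $\OPT(\sigma) \geq \tfrac{1}{2}\sum_{r=1}^R \ell^{(r)} - \tfrac{1}{2}\ell^{(1)} \geq \tfrac{1}{2}\sum_{r=1}^R \ell^{(r)} - k/2$ (using $\ell^{(1)} \leq k$). Plugging this into the upper bound on $\cost_{\FM}(\sigma)$ and using $\E[\cost_{\RM}(\sigma)] = \cost_{\FM}(\sigma)$ gives the claimed competitive ratio, with the additive $O(k\log k)$ slack absorbed into the constant $b$ in the definition of competitiveness.

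There is no real obstacle here; both ingredients are standard once the fractional marking analysis is in hand. The only subtlety worth calling out is the off-by-one in the lower bound (we lose $\ell^{(1)}$), which is benign because it is dominated by the initial $2k\ln(1+k)$ additive term already present in the upper bound.
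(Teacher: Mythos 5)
Your proposal is correct and follows essentially the same route as the paper: it combines $\E[\cost_{\RM}(\sigma)] = \cost_{\FM}(\sigma)$ and the fractional upper bound with the lower bound $\cost_{OPT}(\sigma) \geq \tfrac{1}{2}\sum_{r} \ell^{(r)}$ (up to an additive $O(k)$ term). The only difference is that the paper simply cites Fiat et al.\ (or the dual-fitting argument of Section~\ref{sec:dual-fitting}) for that lower bound, whereas you re-derive it via the standard two-consecutive-phases counting argument $F_{r-1}+F_r \geq \ell^{(r)}$; this is valid, and the $\ell^{(1)}/2 \leq k/2$ additive slack you incur is harmless under the paper's definition of competitiveness.
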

\begin{proof}
On any instance $\sigma$, we have $\E[\cost_{\RM}(\sigma)] = \cost_{\FM}(\sigma)$. The desired bound on the competitive ratio follows by noting that $\cost_{OPT}(\sigma) \geq (1/2) \cdot \sum_{r=1}^R \ell^{(r)}$ as shown by \cite{fiat1991competitive}. We can also use the dual-fitting argument presented in Section \ref{sec:dual-fitting} to show that even the fractional optimal solution is at least $(1/2) \cdot \sum_{r=1}^R \ell^{(r)}$ but we skip the details for brevity.
\end{proof}
 
\end{document}